\colorlet{shadecolor}{gray!20}
\newcommand{\todo}[1]{{\footnotesize \textcolor{blue}{$\ll$\textsf{TODO #1}$\gg$}}}
\newcommand{\lgl}[1]{{\textcolor{red}{?????LGL:#1}?????}}
\newtheorem{defn}{Definition}[section]
\newtheorem{thm}{Theorem}[section]
\newtheorem{lem}[thm]{Lemma}
\newcommand{\var}[1]{\mbox{\emph{#1}}}
\newcommand{\svar}[1]{\mbox{\scriptsize\emph{#1}}}
\newcommand{\bao}[1]{\textrm{\textcolor{orange}{!!!Bao says: #1!!!}}}
\newcommand{\ping}[1]{\textrm{\textcolor{black}{#1}}}
\newcommand{\yuchen}[1]{\textrm{\textcolor{red}{Yuchen says: #1}}}
\newcommand{\budget}{\ensuremath{\var{L}}}                   
\newcommand{\td}{\ensuremath{\mathcal{T}}}   		 	    	
\newcommand{\std}{\ensuremath{\svar{T}}}	                    
\newcommand{\tr}{\ensuremath{t}}   		 	     		
\newcommand{\str}{\svar{t}}   		 	     					
\newcommand{\p}{\ensuremath{\var{p}}}           		 		
\newcommand{\bb}{\ensuremath{\var{b}}}              		
\newcommand{\sbb}{\svar{b}}               				
\newcommand{\bbl}{\ensuremath{\var{loc}}}             		
\newcommand{\cost}{\ensuremath{\var{w}}}                  
\newcommand{\size}{\ensuremath{\var{size}}}                  
\newcommand{\pr}{\ensuremath{\var{pr}}}   					
\newcommand{\ur}{\ensuremath{U}}  
\newcommand{\opt}{\ensuremath{OPT}}                   			
\newcommand{\ifl}{I}                                       
\newcommand{\region}{\ensuremath{\var{R}}}                   
\newcommand{\ratio}{\vartheta}                   
\newcommand{\overlap}{\varPsi}                 
\newcommand{\marginifl}{\Delta}     
\newcommand{\gimatrix}{\mathbb{I}}   
\newcommand{\gsmatrix}{\gimatrix_s}   
\newcommand{\limatrix}{\xi}   
\newcommand{\lsmatrix}{\xi_s}   
\newcommand{\upperlimatrix}{\xi^\uparrow} 
\newcommand{\indx}{\mathcal{H}}
\newcommand{\problem}{TIP\xspace}
\newcommand{\ngre}{NaiveGreedy\xspace}
\newcommand{\gre}{GreedySel\xspace}
\newcommand{\enumgreedy}{EnumSel\xspace}
\newcommand{\psel}{PartSel\xspace}
\newcommand{\bbsel}{LazyProbe\xspace}
\newcommand{\topk}{TrafficVol\xspace}
\newcommand{\ann}{Annealing\xspace}
\newcommand{\estbod}{EstimateBound\xspace}
\begin{document}

\title{Trajectory-driven Influential Billboard Placement}
\sloppy

\author{Ping Zhang}\affiliation{\institution{Wuhan Univeristy}}
\email{pingzhang@whu.edu.cn}

\author{Zhifeng Bao}\affiliation{\institution{RMIT University}}
\email{zhifeng.bao@rmit.edu.au}

\author{Yuchen Li}\affiliation{\institution{Singapore Management University}}
\email{yuchenli@smu.edu.sg}

\author{Guoliang Li}\affiliation{\institution{Tsinghua University}}
\email{liguoliang@tsinghua.edu.cn}

\author{Yipeng Zhang}\affiliation{\institution{RMIT University}}
\email{s3582779@student.rmit.edu.au}

\author{Zhiyong Peng}\affiliation{\institution{Wuhan Univeristy}}
\email{peng@whu.edu.cn}

\setcopyright{rightsretained}
\acmDOI{10.1145/3219819.3219946}
\acmISBN{978-1-4503-5552-0/18/08}
\acmConference[KDD '18]{The 24th ACM SIGKDD International Conference on Knowledge Discovery \& Data Mining}{August 19--23, 2018}{London, United Kingdom}
\acmYear{2018}
\copyrightyear{2018}
\acmPrice{15.00}
\settopmatter{printacmref=false, printccs=true, printfolios=true}
\pagestyle{plain}



\begin{abstract}
In this paper we propose and study the problem of 
trajectory-driven influential billboard placement: given a set of billboards $\ur$ (each with a location and a cost), a database of trajectories $\td$ and a budget $\budget$, find a set of billboards within the budget to influence the largest number of trajectories.  One core challenge is to identify and reduce the overlap of the influence from different billboards to the same trajectories,  while keeping the budget constraint into consideration. We show that this problem is NP-hard and present an enumeration based algorithm with $(1-1/e)$ approximation ratio. \ping{However, the enumeration should be very costly when $|\ur|$ is large.} 
By exploiting the locality property of billboards' influence, we propose a partition-based framework \psel. \psel partitions $\ur$ into a set of small clusters, computes the locally influential billboards for each cluster, and merges them to generate the global solution. \ping{Since the local solutions can be obtained much more efficient than the global one, \psel should reduce the computation cost greatly; meanwhile it achieves a non-trivial approximation ratio guarantee. }
Then we propose a \bbsel method to further prune billboards with low marginal influence, while achieving the same approximation ratio as \psel. Experiments on real datasets verify the efficiency and effectiveness of our methods.
\end{abstract}

\keywords{Outdoor Advertising, Influence Maximization, Trajectory}
\maketitle

\section{Introduction}\label{sec:intro}
Outdoor advertising (ad) has a \$500 billion global market;
its revenue has grown by over 23\% in the past decade to over \$6.4 billion in the US alone~\cite{admarketsize2}.
As compared to social, TV, and mobile advertising, outdoor advertising delivers a high return on investment, and according to ~\cite{adbuyerincome} an average of \$5.97 is generated in product sales for each dollar spent. Moreover, it literally drives consumers `from the big screen to the small screen' to search, interact, and transact~\cite{advantagesofdigitalbillboard}. 
Billboards are the highest used medium for outdoor advertising (about 65\%), and 80\% people notice them when driving~\cite{relocatablead}. 

Nevertheless, existing market research only leverages traffic volume to assess the performance of billboards \cite{liu2017smartadp}. Such a straight-forward approach often leads to coarse-grained performance estimations and undesirable ad placement plans. To enable more effective placement strategies, we propose a fine-grained approach by leveraging the user/vehicle trajectory data.
Enabled by the prevalence of positioning devices, tremendous amounts of trajectories are being generated from vehicle GPS devices, smart phones and wearable devices.
The massive trajectory data provides new perspective to assess the performance of ad placement strategies.

In this paper, we propose a quantitative model to capture the billboard influence over a database of trajectories. Intuitively, if a billboard is close to a trajectory along which a user or vehicle travels, the billboard influences the user to a certain degree. When multiple billboards are close to a trajectory, the marginal influence is reduced to capture the property of diminishing returns.
Based on this influence model, we propose and study the the \underline{T}rajectory-driven \underline{I}nfluential Billboard \underline{P}lacement (\problem) problem:
given a set of billboards, a database of trajectories and a budget constraint $\budget$, it finds a set of billboards within budget $\budget$ such that the placed ads on the selected billboards influence the largest number of trajectories. To the best of our knowledge, this is the first work to address the \problem problem. The primary goal of this paper is to maximize the influence within a budget, which is critical to advertisers because the average unit cost per billboard is not cheap. For example, the average cost of a unit is \$14000 for four weeks in New York~\cite{adunitcost}; the total cost of renting 500 billboards is \$7,000,000 per month. Since the cost of a billboard is usually proportional to its influence, if we can improve the influence by 5\%, we can save about \$10,000 per week for one advertiser. The secondary goal is how to avoid expensive computation while achieving the same competitive influence value, 
so that prompt analytic on deployment plans can be conducted with different budget allocations.

In particular, there are two fundamental challenges to achieve the above goals. {First}, a user's trajectory can be influenced by multiple billboards,  which incurs the influence overlap among billboards. 
Figure~\ref{fig:example} shows an example for 6 billboards ($\bb_1, \dots, \bb_6$) and 6 trajectories ($\tr_1, \dots, \tr_6$). Each billboard is associated with a $\lambda$-radius circle, which represents its influence range. If any point $\p$ in a trajectory $\tr$ lays in the circle of $\bb$, $\tr$ is influenced by $\bb$ with a certain probability. Thereby, trajectory $\tr_1$ is first influenced by billboard $\bb_1$ and then influenced by $\bb_3$.  If the selected billboards have a large overlap in their influenced trajectories, advertisers may waste the money for repeatedly influencing the audiences who have already seen their ads.  {Second}, the budget constraint $\budget$ and various costs of different billboards make the optimization problem intricate. To our best knowledge, this is the first work that simultaneously takes three critical real-world features into consideration, i.e., budget constraint, non-uniform costs of billboards, and influence overlap of the selected billboards to a certain trajectory (Section~\ref{sec:problem}).

To address these challenges, we first propose a  greedy framework \enumgreedy by employing the enumeration technique~\cite{khuller1999budgeted}, 
which can provide an $(1 - 1/e)$-approximation for \problem. 
%
%
However the algorithm runs in a prohibitively large complexity of $O(|\mathcal{\td}|\cdot|\ur|^5)$, where $|\td|$ and $|\ur|$ are the number of trajectories and billboards respectively.
To avoid such high computational cost, we exploit the locality feature of the billboard influence and propose a partition-based framework. The core idea works as follows: first, it partitions the billboards into a set of clusters with low influence overlap; second, it executes the enumeration algorithm to find local solutions; third, it uses the dynamic programming approach to construct the global solution based on the location solutions maintained by different clusters.
We prove that the partition based method provides a theoretical approximation ratio.
To further improve the efficiency, we devise a lazy probe approach by pro-actively estimating the upper bound of each cluster and 
combining the results from a cluster only when its upper bound is significant enough to contribute to the global solution. 

\begin{figure}[!t]
\vspace{-.5em}
	\centering
	\includegraphics[width=0.75\linewidth]{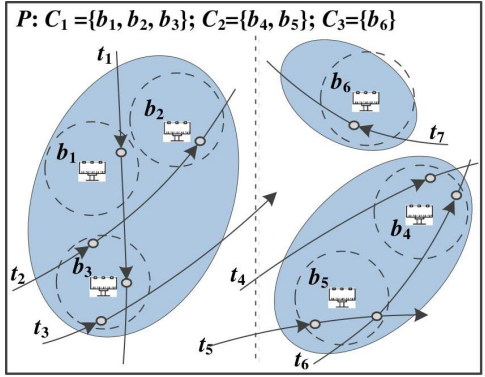}
\vspace{-1.25em}
	\caption{A Motivating Example ($\cost(b_i)=i$)
		}
	\label{fig:example}
\vspace{-.25em}
\end{figure}

Beyond billboard selection, our solution is useful in any store site selection problem that needs to consider the influence gain w.r.t. the cost of the store under a budget constraint. 
The only change is a customization of the influence model catered for specific scenarios, while the influence overlap is always incurred whenever the audiences are moving. 
For example in the electric vehicle charging station deployment, each station has an installment fee and a service range, which is similar to the billboard in \problem. Given a budget limit, its goal is to maximize the deployment benefit, which can be measured by the trajectories that can be serviced by the stations deployed.
In summary, we make the following contributions.

\begin{itemize}\vspace{-.25em}
\item 
We formulate the problem of trajectory-driven influential billboard placement (\problem). To our best knowledge, this is the first work that simultaneously takes three critical real-world features into consideration, i.e., budget constraint, unequal costs of billboards, and influence overlap of the selected billboards to a certain trajectory  (see Section~\ref{sec:problem}).
\item We present a greedy algorithm with the enumeration technique (\enumgreedy) as the baseline solution, 
which provides an approximation ratio of $(1-1/e)$ (see Section~\ref{sec:pre}). 
\item We propose a partition-based framework (\psel) by exploiting the locality property of the influence of billboards. \psel significantly reduces the computation cost while achieving a theoretical approximation ratio (see Section~\ref{sec:partition_method}).
\item We propose a \bbsel method to further prune billboards with low benefit/cost ratio, which significantly reduces the practical cost of \psel while achieving the same approximation ratio (see Section~\ref{sec:bound-selection}). 
\item We conduct extensive experiments on real-world trajectory and billboard datasets. 
Our best method \bbsel significantly outperforms the traditional greedy approach in terms of quality improvement over the naive traffic volume approach by about 99\%, and provide competitive quality against the \enumgreedy baseline while achieving 30$\times$-90$\times$ speedup in efficiency (see Section~\ref{sec:exp}).
\end{itemize}
\vspace{-.5em}

\section{Preliminary}\label{sec:problem}

In this section, we first formulate our problem,  and then review the relevant studies and justify their differences to our work. 

\subsection{Problem Formulation}\label{sec:pf}

In a trajectory database $\td$, each (human or vehicle) trajectory $t$ is in the form of a sequence of locations $\tr=\{\p_1,\p_2,...,\p_{|t|}\}$; a trajectory location $\p_i$ is represented by $\{\var{lat}, \var{lng}\}$, where $\var{lat}$ and $\var{lng}$ represent the latitude and longitude respectively.
A billboard $\bb$ is in the form of a tuple $\{\bbl, \cost \}$, where $\bbl$ and $\cost$ denote $\bb$'s location and leasing cost respectively. 
Without loss of generality, we assume that a billboard carries either zero or one advertisement at any time.

\begin{defn}\label{def:meet}
\label{meet}
 We define that $\emph\bb$ can influence $\emph\tr$, if $\exists \emph\p_i \in \emph\tr$, such that $Distance(\emph\p_i,\emph\bb.\emph\bbl) \leq \lambda$, where $Distance(\emph\p_i,\emph\bb.\emph\bbl)$ computes a certain distance between $\emph\p_i$ and $\emph\bb.\emph\bbl$, and $\lambda$ is a given threshold. 
\end{defn}

The choice of distance functions is orthogonal to our solution, and we choose Euclidean distance for illustration purpose. 

\noindent {\bf Influence of a billboard $\bb_i$ to a trajectory} $\tr_j$, $\pr(\bb_i, \tr_j)$. Given a trajectory $\tr_j$ and a billboard $\bb_i$ that can influence $\tr_j$, $\pr(\bb_i, \tr_j)$ denotes the influence of $\bb_i$ to $\tr_j$. The influence can be measured in various ways depending on application needs, such as the panel size, the exposure frequency, the travel speed and the travel direction. Note that our solutions of finding the optimal placement is orthogonal to the choice of influence measurements, so long as it can be computed deterministically given a $\bb_i$ and $\tr_j$. By looking into the influence measurement of one of the largest outdoor advertising companies LAMAR \cite{adunitcost}, we observe that panel size and exposure frequency are used. Moreover, these two can be obtained from the real data, hence we adopt them in our influence model and experiment. (1) For all $\bb_i \in \ur$ and $\tr_j \in \td$, we set $\pr(\bb_i, \tr_j)$ as a uniform value (between 0 and 1) if $\bb_i$ can influence $\tr_j$.  (2) Let $\size(\bb_i)$ be the panel size of $\bb_i$. We set $\pr(\bb_i, \tr_j)= \size(\bb_i)/A$ for $\tr_j$ influenced by $b_i$, where $A$ is a given value that is larger than ${\max }_{\bb_i \in \ur} \size(\bb_i)$.

\noindent {\bf Influence of a billboard set $S$ to a trajectory $\tr_j$, $\pr(S, \tr_j)$.}
It is worth noting that $\pr(S, \tr_j)$ cannot be simply computed as $\sum\nolimits_{\sbb_i \in S} {\pr(\bb_i,\tr_j)}$, because different billboards in $S$ may have overlaps when they influence $\tr_j$. Obviously $\pr(S,\tr_j) $ should be the probability that at least one billboard in $S$ can influence $\tr_j$. Thus, we use the following equation to compute the influence of $S$ to $\tr_j$.
	\begin{equation}
	\label{inftoset}
		\pr(S,\tr_j) = 1 - \prod\nolimits_{\sbb_i \in S} {({1 - \pr(\bb_i, \tr_j)})} 	     
	\end{equation}
where $(1 - \pr(\bb_i, \tr_j))$ is the probability that $\bb_i$ cannot influence $\tr_j$.

\noindent {\bf Influence of a billboard set $S$ to a trajectory set \td, $\ifl(S)$.} Let  $\td_S$ denote the set of trajectories in $\td$ that are influenced by at least one billboard in $S$.  The influence of a billboard set $S$ to a trajectory set $\td$ is computed by summing up $\pr(S, \tr_j)$ for $\tr_j\in\td_S$:
\begin{equation}
\label{influenceofset}
\ifl(S) = \sum\nolimits_{\str_j \in \td_S} {\pr(S, \tr_j)} 
\end{equation}

\begin{example} 
	Let $S=\{\bb_1, \bb_2, \bb_3\}$ be a set of billboards chosen from all billboards in Figure~\ref{fig:example}, and trajectories $\tr_1$, $\tr_2$ and $\tr_3$ that are influenced by at least one billboard in $S$. Let $\pr(\bb_1,\tr_1)$ = 0.1,  $\pr(\bb_3,\tr_1)$ = 0.3 and $\pr(\bb_2,\tr_1)$ = 0 ($\bb_2$ does not influence $\tr_1$). By Equation~\ref{inftoset}, we have $\pr(S,\tr_1)=1-(1-\pr(\bb_1, \tr_1))\times(1-\pr(\bb_3, \tr_1))= 1-(1-0.1)\times(1-0.3)=0.37$. Similarly, we have $\pr(S,\tr_2)=0.44$ and $\pr(S, \tr_3)=0.3$. Finally, the total influence of $S$ is equal to $\pr( S,\tr_1)+\pr(S,\tr_2)+\pr(S,\tr_3)=1.11$. 
\end{example}

\vspace{-.5em}

\begin{defn} \rm{\textbf{(\underline{T}rajectory-driven \underline{I}nfluential Billboard \underline{P}lacement ~(\problem))}} Given a trajectory database $\td$, a set of billboards $U$ to place ads and a cost budget $\budget$ from a client, our goal is to select a subset of billboards $S\subset\ur$, which maximizes the expected number of influenced trajectories such that the total cost of  billboards in $S$ does not exceed budget $\budget$. 
\end{defn}

\vspace{-.75em}
\begin{thm}
The \problem problem is NP-hard.
\vspace{-.25em}
  \begin{proof}
\vspace{-.5em}
   	We prove it by reducing the Set Cover problem to the \problem problem. In the Set Cover problem, given a collection of subsets $S_1,\ldots,S_m$ of a universe of elements $U'$, we wish to know whether there exist $k$ of the subsets whose union is equal to $U'$. We map each element in $U'$ in the Set Cover problem to each trajectory in $\td$. We also map each subset $S_i$ to the set of trajectories influenced by a billboard $\bb_i$. Consequently, if all the trajectories in $U'$ are influenced by $S$, the influence of $S$ is $|U'|$. 
Subsequently, the cost of each billboard is set to $1$ and budget $\budget$ in \problem is set to $k$ (selecting only $k$ billboards). The Set Cover problem is equivalent to deciding if there is a $k$-billboard set with the maximum influence $U'$ in the \problem problem. As the  set cover problem is NP-complete, the decision problem of \problem is NP-complete, and the optimization problem is NP-hard.
   \end{proof}    
\end{thm}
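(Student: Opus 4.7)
The plan is to establish NP-hardness by a polynomial-time reduction from a well-known NP-hard problem to the decision version of \problem. The most natural candidate is Set Cover (equivalently, unweighted Budgeted Maximum Coverage): given a universe $U'$, a family of subsets $S_1,\ldots,S_m\subseteq U'$, and an integer $k$, decide whether $k$ subsets suffice to cover all of $U'$. Since \problem already asks for a subset of billboards of bounded cost that maximizes a coverage-like objective over trajectories, Set Cover appears to match the structure directly.

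The construction I would use: create one trajectory $\tr_j\in\td$ for each element of $U'$, and one billboard $\bb_i\in\ur$ for each subset $S_i$, each with cost $\cost(\bb_i)=1$. The geometry is chosen so that $\bb_i$ can influence $\tr_j$ (in the sense of Definition~\ref{def:meet}) if and only if the element corresponding to $\tr_j$ lies in $S_i$; this is always achievable by placing each trajectory point within the $\lambda$-disk of exactly the billboards whose sets contain the corresponding element. I would then set the per-pair influence $\pr(\bb_i,\tr_j)=1$ whenever $\bb_i$ can influence $\tr_j$. By Equation~\ref{inftoset}, the set-to-trajectory probability collapses to $\pr(S,\tr_j)=1$ whenever at least one chosen billboard covers $\tr_j$ and $0$ otherwise, so by Equation~\ref{influenceofset} the total influence $\ifl(S)$ equals exactly the number of elements of $U'$ covered by the corresponding subsets. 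Finally set budget $\budget=k$. Then the Set Cover instance is a yes-instance iff there is a size-$k$ billboard set $S$ with $\ifl(S)=|U'|$, i.e., iff the \problem decision problem answers yes with threshold $|U'|$.

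The main potential obstacle is the geometric realizability step: the \problem problem imposes that influence is induced by Euclidean proximity within radius $\lambda$, whereas Set Cover gives an abstract incidence matrix. I would resolve this by giving each element of $U'$ its own cluster of candidate locations and, for each element $e$ contained in $S_i$, placing one point of trajectory $\tr_e$ inside the $\lambda$-disk around $\bb_i$; spacing the billboards far enough apart ensures no unintended incidences are created. Once this embedding is in place, the remainder of the argument is a direct equivalence, and since Set Cover is NP-complete the decision version of \problem is NP-complete and the optimization problem is NP-hard.
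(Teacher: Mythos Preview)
Your proposal is correct and follows essentially the same reduction as the paper: map Set Cover elements to trajectories, subsets to unit-cost billboards, set $\budget=k$, and observe that full coverage corresponds to influence $|U'|$. If anything, you are more careful than the paper on two points it leaves implicit---fixing $\pr(\bb_i,\tr_j)=1$ so that Equation~\ref{influenceofset} counts covered trajectories exactly, and sketching the geometric embedding that realizes an arbitrary incidence matrix under Definition~\ref{def:meet}.
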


\vspace{-.5em}


\subsection{Related work}\label{sec:relatedwork} 
\textbf{Maximized Bichromatic Reverse k Nearest Neighbor (MaxBR$k$NN).}
The MaxBR$k$NN queries \cite{wong2009efficient,maxbrknn_1,maxbrknn_2,maxbrknn_3} aim to find the optimal location to establish a new store such that it is a $k$NN of the maximum number of users based on the spatial distance between the store and users' locations. Different spatial properties are exploited to develop efficient algorithms, such as space partitioning~\cite{maxbrknn_2}, intersecting geometric shapes~\cite{wong2009efficient}, and sweep-line techniques~\cite{maxbrknn_1}. Recently, the MaxRKNN query \cite{maxrknnt} is proposed to find the optimal bus route in term of maximum bus capacity by considering the audiences' source-destination trajectory data. Regarding the usage of trajectory data, most recent work only focus on top-k search over trajectory data~\cite{traj_1,traj_2}. 

Our \problem problem is different from MaxBR$k$NN in two aspects. (1) MaxBR$k$NN assumes that each user is associated with a fixed (check-in) location. In reality, the audience can meet more than one billboard while moving along a trajectory, which is captured by the \problem model. Thus it is challenging to identify such influence overlap when those billboards belong to the same placement strategy. (2) Billboards at different locations may have different costs, making this budget-constrained optimization problem more intricate. However, MaxBR$k$NN assumes that the costs of candidate store locations are uniform.

\vspace{1mm}
\textbf{Influence Maximization and its variations.} 
The original Influence Maximization (IM) problem aims to {find a size-$k$ subset of all nodes} in a social network that could maximize the spread of influence~\cite{kempe2003maximizing}. Independent Cascade (IC) model and Linear Threshold (LT) model are two common models to capture the influence spread. Under both models, this problem has been proven to be NP-hard, and a simple greedy algorithm guarantees the best possible approximation ratio of $(1-1/e)$ in polynomial time. Then the key challenge lies in how to calculate the influence of sets efficiently, and a plethora of algorithms \cite{chen2010scalable, leskovec2007cost, borgs2014maximizing, tang2014influence,DBLP:journals/pvldb/ChenFLFTT15} have been proposed to achieve speedups.
Some new models are also introduced to solve IM under complex scenarios. 
IM problems for propagating different viral products are studied in \cite{li2015real,li2017discovering}.
%
Recently, the IM problem is extended to location-aware IM (LIM) problems by considering different spatial contexts~\cite{li2014efficient, guo2017influence, liu2017smartadp}. \citet{li2014efficient} find the seed users in a location-aware social network such that the seeds have the highest influence upon a group of audiences in a specified region.  
\citet{guo2017influence} select top-k influential trajectories based on users' checkin locations.
See a recent survey \cite{li2018influence} for more details.

Our \problem differs from the IM problems as follows. (1) The cardinality of the optimal set in IM problems is often pre-determined because the cost of each candidate is equal to each other (when the cost is 1, the cardinality is $k$), thus a theoretically guaranteed solution can be directly obtained by a naive greedy algorithm. However, in our problem, the costs of billboards at different locations differ from one to another, so the theoretical guarantee of the naive greedy algorithm is poor~\cite{khuller1999budgeted}. 
(2) Since IM problems adopt a different influence model to ours, they mainly focus on how to efficiently and effectively estimate the influence propagation, while \problem focuses on how to optimize the profit of $k$-combination by leveraging the geographical properties of billboards and trajectories.

\vspace{1mm}
\textbf{Maximum $k$-coverage problem.}
Given a universe of elements $U$ and a collection  $S$ of subsets from $U$, the Maximum $k$-coverage problem (MC) aims to select at most $k$ sets from $S$ to maximize the number of elements covered. This problem has been shown to be NP-hard, and \citet{feige1998threshold} has proven that the greedy heuristic is the most effective polynomial solution and can provide $(1-1/e)$ approximation to the optimal solution. The budgeted maximum coverage (BMC) problem \cite{khuller1999budgeted} further considers a cost for each subset and tries to maximize the coverage with a budget constraint.
\citet{khuller1999budgeted} show that the naive greedy algorithm no longer produces solutions with an approximation guarantee for BMC. To overcome this issue, they devise a variant of the greedy-based algorithm for BMC, which provides solutions with a $(1-1/e)$-approximation.  However, by a rigorous complexity analysis in Section \ref{sec:marginal-index}, we find that this algorithm needs to take $O(|\mathcal{\td}|\cdot|\ur|^5)$ time to solve our \problem problem, which does not scale well in practice (see Section~\ref{sec:exp}).

\section{Our Framework}\label{sec:pre}

We first discuss two baselines that are extended from the algorithms for the general Budgeted Maximum Coverage (BMC) problem. In particular, we first present a basic greedy method (Algorithm~\ref{greedy}). It is worth noting that, the basic greedy method is proved by \citet{khuller1999budgeted} to achieve $(1-1/\sqrt{e})$-approximation; however, we find it is not correct and we prove it to be $\frac{1}{2}(1-1/e)$. As the approximation ratio of this algorithm is low, we then propose an enumeration algorithm with $(1-1/e)$-approximation (Algorithm~\ref{greedy_enumerate}). However, the enumeration algorithm incurs a high computation cost as it has to enumerate a large number of feasible candidate combinations, which is impractical when $|\ur|$ and $|\td|$ are large. This motivates us to exploit the spatial property between billboards and trajectories to propose our own framework to dramatically reduce the computation cost, where an overview is shown in Section~\ref{sec:overview}.
Important notations used in our framework are presented in Table~\ref{table:notation_a}.

\subsection{Baselines} \label{sec:baseline}

\subsubsection{A Basic Greedy Method}\label{sec:greedy}

A straightforward approach is to select the billboard $\bb$ which maximizes the unit marginal influence, i.e., $\frac{\marginifl(b|S)}{\cost(\{b\})}$, to a candidate solution set $S$, until the budget is exhausted, where $\marginifl(\bb|S)$ denotes the marginal influence of $\bb$ to $S$, i.e., $\ifl(S\cup \{\bb\})-\ifl(S)$. Lines 1.3-1.8 of Algorithm~\ref{greedy} present how it works. However, such a greedy heuristic cannot achieve a guaranteed approximation ratio. For example, given two billboards $\bb_1$ with influence 1 and $\bb_2$ with influence $x$. Let $\cost(\bb_1)=1$, $\cost(\bb_2)=x+1$ and $\budget=x+1$. The optimal solution is $\bb_2$ which has influence $x$, while the solution picked by the greedy heuristic contains the set $\bb_1$ and the influence is 1. The approximation factor for this instance is $x$. As $x$ can be arbitrarily large, this greedy method is unbounded.

To overcome this issue, 
we modify the above method by considering the best single billboard solution as an alternative to the output of the naive greedy heuristic.  
In particular, we add lines 1.9-1.13 in Algorithm~\ref{greedy} to consider such best single billboard solution. As a result, a complete Algorithm~\ref{greedy} forms our basic greedy method (\gre) to solve the \problem problem.

\noindent
\textbf{Time Complexity of \gre.} 
In each iteration, Algorithm~\ref{greedy} needs to scan all the billboards in $(\ur\setminus S)$ and compute their (unit) marginal influence to the chosen set. 
Each marginal influence computation needs to traverse $\td$ once in the worst case. Thus, adding one billboard into $S$ takes $O(|\td| \cdot |\ur|)$ time. Moreover, when $\budget$ is sufficiently large, this process would repeat $|\ur|$ times at the worst case.  Therefore, the time complexity of Algorithm \ref{greedy} is $O(|\td| \cdot |\ur|^2)$. 

It is worth noting that the authors in \cite{khuller1999budgeted} claim that \gre achieves an approximation factor of $(1-1/\sqrt e)$ for the budgeted maximum coverage problem. However, we find that this claim is problematic and the bound of \gre should be $\frac{1}{2}(1-1/e)$, as presented in Theorem~\ref{correctness_gre}. 
 
\begin{theorem}\label{correctness_gre}
\gre achieves an approximation factor of $\frac{1}{2}(1-1/e)$ for the \problem problem.
\end{theorem}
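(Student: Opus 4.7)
The plan is to prove the bound in three stages: first establish submodularity of the influence function, then apply the standard budgeted-greedy analysis to get a ``virtual'' $(1-1/e)$ guarantee on a set that slightly exceeds the budget, and finally use the ``best single billboard'' branch (lines 1.9--1.13) to split the guarantee in half. The first step is to verify that $\ifl$ is monotone and submodular. For any fixed trajectory $\tr_j$, the per-trajectory score $\pr(S,\tr_j)=1-\prod_{\sbb\in S}(1-\pr(\bb,\tr_j))$ is the classical ``probability that at least one element succeeds'' function, which is well known to be monotone submodular in $S$ (equivalently, $-\log(1-\pr(S,\tr_j))$ is modular, so $\pr(S,\tr_j)$ is a concave transform of a modular function). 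Since $\ifl(S)=\sum_{\str_j\in\td}\pr(S,\tr_j)$ is a nonnegative sum of such functions, $\ifl$ is monotone and submodular, and in particular $\ifl(A\cup B)\le\ifl(A)+\ifl(B)$ for any $A,B$.

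Next, I would run the usual Khuller--Moss--Naor analysis on the pure greedy loop (lines 1.3--1.8). Let $\bb^{(1)},\bb^{(2)},\ldots$ be the billboards added by the loop in order, let $G_i=\{\bb^{(1)},\ldots,\bb^{(i)}\}$, and let $\bb^{(l+1)}$ be the first billboard greedy would select but which violates the budget (so the loop terminates after $G_l$). Write $\opt=\ifl(S^*)$ for an optimal feasible set $S^*$. By monotonicity and submodularity, at each step the greedy choice satisfies
\begin{equation*}
\frac{\marginifl(\bb^{(i+1)}\mid G_i)}{\cost(\bb^{(i+1)})}\;\ge\;\frac{\opt-\ifl(G_i)}{\budget},
\end{equation*}
because $S^*$ covers the residual $\opt-\ifl(G_i)$ using total cost at most $\budget$, and greedy picks the best unit-ratio billboard. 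Rearranging gives $\opt-\ifl(G_{i+1})\le(1-\cost(\bb^{(i+1)})/\budget)(\opt-\ifl(G_i))$. A routine induction together with $\prod_{i=1}^{l+1}(1-x_i)\le\exp(-\sum_i x_i)$ and $\sum_{i=1}^{l+1}\cost(\bb^{(i)})\ge\budget$ (by definition of $\bb^{(l+1)}$) yields
\begin{equation*}
\ifl(G_l\cup\{\bb^{(l+1)}\})\;\ge\;\bigl(1-1/e\bigr)\cdot\opt.
\end{equation*}

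Finally, I would combine this with the best-single-billboard branch. By submodularity (or just the union-bound form proved above),
\begin{equation*}
\ifl(G_l)+\ifl(\{\bb^{(l+1)}\})\;\ge\;\ifl(G_l\cup\{\bb^{(l+1)}\})\;\ge\;(1-1/e)\cdot\opt,
\end{equation*}
so at least one of $\ifl(G_l)$ or $\ifl(\{\bb^{(l+1)}\})$ is $\ge\tfrac{1}{2}(1-1/e)\opt$. The billboard $\bb^{(l+1)}$ must have $\cost(\bb^{(l+1)})\le\budget$ (otherwise it could never have been a greedy candidate in the first place, since all singletons with cost $>\budget$ are infeasible and trivially excluded), hence it is a valid single-billboard solution, so the best-single value computed in lines 1.9--1.13 is at least $\ifl(\{\bb^{(l+1)}\})$. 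The algorithm returns the better of the greedy solution and the best single, which therefore attains value at least $\tfrac{1}{2}(1-1/e)\opt$, proving the theorem. The main technical obstacle is the submodularity step, since the influence model is probabilistic rather than set-theoretic; once that is in place, the rest reduces to applying the classical budgeted-greedy argument and then paying a factor of $1/2$ for truncating the infeasible tail element $\bb^{(l+1)}$.
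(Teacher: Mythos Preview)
Your proof is correct and follows essentially the same route as the paper: obtain the $(1-1/e)$ bound on the greedy prefix augmented by the first budget-violating billboard via the standard Khuller--Moss--Naor recursion, then use subadditivity to split the guarantee between the greedy output and the best single billboard. One minor slip worth fixing: the loop in Algorithm~\ref{greedy} does \emph{not} terminate at the first violation---it skips the over-budget item and continues---so the returned greedy set $S$ satisfies $G_l\subseteq S$ rather than $S=G_l$; you therefore need one explicit appeal to monotonicity, $\ifl(S)\ge\ifl(G_l)$, exactly as the paper does when it writes ``as $S_{k^*}\subseteq S$''.
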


\noindent\textbf{Discussion on the problematic approximation ratio of $(1 - \frac{1}{\sqrt{e}})$ originally presented in \cite{khuller1999budgeted}).} Note that Theorem~\ref{correctness_gre} is essentially the Theorem 3 introduced in \cite{khuller1999budgeted} because both try to find the approximation ratio of the same cost-effective greedy method for a budgeted maximum coverage (BMC) problem. We first present a proof of Theorem~\ref{correctness_gre} which shows that the \gre achieves $\frac{1}{2}(1-1/e)$-approximation, then we justify why the  approximation ratio of $(1 - \frac{1}{\sqrt{e}})$ originally presented in \cite{khuller1999budgeted}) is problematic. 

\begin{proof}
	(Theorem~\ref{correctness_gre})
	Let $\opt$ denote the optimal solution and $\mathbb{M}_{k^*+1}$ be the marginal influence of adding $\bb_{k^*+1}$ (be consistent to the definition in Lemma \ref{hypo_bound}).
	When applying Lemma \ref{khuller} to the $(k^*+1)$-th iteration, we get:
	\begin{align*}
	\small		
	I(S_{k^*+1}) & =\ifl(S_{k^*}\cup b_{k^*+1})=\ifl(S_{k^*})+\mathbb{M}_{k^*+1}\\
	& \ge \left[ 1 -  \prod\nolimits_{j = 1}^{k^*+1} \left( {1 - \frac{w(\bb_j)}{L}} \right) \right] \cdot \ifl(\opt) \\
	& \ge \left( 1 - {(1 - \frac{1}{k^* + 1})^{k^* + 1}}\right)  \cdot \ifl(\opt)\\
	&  \ge (1 - \frac{1}{e}) \cdot \ifl(\opt)
	\end{align*} 
	
	Note that the second inequality follows from the fact that adding $\bb_{k^*+1}$ to $S$ violates the budget constraint $\budget$, i.e., $\cost(S_{k^*+1})=\cost(S_{k^*})+\cost(\bb_{k^*+1}) \geq \budget$. 
	
	Intuitively, $\mathbb{M}_{k^*+1}$ is at most the maximum influence of the elements covered by a single billboard, i.e., $H$ is found by \gre in the first step (line 1.3). Moreover, as $S_{k^*}\subseteq S$ ($S$: the solution of \gre ), we have:
	\begin{equation}\label{appr_ce}
	\ifl(S)+\ifl(H) \geq \ifl(S_{k^*+1}) \geq (1-1/e)\ifl(\opt)
	\end{equation}
	From the above inequality we have that, among $\ifl(S)$ and $\ifl(H)$, at least one of them is no less than 	$\frac{1}{2}(1-1/e)\ifl(\opt)$. Thus it shows that \gre achieves an approximation ratio of at least $\frac{1}{2}(1-1/e)$.
\end{proof}


In the original proof of Theorem 3 in \cite{khuller1999budgeted}, the authors have tried to prove that \gre is $(1-1/\sqrt{e})$-approximate for the following three cases respectively.

\noindent\textbf{Case 1:} the influence of the most influential billboard in $\ur$ is greater than $\frac{1}{2}\ifl(\opt)$.

\noindent\textbf{Case 2:} no billboard in $\ur$ has an influence greater than $\frac{1}{2}\ifl(\opt)$ and $\cost(S)\leq \frac{1}{2}\budget$.

\noindent\textbf{Case 3:} no billboard in $\ur$ has an influence greater than $\frac{1}{2}\ifl(\opt)$ and $\cost(S)\geq \frac{1}{2}\budget$.

The authors also proved that the bound in Theorem~\ref{correctness_gre} can be further tightened to $\frac{1}{2}$ for case 1 and case 2, which are  right.
However, there is a problem in the proof for case 3.
Intuitively, if we can prove that \gre is $(1-1/\sqrt{e})$-approximate in Case 3, then by the union bound \gre can achieve an approximation factor of $(1-1/\sqrt{e})$. 

Let $\cost(S_{k^*})$ be equal to $\gamma \budget$ and $\gamma \in (0, 1)$. By applying Lemma~\ref{khuller} to the $k^*$-th iteration, we get:
\begin{align*}
\small		
I(S) \ge I(S_{k^*}) & \ge \left[ 1 -  \prod\nolimits_{j = 1}^{k^*} \left( {1 - \frac{w(\{\bb_j\})}{\budget}} \right) \right] \cdot \ifl(\opt) \\
& \ge \left( 1 - {(1 - \frac{1}{\frac{1}{\gamma}{k^*}})^{k^*}}\right)  \cdot \ifl(\opt)\\
&  \ge (1 - \frac{1}{e^\gamma}) \cdot \ifl(\opt)
\end{align*} 
Note that $\cost(S)\geq \frac{1}{2}\budget$ cannot guarantee $\gamma \geq 1/2$ because $S_{k^*} \subseteq S$. Consequently, the inequality cannot guarantee $ I(S) \ge (1 - \frac{1}{\sqrt{e}}) \cdot \ifl(\opt)$. However, it is concluded in\cite{khuller1999budgeted} that \gre achieves an approximation factor of $(1 - \frac{1}{\sqrt{e}})$ under the assumption of $\gamma \geq 1/2$. Therefore, the proof in \cite{khuller1999budgeted} is problematic.


\setlength{\algomargin}{1.2em} 
\begin{algorithm}[!t]
\caption{\gre$(\ur, \budget, S)$} 
\label{greedy}
\begin{small}
{\bf Input:} A billboard set $\ur$, a budget $\budget$ and a set $S$ ($S=\phi$ by default)

{\bf Output:} A billboard set $S \subseteq \ur$ such that $\cost(S)\leq \budget$


\Repeat{$U = \phi$}
{
	Select $\bb \in \ur \setminus S$ that maximizes $\frac{\marginifl(\sbb|\svar{S})}{\cost(\{\sbb\})}$ 
	
	\uIf {$\emph\cost(S)+ \emph\cost(\emph\bb) \leq \emph\budget$}
		{
			$S \gets{S} \cup \{\bb\}$
		} 
	
	$\ur \gets \ur \setminus  \{\bb\}$
	
}

{$H \gets \var{\rm{argmax}}\{\ifl(\{\bb\})|$  $b \in \ur$, and ${\cost(\{\bb\})\leq \budget}\}$}

\uIf {$\ifl(H) > \ifl(S)$}
{
	\Return $H$ 
}
\Else
{
	\Return $S$ 	
}
	\end{small}

\end{algorithm}

\setlength{\algomargin}{1.2em} 
\begin{algorithm}[!t]
		\caption{\enumgreedy$(\ur, \budget)$}
		\label{greedy_enumerate}
		\begin{small}
			{\bf Input:} A billboard set $\ur$, budget $\budget$
			
			{\bf Output:} A billboard set $S \subseteq \ur$ with the cost constraint $\cost(S)\leq \budget$
			
			{Let $\tau$ be a constant /* $\tau$=2 to achieve the lowest time complexity */}
			
			 $H_1 \gets \var{\rm{argmax}}\{\ifl({S'})|$  $S' \subseteq \ur, |S'| \leq \tau$, and ${\cost(S')\leq \budget}\}$
		
			 $H_2 \gets \phi$
		
		\For{all $S \subseteq \ur$, such that $|S|=\tau+1$ and $\emph\cost(S)\leq \emph\budget$}
			{
		
				 $S \gets \textbf{\gre}(\ur \setminus  S, \budget-\cost(S), S)$
				
				\uIf {$\ifl(S)>\ifl(H_2)$}
					{
					 	$H_2 \gets S$
					}			
			}		
		
		\uIf {$\ifl(H_1) > \ifl(H_2)$}
			{
				\Return $H_1$ 
			}
		\Else
			{
				\Return $H_2$ 	
			}
		\end{small}
	\end{algorithm}

\begin{figure}[!t]	\vspace{-.5em}
	\centering
	\includegraphics[width=0.95\linewidth]{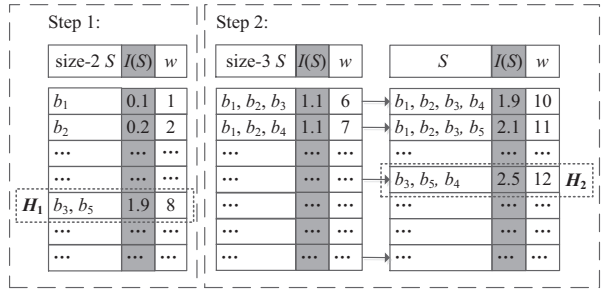}
	\vspace{-1em}
	\caption{A running example of Algorithm~\ref{greedy_enumerate}}
	\label{fig:greedyinlf}
\end{figure}
\subsubsection{Enumeration Greedy Algorithm}\label{sec:marginal-index}

Since \gre is only $\frac{1}{2}(1-1/e)$-approximation, we would like to further boost the influence value, even at the expense of longer processing time as compared to \gre. Note that it is critical to maximize the influence as it can save real money, while keeping acceptable efficiency.  
\ping{Thus we utilize the enumeration-based solution proposed in \cite{khuller1999budgeted} to obtain $(1-1/e)$-approximation.}

\enumgreedy runs in two phases. In the first phase (line 2.4), 
it enumerates all feasible billboard sets whose cardinality is no larger than a constant $\tau$, and adds the one with the largest influence to $H_1$. In the second phase (lines 2.5-2.9), 
it enumerates each feasible set of size-$(\tau+1)$ whose total cost does not exceed budget $L$. Then for each set $S$, it invokes \ngre to greedily select new billboards (if any) that can bring marginal influence, and chooses the one that maximizes the influence under the remaining budget $\budget-\cost(S)$ and assigns it to $H_2$. 
Last, if the best influence of all size-$(\tau+1)$ billboard sets is still smaller than that of its size-$\tau$ counterpart (i.e., $\ifl(H_1)>\ifl(H_2)$), $H_1$ is returned; otherwise, $H_2$ is returned. 


\begin{example}
	Figure~\ref{fig:greedyinlf} illustrates an instance of Algorithm \ref{greedy_enumerate} on Figure \ref{fig:example}'s scenario. We assume $\tau= 2$ and $\budget=12$, and the cost of a billboard is its id number (e.g. $\cost(\bb_1)$=1). For $\pr(\bb_i, \tr_j)$, we use the value set in Figure~\ref{fig:invlist} to compute $\ifl(S)$.
	In the first step, Algorithm~\ref{greedy_enumerate} enumerates all feasible sets of size less than 3, among which the billboard set \{$\bb_3$, $\bb_5$\} has the largest influence ($\ifl(H_1)$ = $\pr(\bb_3, \tr_1)$+$\pr(\bb_3, \tr_2)$+$\pr(\bb_3, \tr_3)$+$\pr(\bb_5, \tr_5)$+$\pr(\bb_5, \tr_6)$=1.9). 
	In the second step, it starts from the feasible size-3 sets and expands greedily until the budget constraint is violated. The right part of Figure~\ref{fig:greedyinlf} shows the eventual billboard set $S$ whose total cost does not violate the budget constraint $\budget$ (line 2.7 of Algorithm~\ref{greedy_enumerate}). Here $\cost(S)$=$\budget$=12, and assigns it to $H_2$ (line 2.9), so $H_2$= $\{\bb_3, \bb_4, \bb_5\}$ and its influence value $\ifl(H_2) = 2.5$ which is the largest influence. 
	Since $\ifl(H_1)<\ifl(H_2)$, Algorithm~\ref{greedy_enumerate} returns $\{\bb_3, \bb_4, \bb_5\}$ as the final result.
\end{example}

\vspace{-.5em}
\noindent
\textbf{Time Complexity of \enumgreedy.} 
 At the first phase, Algorithm \ref{greedy_enumerate} needs to scan all feasible sets with cardinality $\tau$ and the number of such sets is $O(|\ur|^\tau)$. For each such candidate set, we need to scan $\td$ to compute its influence, thus the first phase takes $O(|\td| \cdot {|\ur|}^\tau)$ time. At the second phase, there are $O(|\ur|^{\tau+1})$ sets of cardinality $\tau+1$, and Algorithm \ref{greedy_enumerate} invokes Algorithm~\ref{greedy} for each set. In the worse case, the cost of any size-$(\tau+1)$ sets should be much smaller than $\budget$ and thus these sets would not affect the complexity of \gre in line 2.6. Therefore, the second phase takes $O(|\td| \cdot |\ur|^2 \cdot |\ur|^{\tau+1})$ time. In total, Algorithm \ref{greedy_enumerate} takes $O(|\td| \cdot|\ur|^\tau + |\td|\cdot|\ur|^{\tau+3})=O(|\td|\cdot|\ur|^{\tau+3})$.

\vspace{.25em}
\noindent \textbf{Selection of $\tau$.} It has been proved in \cite{khuller1999budgeted} that Algorithm \ref{greedy_enumerate} can achieve an approximation factor of $(1 - 1/e)$ when $\tau \geq 2$. Note that (1) the approximation ratio $(1 - 1/e)$ cannot be improved by a polynomial algorithm~\cite{khuller1999budgeted} and (2) a larger $\tau$ leads to larger overhead,  thus we set $\tau=2$. So Algorithm~\ref{greedy_enumerate} can achieve the $(1-1/e)$-approximation ratio with a complexity of  $O(|\td|\cdot|\ur|^5)$.


\subsection{A Partition-based Framework}\label{sec:overview}

Although \enumgreedy provides a solution with an approximation ratio of $(1-1/e)$, it involves high computation cost, because it needs to enumerate all size-$\tau$ and size-$(\tau+1)$ billboard sets and compute their influence to the trajectories, which is impractical when $|\ur|$ and $|\td|$ are large. To address this problem, we propose a partition-based framework.

\vspace{.25em}
\noindent {\bf Partition-based Framework.} Our problem has a distance requirement that if a billboard influences a trajectory, the trajectory must have a point close to the billboard (distance within $\lambda$). All of existing techniques neglect this important feature, which can be utilized to enhance the performance.  After deeply investigating the problem, we observe that most trajectories span over a small area in the real world. 
For instance, around 85\% taxi trajectories in New York do not exceed five kilometers (see Section~\ref{sec:exp}). 
It implies that billboards in different areas should have small overlaps in their influenced trajectories, 
e.g., the number of trajectories simultaneously influenced by two billboards located in Manhattan and Queens is small. 
Thereby, we exploit such locality features to propose a partition based method called \psel. Intuitively, we partition $\ur$ into a set of small clusters, compute the locally influential billboards for each cluster, and merge the local billboards to generate the globally influential billboards of $\ur$. Since the local cluster has much smaller number of billboards, this method reduces the computation greatly while keeping competitive influence quality.

\begin{table}[!t]
	\caption{{Notations for problem formulation and solutions}}
	\vspace{-1em}
	\centering
	\begin{tabular}{lp{6.8cm}}
		\toprule
		{\bfseries Symbol} & {\bf Description}\\
		\midrule
	 $\tr$ ($\td$) & A trajectory (database) \\
	 $\ur$ & A set of billboards that a user wants to advertise\\
	 $\budget$ & the total budget of a user\\
	 $\ifl(S)$ & The influence of a selected billboard set $S$\\
	 $P$ & A billboard partition \\
	 $\ratio_{\svar\emph{ij}}$ & The overlap ratio between clusters\\
	 $\marginifl(\bb|S)$ & The marginal influence of $\bb$ to $S$  \\
	 $\theta$  & The threshold for a $\theta$-partition  \\
		$\gimatrix$ & The DP influence matrix: $\gimatrix[i][l]$ is the maximum influence of the billboards selected from the first $i$ clusters within budget $l$ ($i \leq m$ and $l \leq \budget$)
 \\
		$\limatrix$ & The local influence matrix: $\limatrix[i][l]$ is the influence returned by $\enumgreedy(C_i, l)$, i.e., the maximum influence of billboards selected from cluster $C_i$ within budget $l$ \\
		\bottomrule
	\end{tabular}
	\label{table:notation_a}
\end{table}

\vspace{.25em}

\noindent {\bf Partition.}  We first partition the billboards to $m$ clusters $C_1,C_2,\ldots,C_m$, where different clusters have no (or little) influence overlap to the same trajectories. Given a budget $l_i$ for cluster $l_i$, by calling $\enumgreedy(C_i,l_i)$, we select the locally influential billboard set $S[i][l_i]$ from cluster $C_i$ within budget $l_i$, where $S[i][l_i]$ has the maximum influence $\limatrix[i][l_i]$. Next we want to assign a budget  to each cluster $C_i$ and take the union of $S[i][l_i]$ as the globally influential billboard set, where $l_1+l_2+\ldots+l_m\leq \budget$. Obviously, we want to allocate the budgets to different clusters to maximize 
\begin{equation}
\label{PBIM}
\mathop   \sum\nolimits_{i = 1}^m {\limatrix[i][l_i]}  
\end{equation}
$$s.t.~  l_1+l_2+\ldots+l_m\leq \budget$$

There are two main challenges in this partition-based method. 
(1) How to allocate the budgets to each cluster to maximize the overall influence? We propose a dynamic programming algorithm to address this challenge (see Section~\ref{sec:partition_method}). (2) How to partition the billboards to reduce the influence overlap among clusters? We propose a partition strategy to reduce the influence overlap and devise an effective algorithm to generate the clusters (see Section~\ref{sec:partition_method}).

\vspace{.25em}

\noindent {\bf Lazy Probe.}  Although the partition-based method significantly reduces the complexities over the enumeration approach, its dynamic programming process has to repeatably invoke \enumgreedy to probe the partial solution for every cluster in the partition. It is still expensive to compute the local influence by calling $\enumgreedy(C_i,l_i)$ many times. We find that it is not necessary to compute the real influence value for those clusters which have low influence to affect the final result, thus reducing the number of calls to $\enumgreedy$. The basic idea is that we estimate an upper bound ${\upperlimatrix[i][l_i]}$ of the local solution for a given cluster $C_i$ and a budget $l_i$; and we do not need to compute the real influence ${\limatrix[i][l_i]}$, if we find that using this cluster cannot improve the influence value. This method significantly reduces the practical cost of \psel while achieving the same approximation ratio. There are two challenges in the lazy probe method. (1) How to utilize the bounds to reduce the computational cost (i.e., avoid calling $\enumgreedy(C_i,l_i)$)? We propose a lazy probe technique (see Section~\ref{sec:bound-selection}). (2) How to estimate the upper bounds while keeping the same approximation ratio as \psel? We devise an incremental algorithm to estimate the bounds (see Section~\ref{sec:bound-selection}). 


\begin{figure}[!t]
	\centering
	\includegraphics[width=1.0\linewidth]{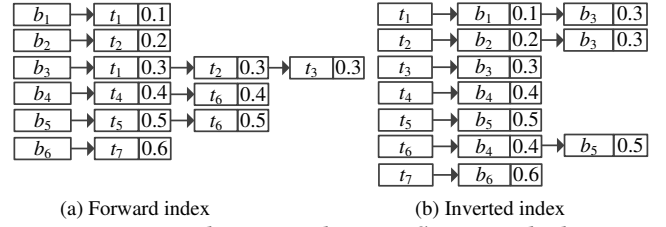}
	\vspace{-2.5em}
	\caption{An Index to Accelerate Influence Calculation} 
	\label{fig:invlist}
\end{figure}

\vspace{.25em}
\noindent 
\textbf{Index for efficient Influence Calculation.}  The most expensive part of the algorithm is to compute $\ifl(S)$, which in turn transforms to the computation of $\pr(b_i, \tr_j)$ and $\pr(S, \tr_j)$ for $1\leq i \leq |\td|$ and $1\leq j \leq |\ur|$. To improve the performance, we propose two effective indexes. (1) A forward index for billboards (Figure~\ref{fig:invlist}a). For each billboard $b_i$, we keep a forward list of trajectories that are influenced by this billboard, associated with the weight $\pr(b_i, \tr_j)$. Then we can easily compute $\pr(\{b_i\})$ by summing up all the weights in the forward list.  To build the forward list, we need to find the trajectories that are influenced by $b_i$. To achieve this goal, we build an R-tree for the points in trajectories. Then a range query on $b_i$ can build the forward list efficiently. (2) An inverted index for trajectories (Figure~\ref{fig:invlist}b). For each trajectory $t_j$, we keep an inverted list of billboards that influence $t_j$, associated with the weight $\pr(b_i, \tr_j)$. To compute $\pr(S, \tr_j)$, we can use the inverted list to find all the billboards that influence $\tr_j$ and use Equation~\ref{inftoset} to compute $\pr(S, \tr_j)$. Then we use Equation~\ref{influenceofset} to compute $\pr(S)$. 
\begin{example}
	In Figure~\ref{fig:invlist}, let $S=\{\bb_1\}$, and we want to compute the marginal influence of $\bb_3$ w.r.t. the current candidate set $S$.  First, we traverse the forward index to get the trajectory set influenced by $\bb_3$, and find that $\emph\tr_1$ is co-influenced by $\bb_3$ and $S$. As $\bb_3$ also can influence $\tr_2$ and $\tr_3$, the marginal influence of $\bb_3$ is computed by $\pr(S \cup \{\bb_3\}, \tr_1)-\pr(\bb_3, \tr_1)+\pr(\bb_3, \tr_2)+\pr(\bb_3, \tr_3)$.  According to Equation \ref{inftoset}, this computation depends on $\pr(\bb_i, \tr_1)$, $\pr(\bb_i, \tr_2)$ and $\pr(\bb_i, \tr_3)$, for all $\bb_i \in S \cup \{\bb_3\}$, and these values can be obtained from traversing the inverted list directly. 
\end{example}

\section{Partition based Method}\label{sec:partition_method}

This section proposes a partition-based method which contains three steps to reduce the computation cost:
\begin{enumerate}[noitemsep]
\item Partition $\ur$ into a set of clusters according to their influence overlap;
\item Find local influential billboards with regard to each cluster by calling \enumgreedy;
\item Aggregate these local influential billboards from clusters to obtain the global solution for \problem.
\end{enumerate}

For convenience sake, this section first presents how to select the billboards based on a given partition scheme, and then discuss how to find a good partition that can provide a high performance and a theoretical approximation ratio for our partition-based method.

\begin{table*}[!t]
	\vspace{-1em}
	\centering
	\captionsetup{justification=centering,margin=2cm}
	\caption{An example of Algorithm \ref{PBA} (each cell in $\lsmatrix$ and $\gsmatrix$ record the selected billboards corresponding to the maximum influence value recorded in each cell of $\limatrix$ and $\gimatrix$)}
	\vspace{-8pt}
	\label{example-PBA}
	\begin{small} 
		\begin{minipage}[c]{0.27\linewidth}
			\subfloat[$\lsmatrix$]{
				\label{S_PBA}
				\begin{tabular}{llll}
					\hline
					\multicolumn{1}{c}{} & \multicolumn{1}{c}{1} & \multicolumn{1}{c}{2} & \multicolumn{1}{c}{3} \\ \hline
					\multicolumn{1}{c}{-}   & \multicolumn{1}{c}{-}      & \multicolumn{1}{c}{-}           & \multicolumn{1}{c}{-}   \\
					$C_1$                 & $\{1\}$                    & $\{1, 3\}$                  & $\{1, 3, 2\}$                             \\
					$C_2$                 & $\{6\}$                     & $\{4, 5\}$                  & $\{4, 5, 6\}$                           \\ 
					$C_3$                 & $\{7\}$                     & $\{7, 8\}$                   & $\{7, 8, 9\}$                         \\ \hline
				\end{tabular}
			}
		\end{minipage}
		\centering 
		\begin{minipage}[c]{0.22\linewidth}
			\subfloat[$\limatrix$]{
				\label{I_PBA}	
				\begin{tabular}{llll}
					\hline
					      & 1  & 2  & 3      \\ \hline
					  -   & 0  & 0  & 0   \\
					$C_1$ & 10 & 18 & 25  \\
					$C_2$ & 8  & 16 & 21  \\ 
					$C_3$ & 5  & 9  & 13  \\ \hline
				\end{tabular}
			}
		\end{minipage}
		\centering 
		\begin{minipage}[c]{0.27\linewidth}
			\subfloat[$\gsmatrix$]{
				\label{SH_PBA}	
				\begin{tabular}{llll}
					\hline
					\multicolumn{1}{c}{} & \multicolumn{1}{c}{1} & \multicolumn{1}{c}{2} & \multicolumn{1}{c}{3}  \\ \hline
					$i = 0$                      &\multicolumn{1}{c}{-}        &\multicolumn{1}{c}{-}           &\multicolumn{1}{c}{-} \\
					$i = 1$                      & $\{1\}$                     & $\{1, 3\}$                     & $\{1, 3, 2\}$    \\ 
					$i \leq 2$                   & $\{1\}$                     & $\{1, 3\}$                     & $\{1, 3, 6\}$      \\ 	
					$i \leq 3$                   & $\{1\}$                     & $\{1, 3\}$                   	& $\{\textbf{1}, \textbf{3}, \textbf{6}\}$               \\ \hline
					
				\end{tabular}
			}
		\end{minipage}
		\centering 
		\begin{minipage}[c]{0.22\linewidth}
			\subfloat[$\gimatrix$]{
				\label{W_PBA}	
				\begin{tabular}{llll}
					\hline
					& 1   & 2   & 3   \\ \hline
					$i = 0$ & 0 & 0 & 0  \\ 
					$i = 1$ & 10 & 18 & 25  \\ 
					$i \leq 2$ & 10 & 18 & 26  \\ 
					$i \leq 3$ & 10 & 18 & \textbf{26} \\ \hline
					
				\end{tabular}
			}
		\end{minipage}
		\centering 
	\end{small}
	\vspace{-2em}
\end{table*}

\subsection{Partition based Selection Method}\label{sec:partition_selection}

\begin{defn}\rm{(\textbf{Partition})}\label{partition}
	A partition of $U$ is a set of clusters $\{C_1,...,C_m\}$, such that $U = C_1 \cup C_2\cup...\cup C_m$, and $\forall i \neq j$, $C_i\bigcap C_j=\phi$.
	Without loss of generality, we assume that the clusters are sorted by their size, and $C_m$ is the largest cluster. 
\end{defn}

We follow a divide and conquer framework to combine partial solutions from the clusters. Let $S^*$ denote the billboard set returned by $\enumgreedy(\ur, \budget)$, $S[i][l]$ denote the billboard set  returned by $\enumgreedy(C_i, l)$, where $l<L$ is a budget for cluster $C_i$, as shown in Figure~\ref{fig:lem_1}.  Let $\limatrix[i][l]$ be the influence value of the billboard set $S[i][l]$, i.e.,  $\limatrix[i][l]=\ifl(S[i][l])$.  If $S[i][l]$ for $1\leq i\leq m$ have no overlap, we can assign a budget $l$ for each cluster and maximize the total influence based on Equation~\ref{PBIM}.

We note that the costs for billboards are integers in reality, e.g., the costs from a leading outdoor advertising company are all multiples of 100~\cite{adunitcost}. Thereby it allows us to design an efficient dynamic programming method to solve Equation~\ref{PBIM}. The pseudo code is presented in Algorithm~\ref{PBA}. 
It considers the clusters in $P$ one by one. Let $\gimatrix[i][l]$ denote the maximum influence value that can be attained with a budget not exceeding $l$ using up to the first $i$ clusters ($i \leq m$ and $l \leq \budget$).
Clearly, $\gimatrix[m][\budget]$ is the solution for Equation \ref{PBIM} since the union of the first $m$ clusters is $\ur$.
To obtain $\gimatrix[m][\budget]$, Algorithm~\ref{PBA} first initializes the matrices $\gimatrix$ and $\limatrix$ (line 3.3), 
and then constructs the global solution (line 3.7 to 3.17) with the following recursion:
\begin{equation}
\label{recursion}
\begin{array}{l}
\gimatrix[0][l] = 0 \\
\gimatrix[i][l]= \mathop {\max }\limits_{0 \le \svar{q} \le l} (\gimatrix[i - 1][l - q] + \limatrix[i][q] )
\end{array}
\end{equation}

Since the computation at the $i$th iteration only relies on the $(i-1)$th row of each matrix, we can use two $2 \times n$ matrices to replace $\gimatrix$ and $\limatrix$ for saving space.

	\begin{algorithm}[!t]
		\caption{\psel$(P, \budget)$}
		\label{PBA}		
		\begin{small}
		{\bf Input:} A $\theta$-partition $P$ of $\ur$, a budget $\budget$
		
		{\bf Output:} A billboard set $S$
		
		Initialize matrices $\gimatrix$ and $\limatrix$
		
		$m \gets |P|$
		
		\For {$i \gets 1$ to $m$}
		{
			
			\For {$l \gets 1$ to $\emph\budget$}
			{
					\tcc{$C_i$ is the $i$th cluster in $P$}
					
					Invoke $\textbf{\enumgreedy}(C_i,l)$ to compute $\limatrix[i][l]$ 	
							
					$q = \mathop {\arg\max }\limits_{0 \le \svar{q} \le l} (\gimatrix[i - 1][l - q] + \limatrix[i][q])$
				
					$\gimatrix[i][l] \gets \gimatrix[i-1][l-q] + \limatrix[i][q]$
			}			
		}			
		$S \leftarrow$ the corresponding selected set of $\gimatrix[m][\budget]$
		
		\Return {$S$}
	\end{small}
	\end{algorithm}

\begin{example}
	Given a partition of $\ur$ as $P=\{C_1,C_2,C_3\}$, where $C_1=\{1,2,3\}$, $C_2=\{4,5,6\}$ and $C_3=\{7,8,9\}$. For simplicity, we assume the cost of each billboard in $\ur$ is 1. For sake of illustration, we define two more notations: let $\lsmatrix[i][l]$ and $\gsmatrix[i][l]$ denote the sets of selected billboards corresponding to the influence value $\limatrix[i][l]$ and $\gimatrix[i][l]$ respectively. As a result we have four matrices as shown in Table~\ref{example-PBA}. Now we want to find an influential billboard set within $\budget=3$ by Algorithm \ref{PBA}.
	Initially, $\gimatrix[0][l]=0$, $0 < l \leq 3$. Clearly, for $l = 1,2...L$, $\gimatrix[1][l]$ is same as $\limatrix[1][l]$ and $\gsmatrix[1]$ is same as $\lsmatrix[1]$, as only one cluster is considered. When two clusters are considered:
		$\gimatrix[2][1]=max\{\gimatrix[1][1], \gimatrix[1][0]+\limatrix[2,1]\}=10$ and $\gsmatrix[1][1]=\{\lsmatrix[1][1]\}=\{1\}$; 
		$\gimatrix[2][2]=max\{\gimatrix[1][2], \gimatrix[1][0]+\limatrix[2,2], \gimatrix[1][1]+\limatrix[2,1]\}=18$ and $\gsmatrix[2][2]=\{\lsmatrix[2][2]\}=\{1,3\}$.
		$\gimatrix[2][3]=\gimatrix[1][2]+\limatrix[2,1]=26$ and $\gsmatrix[2][3]=\{\gsmatrix[1][2] \cup \lsmatrix[2][1]\}=\{1,3,6\}$.
This process is repeated until all the elements in $\gimatrix$ and $\gsmatrix$ are obtained. Finally, Algorithm~\ref{PBA} returns $\gsmatrix[3][3]=\{1,3,6\}$ as a solution, and its influence is $\gimatrix[3][3]=26$.
\end{example}

\begin{figure}[!t]
	\centering
	\includegraphics[width=0.60\linewidth]{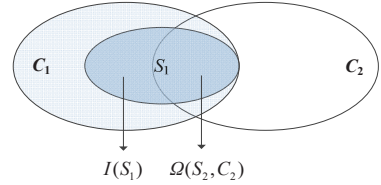}
		\vspace{-8pt}
	\caption{The relationship of $S_1$, $C_2$, and $\varOmega(S_1, C_2)$}
	\label{fig:lem_1}
\end{figure}

\noindent\textbf{Time Complexity Analysis}. Let $|C_i|$ be the cardinality of $C_i$. To obtain $\gimatrix[i][l]$, Algorithm~\ref{PBA} needs to invoke $\enumgreedy(C_i, l)$ to compute $\limatrix[i][l]$ and maximize $\gimatrix[i - 1][l - q] + \limatrix[i][q]$. When $\tau=2$, $\enumgreedy(C_i, l)$ takes $O(|\td|\cdot|C_i|^5)$ and there are $m\budget$ elements in $\gimatrix$. Therefore, the total time cost of Algorithm~\ref{PBA} is $\sum\nolimits_{i = 1}^{m} {|{C_i}|} ^5$ which is bounded by $O(m\budget \cdot |\td|\cdot|C_m|^5)$. It is more efficient than Algorithm~\ref{greedy_enumerate} ($O(|\td|\cdot|\ur|^{5})$), since $|C_m|$ is often significantly smaller than $|\ur|$ and $\budget$ is a constant. As shown in our experiment, \psel is faster than \enumgreedy by two orders of magnitude when $|\ur|$ is 2000.


\subsection{$\theta$-partition}\label{sec:partition}

A naive partition scheme will lead to poor quality due to large influence overlaps between clusters.  In order to reduce the influence overlap between the clusters, we introduce the concept of \textit{Overlap Ratio}. The basic idea is to control the maximum overlap ratio between any subset of a cluster and all the rest clusters.

\begin{defn} \rm{(\textbf{Overlap Ratio})}
\label{ratio}
	For two clusters $C_i$ and $C_j$, the ratio  of the overlap between $C_i$ and $C_j$ relative to $C_i$, denoted by $\ratio_{\svar{ij}}$, is defined as 
	\begin{equation} \label{ratio_eq}
	\ratio_{\svar{ij}} =\mathop {\arg \max }\limits_{\forall S_i \subseteq C_i} \{\varOmega(S_i|C_j)/\ifl (S_i)\}
	\end{equation}
	where $S_i$ is a subset of $C_i$, and $\varOmega(S_i|C_j)$ is the overlap between $S_i$ to $C_j$, i.e., $\ifl(S_i)+\ifl(C_j)-\ifl(S_i \cup C_j)$. The relationship of $S_i$, $C_j$ and $\varOmega(S_i|C_j)$ is illustrated in Figure~\ref{fig:lem_1}.
\end{defn}
Intuitively, the smaller $\ratio_{\svar{ij}}$ is, the lower influence overlap that $C_i$ and $C_j$ have. 

\textbf{Discussion on overlap ratio choices.}~Naturally, there are other ways to define the overlap ratio. Therefore, we describe two alternatives, and then discuss why the one defined in Definition~\ref{ratio} is generally a better choice. \\
\textit{Alternative 1.}~The influence overlap between the clusters can also be measured by the volume of the clusters' overlap directly, i.e.,  $\ratio_{\svar{ij}}=\varOmega(C_i,C_j)/\ifl(U)$. However, utilizing this measure to partition the billboards would incur a low performance for our partition based method and lazy probe method, especially when the budget $\budget$ is small. The reason is that, this measure does not reflect the overlap between two single billboards in different clusters, which may lead to the following situation: billboards $\bb_i$ and $\bb_j$ are partitioned into different clusters, while actually the trajectories influenced by $\bb_i$ can be fully covered by those trajectories influenced by $\bb_j$. Moreover, as our partition based method ignores the overlap between clusters, both $\bb_i$ and $\bb_j$ would be chosen as seeds while they actually have intense overlaps. Clearly, it is a grievous waste when the budget is limited.

\textit{Alternative 2.}~Another way is to measure the overlap ratio between billboards in one cluster and those that are not in this cluster, which can be described by the following equation:
\begin{equation} \label{ratio_eq_1}
\ratio_{\svar{i}} =\mathop {\arg \max }\limits_{{\bb_i} \in {C_i}} \frac{\ifl (\{\bb_i\}) + \ifl (\overline{C_i}) - \ifl (\overline{C_i} \cup \{\bb_i\})}{{\ifl (\{\bb_i\})}}
\end{equation}
where $\overline{C_i}=\ur \setminus C_i$.\\
If a partition $P$ satisfies $\ratio_{\svar{i}} \leq \theta$ ($\theta$ is a given threshold), for any $C_i \subseteq P$, \psel (\bbsel) can be approximated to be within a factor of $\theta(1-1/e)$. This statement holds because for any set $S \subseteq \ur$ and $S_i=S \cap C_i$, we have $\ifl(S) \geq \theta\sum_{S_i \subseteq S}{\ifl(S_i)}$ since the overlap between $S_i$ and $S\setminus S_i$ is at most $\theta \cdot \ifl(S_i)$. Moreover, the set $S'$ found by \psel (\bbsel) maximizes Equation \ref{recursion} and $S'_i$ is returned by a $(1-1/e)$-approximation algorithm (\enumgreedy), thus $\ifl(S')\geq \theta\sum_{S_i \subseteq S}{\ifl(S_i)} \geq \theta(1-1/e)\ifl(\opt)$.

Although this measure provides a good theoretical guarantee for our partition based method, it may cause that $\ur$ cannot be divided into a set of small yet balanced clusters due to its rigid constraint. As shown in Section~\ref{sec:partition_selection}, the time complexity of \psel depends on the size of the largest cluster in $P$, i.e.,  $|C_m|$. Therefore, if $|C_m|$ is close to $|\ur|$, the running time of \psel would be very high and even worse than our \enumgreedy baseline.

Given the overlap ratio, we present the concept of $\theta$-partition to trade-off between the cluster size and the overlap of clusters, 
where $\theta$ is a user-defined parameter to control the granularity of the partitions. 
\begin{defn} \rm{(\textbf{$\theta$-partition})}
	\label{theta_partition}
	Given a threshold $\theta$ $(0 \leq \theta \leq 1)$, we say a partition $P=\{C_1,..,C_m\}$ is a $\theta$-partition, if $\forall i,j\in[1,m]$ the overlap ratio $\ratio_{\svar{ij}}$ between any pair of clusters $\{C_i, C_j\}$ is less than $\theta$.   
\end{defn}

\begin{lem}
	\label{lem_for_guar}
	Let $P$ be a $\theta$-partition of $\ur$. Given any set $S \subseteq \ur$, and the billboards in $S$ belong to $k$ different clusters of $P$ in total. When $k \leq (1/\theta+1)$, we have $\ifl(S) \geq 1/2\sum\nolimits_{S_i \in S} {\ifl(S_i)}$, where $S_i=S \cap C_i$. 
\end{lem}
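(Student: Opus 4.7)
Write $S$ as the disjoint union $S = S_1 \cup S_2 \cup \cdots \cup S_k$ with $S_i = S \cap C_i$, where each $S_i$ is non-empty. The plan is to lower-bound $I(S)$ via a Bonferroni-type (inclusion-exclusion) inequality and then to control the pairwise overlaps using the $\theta$-partition hypothesis.

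First I would establish the trajectory-level subadditivity identity. For any trajectory $t$, $p(S,t)=1-\prod_i(1-p(S_i,t))$, so expanding the product and dropping all terms of order three and higher (which are non-negative and carry alternating signs starting with $+$) yields
\begin{equation*}
p(S,t) \;\geq\; \sum_{i=1}^{k} p(S_i,t) \;-\; \sum_{1\le i<j\le k} p(S_i,t)\,p(S_j,t).
\end{equation*}
Summing over $t$ and recalling that $\Omega(S_i,S_j)=\sum_t p(S_i,t)\,p(S_j,t)$, this gives the key inequality
\begin{equation*}
I(S) \;\geq\; \sum_{i=1}^{k} I(S_i) \;-\; \sum_{1\le i<j\le k}\Omega(S_i,S_j).
\end{equation*}

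Next I would control each pairwise overlap. Since $S_j\subseteq C_j$, monotonicity $p(S_j,t)\le p(C_j,t)$ yields $\Omega(S_i,S_j)\le \Omega(S_i\mid C_j)$; and since $S_i\subseteq C_i$, the $\theta$-partition definition (Definitions~\ref{ratio}, \ref{theta_partition}) gives $\Omega(S_i\mid C_j)\le \vartheta_{ij}\,I(S_i)\le \theta\,I(S_i)$. By symmetry we also have $\Omega(S_i,S_j)\le \theta\,I(S_j)$. Charging each pair symmetrically (half to the $i$-side and half to the $j$-side) produces
\begin{equation*}
\sum_{i<j}\Omega(S_i,S_j) \;\leq\; \frac{\theta}{2}\sum_{i=1}^{k}\sum_{j\ne i} I(S_i) \;=\; \frac{(k-1)\theta}{2}\sum_{i=1}^{k} I(S_i).
\end{equation*}

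Plugging this into the Bonferroni bound gives $I(S)\ge \bigl(1-\tfrac{(k-1)\theta}{2}\bigr)\sum_i I(S_i)$. Finally, invoking the hypothesis $k\le 1/\theta+1$, i.e.\ $(k-1)\theta\le 1$, yields $1-\tfrac{(k-1)\theta}{2}\ge \tfrac{1}{2}$, which is exactly the claim.

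The main obstacle I expect is the first step, since the probabilistic influence model means $I(\cdot)$ is not literally a set-cover size and standard inclusion-exclusion must be adapted; verifying that the cubic and higher order cross-terms truncate in the favourable direction (so the two-term Bonferroni bound is a genuine lower bound on $I(S)$) is where care is needed. The other subtle point is justifying the monotonicity $\Omega(S_i,S_j)\le \Omega(S_i\mid C_j)$ from the $p(\cdot,t)$-formula, which is what lets us invoke the $\theta$-partition bound that is stated in terms of whole clusters $C_j$ rather than subsets $S_j$.
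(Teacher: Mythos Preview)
Your argument is correct and reaches the same intermediate bound $I(S)\ge\bigl(1-\tfrac{(k-1)\theta}{2}\bigr)\sum_i I(S_i)$ as the paper, but by a genuinely different route. The paper orders the pieces so that $I(S_1)\ge\cdots\ge I(S_k)$ and adds them sequentially, using $\Omega(S_j\mid S_1\cup\cdots\cup S_{j-1})\le (j-1)\theta\,I(S_j)$ to obtain $I(S)\ge\sum_{j}[1-(j-1)\theta]I(S_j)$; it then applies an averaging trick (the tail averages of a decreasing sequence lie below the global mean) to rewrite this in symmetric form. Your Bonferroni-plus-symmetric-charging argument bypasses both the ordering and the averaging. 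The two concerns you flag are not real obstacles: the second-order truncation $1-\prod_i(1-a_i)\ge\sum_i a_i-\sum_{i<j}a_ia_j$ is the standard Bonferroni lower bound for independent events, which applies here because the $S_i$ are disjoint and Equation~\eqref{inftoset} makes $p(S,t)$ the union probability of independent billboard events; and the monotonicity $\Omega(S_i,S_j)\le\Omega(S_i\mid C_j)$ is immediate once one observes that under this product model $\Omega(A\mid B)=\sum_t p(A,t)p(B,t)$, so $p(S_j,t)\le p(C_j,t)$ suffices. The paper's sequential approach has the minor feature that it invokes the overlap bound only in one direction per step, whereas your charging uses both $\Omega(S_i,S_j)\le\theta I(S_i)$ and $\le\theta I(S_j)$; but since Definition~\ref{theta_partition} is symmetric in $i,j$ this costs nothing, and overall your derivation is the more transparent of the two.
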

\begin{proof}
	To facilitate our proof, we assume $S=\{S_1,S_2...S_k\}$ and $\ifl(S_1) \geq,...,\geq \ifl(S_k)$. Let $\overline{\ifl(S)}$ denote the average influence among all $S_i \in S$, i.e., $\overline{\ifl(S)}=\frac{1}{k}\sum\nolimits_{j = 1}^k I(S_j)$. According to Definition \ref{theta_partition}, we observe that $\ifl(S_i \cup S_j) \geq \ifl(S_i) + (1-\theta)\ifl(S_j)$, as each subset of $S_j$ has at most $\theta$ percent of influence overlapping with the elements of $S_i$, or vice versa. Then for all subsets of $S$, we have:
	\begin{align*}		
	\ifl (S)  &  \geq  \ifl(S_1) + (1 - \theta )\ifl(S_2) +(1 - 2\theta )\ifl(S_3) ... + [1 - (k- 1)\theta]\ifl(S_k) \\
	&  =  \sum\nolimits_{i = 1}^k {\ifl(S_i)}  - \theta [\ifl(S_2) + 2\ifl(S_3)+... + (k - 1)\ifl(S_k)]\\
	&  = \sum\nolimits_{i = 1}^k {\ifl(S_i)}  - \theta [\sum\nolimits_{i = 2}^k {\ifl(S_i)} + \sum\nolimits_{i = 3}^k {\ifl(S_i)}+...+\ifl(S_k)]\\
	& \geq \sum\nolimits_{i = 1}^k {\ifl(S_i)}  - \theta [\overline {\ifl(S)} + 2\overline {\ifl(S)}+ ... + (k - 1)\overline {\ifl(S)}]\\
	&  =  \sum\nolimits_{i = 1}^k {\ifl(S_i)}  - \theta\frac{k(k - 1)}{2}\overline {\ifl(S)} 
	\end{align*}
	The second inequality above follows from the fact that $\overline {\ifl(S)} \geq \frac{1}{k-1}\sum\nolimits_{i = j}^k {\ifl(S_i)}$ for $j=2,3...k$, because we have assumed $\ifl(S_1) \geq,...,\geq \ifl(S_k)$.
	As $k \leq \frac{1}{\theta}+1$, we have $\theta \frac{{k(k - 1)}}{2}\overline {\ifl(S)} \leq \frac{{k}}{2}\overline {\ifl(S)}$ and
	\begin{align*}		
	\ifl (S)  & \geq \sum\nolimits_{i = 1}^k {\ifl(S_i)}  - \frac{k}{2}\overline {\ifl(S)} = 1/2\sum\nolimits_{S_i \in S} {\ifl(S_i)}
	\end{align*}
\end{proof}

Based on Lemma~\ref{lem_for_guar}, we proceed to derive the approximation ratio of Algorithm \ref{PBA} in Theorem \ref{PBA_approximation}.

\begin{thm}\label{PBA_approximation}
	Given a $\theta$-partition $P=\{C_1,..,C_m\}$, Algorithm~\ref{PBA} obtains a $\frac{1}{2}^{\lceil \log _{(1 + 1/\theta )}m\rceil}(1-1/e)$-approximation to the \problem problem. 
	
	\begin{proof}
		Let $S^*$ and {$S=S_1\cup S_2\cup\cdots\cup S_k$} $S$ be the solution returned by Algorithm \ref{greedy_enumerate} and Algorithm \ref{PBA} respectively, where $S_i=S \cap C_i$ and $i \leq k \leq m$, i.e., $S=S_1\cup S_2\cup\cdots\cup S_k$. 
		
		When $\theta=0$, we have $\ifl(S)=\sum\nolimits_{i = 1}^k {\ifl(S_i)}$. As $\sum\nolimits_{i = 1}^k {\ifl(S_i)}$ is the maximum value of Equation \ref{PBIM}, thus $\ifl(S)=\sum\nolimits_{i = 1}^k {\ifl(S_i)} \geq \ifl(S^*)$.
		Moreover, $\ifl(S^*) \geq (1-1/e)\ifl(\opt)$ since $S^*$ is returned by Algorithm \ref{greedy_enumerate}, thus $\ifl(S) \geq (1-1/e)\ifl(\opt)$ and the theorem holds.

		When $\theta > 0$, we have $\ifl(S) \leq \sum\nolimits_{i = 1}^k {\ifl(S_i)}$. In this case, let us consider an iterative process. 
		At iteration $0$, we denote $S^0$ as a set of billboard clusters in which cluster $S_j^0$ corresponds $S_j$. In each iteration $h$, we arbitrarily partition the clusters in $S^{h-1}$ and merge each partition to form new disjoint clusters for $S^h$. Each cluster $S^h_j$ in $S^h$ contains at most $(1+1/\theta)$ clusters from $S^{h-1}$. We note that the clusters in $S^h$ are always $\theta$-partitions since each cluster is recursively merged from $S^{h-1}$ and the clusters in $S^0$ are $\theta$-partitions. Thus, according to Lemma \ref{lem_for_guar}, we have the invariant $\ifl(S^h_j)\geq 1/2\sum\nolimits_{S^{h-1}_x \in S^{h}_j} \ifl(S^{h-1}_x)$. The iterative process can only repeat for $d$ times until no clusters can be merged.	
		Intuitively, $d$ should not exceed $\lceil\log _{(1 + 1/\theta )}m\rceil$ (as $k \leq m$) and thus $\ifl(S^d)\geq \frac{1}{2}^{\lceil\log _{(1 + 1/\theta )}m\rceil}\sum\nolimits_{i = 1}^k {\ifl(S_i)}$. Moreover, $S^d$ only has one billboard set $S_1^d$, then $\ifl(S^d)=\ifl(S_1^d)$ and $S_1^d$ is equal to $S$. Therefore, we have $\ifl(S)\geq \frac{1}{2}^{\lceil\log _{(1 + 1/\theta )}m\rceil}\sum\nolimits_{i = 1}^k {\ifl(S_i)}$. Moreover, as $\sum\nolimits_{i = 1}^k {\ifl(S_i)} \geq \ifl(S^*) \geq (1-1/e)\ifl(\opt)$, we conclude that $\ifl(S)\geq\frac{1}{2}^{\lceil\log _{(1 + 1/\theta )}m\rceil}(1-1/e)\ifl(\opt)$.	
	\end{proof}
\end{thm}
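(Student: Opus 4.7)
The plan is to decompose the \psel output $S$ along the partition as $S = S_1 \cup \cdots \cup S_k$ where $S_i = S \cap C_i$ and $k \le m$, and to decompose an optimal solution as $\opt = O_1 \cup \cdots \cup O_m$ where $O_i = \opt \cap C_i$. The argument then splits into two inequalities that chain together: first lower-bound $\sum_i \ifl(S_i)$ in terms of $\ifl(\opt)$, and then lower-bound $\ifl(S)$ in terms of $\sum_i \ifl(S_i)$.

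For the first inequality, I would exploit DP optimality together with the local $(1-1/e)$ guarantee. The DP recursion (Equation~\ref{recursion}) ranges over every budget allocation $(l_1,\ldots,l_m)$ with $\sum_i l_i \le \budget$; in particular it considers the allocation $l_i = \cost(O_i)$ induced by $\opt$. Since $\enumgreedy(C_i,l_i)$ is $(1-1/e)$-approximate within cluster $C_i$, we have $\limatrix[i][\cost(O_i)] \ge (1-1/e)\,\ifl(O_i)$. Therefore $\sum_i \ifl(S_i) = \sum_i \limatrix[i][l_i] \ge (1-1/e)\sum_i \ifl(O_i)$, and subadditivity of $\ifl$ (a consequence of its submodularity, visible from Equation~\ref{inftoset}) gives $\sum_i \ifl(O_i) \ge \ifl(\opt)$.

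For the second inequality, I would treat the case $\theta=0$ and $\theta>0$ separately. When $\theta=0$ the clusters induce disjoint influence, so $\ifl(S) = \sum_i \ifl(S_i)$ and the theorem follows with the stronger $(1-1/e)$ factor. When $\theta>0$ I would use a merging-tree argument: set $S^{(0)}=\{S_1,\ldots,S_k\}$, and in each round $h$ arbitrarily group the current pieces into batches of size at most $1+1/\theta$ and merge each batch. By Lemma~\ref{lem_for_guar} applied inside a single batch, the influence of the merged set is at least $\tfrac{1}{2}$ of the sum of the influences of its pieces, so after $d$ rounds we retain a $(1/2)^d$ fraction of $\sum_i \ifl(S_i)$. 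Since each round shrinks the number of pieces by a factor of at least $1+1/\theta$ and we begin with $k\le m$ pieces, $d \le \lceil \log_{1+1/\theta} m\rceil$ rounds reduce the collection to a single super-cluster, which is exactly $S$. Chaining the two bounds gives $\ifl(S) \ge (1/2)^{\lceil \log_{1+1/\theta} m\rceil}(1-1/e)\,\ifl(\opt)$.

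The main obstacle is the merging step: Lemma~\ref{lem_for_guar} is phrased for subsets drawn from the \emph{original} $\theta$-partition, so I must argue that after the first round of merging the super-clusters still satisfy pairwise overlap ratio $\le \theta$ in the sense required by the lemma's proof. This reduces to showing that the inequality $\ifl(A \cup B) \ge \ifl(A) + (1-\theta)\ifl(B)$, used repeatedly inside the proof of Lemma~\ref{lem_for_guar}, remains valid when $A$ and $B$ are unions of clusters from $P$; this follows because the overlap ratio in Definition~\ref{ratio} is taken as a supremum over all subsets, so overlaps of unions are controlled by overlaps of their components. Once this is verified, the remainder of the proof is bookkeeping on the tree of merges plus the DP optimality already established.
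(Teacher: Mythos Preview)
Your proposal is essentially the same as the paper's proof in its core structure: decompose $S$ along the partition, establish $\sum_i \ifl(S_i)\ge (1-1/e)\ifl(\opt)$, and then relate $\ifl(S)$ to $\sum_i \ifl(S_i)$ via the same iterative merging-tree argument using Lemma~\ref{lem_for_guar}, with the number of rounds bounded by $\lceil\log_{1+1/\theta} m\rceil$.

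The one genuine difference is in the first inequality. The paper routes it through $S^*=\enumgreedy(\ur,\budget)$, asserting $\sum_i \ifl(S_i)\ge \ifl(S^*)\ge(1-1/e)\ifl(\opt)$; the first of these steps is not fully justified in the paper (it would require the local $\enumgreedy$ outputs to dominate the per-cluster pieces of $S^*$, which is not immediate since $\enumgreedy$ is only $(1-1/e)$-approximate). Your route---decomposing $\opt$ per cluster, feeding the induced allocation $l_i=\cost(O_i)$ into the DP, and invoking the local $(1-1/e)$ guarantee together with subadditivity of $\ifl$---is more direct and cleanly sidesteps that step. You are also more explicit than the paper about the need to verify that Lemma~\ref{lem_for_guar} (specifically the inequality $\ifl(A\cup B)\ge \ifl(A)+(1-\theta)\ifl(B)$) continues to apply after clusters have been merged; the paper simply asserts that the super-clusters remain a $\theta$-partition, whereas you correctly trace it back to the supremum-over-subsets form of Definition~\ref{ratio}.
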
 

Figure~\ref{fig:prove_thm42} presents a running example to explain the iteration process in the above proof. In this example, $S$ contains 9 clusters and $\theta=0.5$. At iteration 0, $S^0$ is initialized by $S$;  At iteration 1, each cluster of $S^1$ is generated by randomly merging $(1/\theta+1=3)$ clusters in $S^0$, i.e., $S_1^1=\{S_1^0 \cup S_2^0 \cup S_3^0\}$. According to Lemma \ref{lem_for_guar}, we have $\ifl(S_j^1) \geq 1/2\sum\nolimits_{S_x^0 \in S_j^1} {\ifl(S_x^0)}$, i.e., $\ifl(S_1^1) \geq \frac{1}{2}(\ifl(S_1^0)+\ifl(S_2^0)+\ifl(S_3^0))$. As $S^1$ only contains $(1/\theta+1=3)$ clusters, the second iteration merges all the clusters in $S^1$ into one cluster $S_1^2$. Since $\ifl(S_1^2) \geq 1/2\sum\nolimits_{S_x^1 \in S_j^2} {\ifl(S_x^1)}$, we have $\ifl(S_1^2) \geq 1/4\sum\nolimits_{j=1}^9 {\ifl(S_j^0)}$.

\begin{figure}
	\centering
	\includegraphics[width=0.85\linewidth]{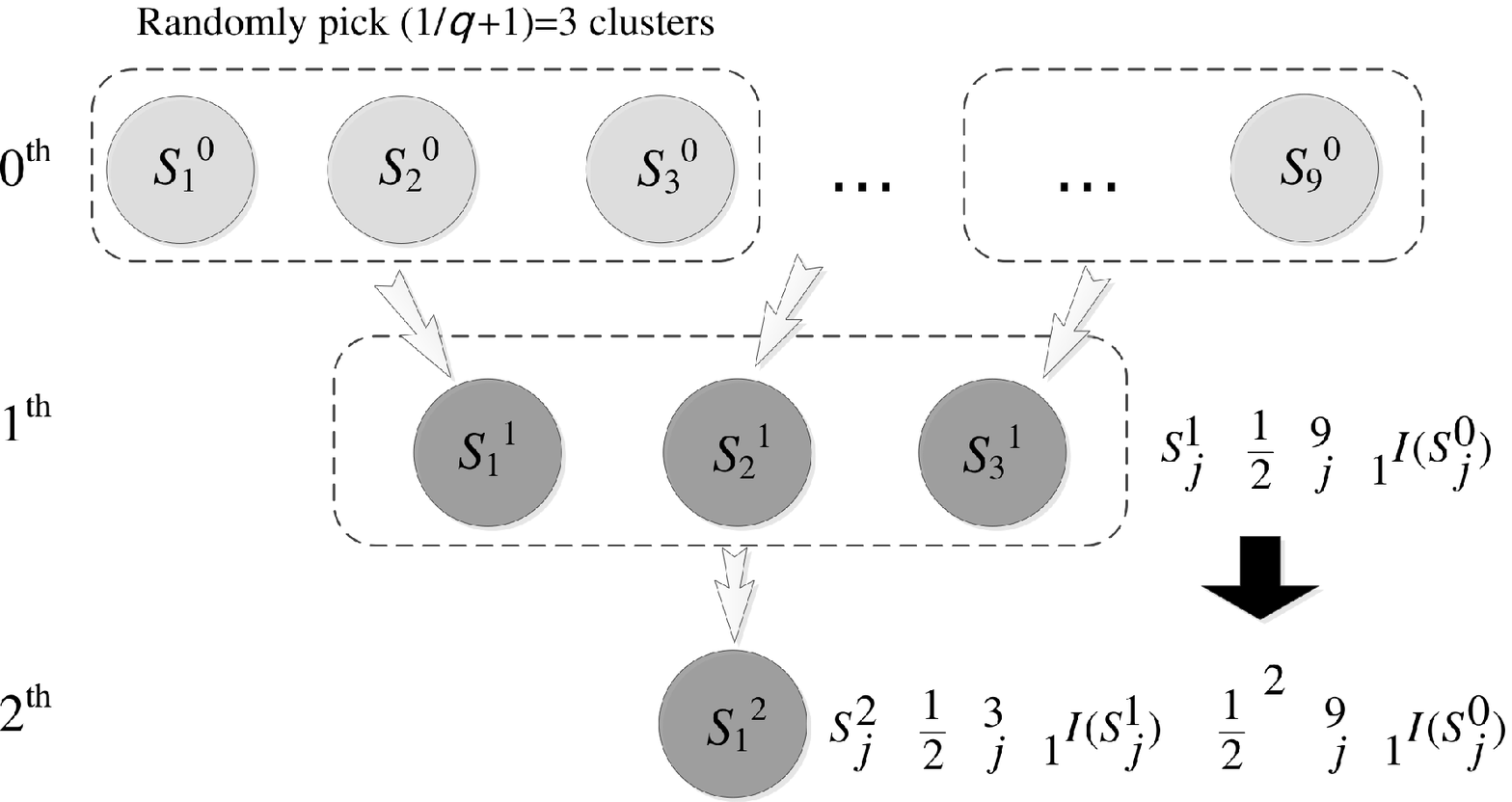}
	\caption{An example to explain the proof of Theorem \ref{PBA_approximation}}
	\label{fig:prove_thm42}
\end{figure}


\subsection{Finding a $\theta$-partition}\label{sec:findpartition}

It is worth noting that there may exist multiple $\theta$-partitions of $\ur$ (e.g., $\ur$ is a trivial $\theta$-partition). 
Recall Section~\ref{sec:partition_selection}, 
the time complexity of the partition based method (Algorithm~\ref{PBA}) is $O(m\budget \cdot |\mathcal{\td}|\cdot|C_m|^5)$, 
where $|C_m|$ is the size of the largest cluster in a partition $P$. Therefore, $|C_m|$ is an indicator of how good a $\theta$-partition is, and we want to minimize $|C_m|$. Unfortunately, finding a good $\theta$-partition is not trivial, since the it can be modeled as the balanced $k$-cut problem where each vertex in the graph is a billboard and each edge denotes two billboards with influence overlap, which is found to be NP-hard \cite{Wagner1993}.  Therefore, we use an approximate $\theta$-partition  by employing a hierarchical clustering algorithm~\cite{murtagh1983survey}.  It first initializes each billboard as its own cluster, then it iteratively merges these two clusters into one, if their overlap ratio (Equation \ref{ratio_eq}) is larger than $\theta$. That is, for each pair of clusters $C_i, C_j \subseteq U$, if $\ratio_{\svar{ij}}$ is larger than $\theta$, then $C_i$ and $C_j$ will be merged. By repeating this process, an approximate $\theta$-partition is obtained when no cluster in $U$ can be merged. 

Note that how to efficiently get a $\theta$ partition is not the key point of this paper and it can be processed offline; while our focus is how to find the influential billboards based on a $\theta$-partition.

\section{Lazy Probe}\label{sec:bound-selection}

We first propose our lazy probe algorithm to reduce the number of calls to $\enumgreedy(C_i,l)$ in Section~\ref{sec:bound-algo}, and then establish the theoretical equivalence on approximation ratio between \bbsel and \enumgreedy in Section~\ref{sec:bound-analysis}. 

\subsection{The Lazy Probe Algorithm}\label{sec:bound-algo}


\setlength{\algomargin}{1.2em} 	
\begin{algorithm}[!t]
		\caption{LazyProbe$(P, \budget)$}
		\label{bound_based}
		\begin{small}
			
		{\bf Input:} A $\theta$-partition $P$ of $U$, budget $\budget$
			
		{\bf Output:} A billboard set $S$
	
Initialize two matrices $\gimatrix$ and $\limatrix$ 


\For {$i=1$ to $m$}
{	
	
	\For{$l=1$ to $\emph\budget$}
	{
		$\gimatrix^\downarrow[i][l] \gets \gimatrix[i-1][l]$
		
		
		\For{$q=1$ to $l$}
		{
		
	        $\upperlimatrix[i][q] \gets \textbf{\estbod}(C_i,q)$ 
			
			
			\uIf{$\gimatrix^\downarrow[i][l] \leq \gimatrix[i-1][l-q] +\upperlimatrix[i][q]$}
			{
				
				\uIf{$\limatrix[i][q]$ has not been computed}{
				
					
					 Invoke $\textbf{\enumgreedy}(C_i,q)$ to compute $\limatrix[i][q]$ 		
				}
				
				\textbf{Update} $\gimatrix^\downarrow[i][l]$ \textbf{by} $\gimatrix[i-1][l-q]+\limatrix[i][q]$
				
			}
			\Else 
			{	
				contiune;	
			}	
		}
		$\gimatrix[i][l] \gets \gimatrix^\downarrow[i][l]$
		
	}		
}

$S \leftarrow$ the corresponding selected set of $\gimatrix[i][l]$

\Return {$S$}
		\end{small}
	\end{algorithm}

\begin{algorithm}[!t]
\SetAlgorithmName{Function}{Function}{Function}
\caption{EstimateBound($\ur, \budget$)} 
\label{mgreedy}
\begin{small}
{\bf Input:} A billboard set $\ur$, a budget $\budget$

{\bf Output:} An influence estimator $\upperlimatrix[i][q]$

		
		$S'=$ \gre$(U,L)$;
		
		$b_{k+1} $ is the next billboard with the largest unit marginal influence;
		
		 
		 			
		 
		
		
		$\upperlimatrix[i][q]=\ifl(S')+\frac{\marginifl(b_{k+1}|S')}{L-\cost(S')}$;

\Return $\upperlimatrix[i][q]$; 	
	\end{small}
\end{algorithm}

Recall that $\gimatrix[i][l]$ is the maximum influence value that can be attained with a budget not exceeding $l$ using up to the first $i$ clusters (in Section~\ref{sec:partition_selection}), and all clusters are processed in an order of their size (from the smallest to the largest by Definition~\ref{partition}). As mentioned in Section \ref{sec:partition_method}, 
$\gimatrix[i][l]= \mathop {\max }_{0 \le \svar{q} \le l} (\gimatrix[i - 1][l - q] + \limatrix[i][q] )$, we need to find a $q$ ($0\leq q \leq l$) to maximize this influence. Note that $\gimatrix[i - 1][l - q]$ can be easily gotten in the previous computation, but it is expensive to compute $\limatrix[i][q]$ by calling algorithm \enumgreedy. To address this issue, instead of computing the exact influence $\limatrix[i][q]$ in cluster $C_i$, we can estimate an upper bound of $\limatrix[i][q]$ for $0 \leq q \leq l$ (denoted by $\upperlimatrix[i][q]$), and then prune the $q$ that cannot get larger influence by bound comparison. 

Algorithm~\ref{bound_based} describes how our method works. 
Similar to \psel, we employ a dynamic programming approach to compute the selected billboard set and its influence value for each cluster $i$ and each cost $l$. 
However, the difference is that we first compute the lower bound $\gimatrix^\downarrow[i][l]$ of $\gimatrix[i][l]$. Obviously $\gimatrix^\downarrow[i][l]=\gimatrix[i-1][l]$  is a naive lower bound by setting $q=0$ (line 4.6). Initially, when $i=1$, for all $l \leq \budget$, we have $\gimatrix^\downarrow[i-1][l]=\gimatrix[0][l]=0$.
Then we compute an upper bound $\upperlimatrix[i][q]$ from $q=0$ to $q=l$ by calling function \texttt{EstimateBound}, which will be discussed later. 
Next if $\gimatrix^\downarrow[i][l]\geq\gimatrix[i - 1][l - q] + \upperlimatrix[i][q]$, we do not need to compute $\limatrix[i][q]$, because we cannot increase the influence using cluster $C_i$, and thus we can save the cost of calling \enumgreedy (lines 4.13-4.14). If $\gimatrix^\downarrow[i][l]<\gimatrix[i - 1][l - q] + \upperlimatrix[i][q]$, we need to compute $\limatrix[i][q]$, by calling $\enumgreedy(C_i,q)$, and  update $\gimatrix^\downarrow[i][l]=\gimatrix[i-1][l-q]+\limatrix[i][q]$  (lines 4.9-4.12).
Finally, we set $\gimatrix[i][l]$ as $\gimatrix^\downarrow[i][l]$ since we already know $\gimatrix^\downarrow[i][l]$ is good enough to obtain the solution with a guaranteed approximation ratio (line 4.16), and return the corresponding selected billboard set as $S$ (line 4.17).

\noindent {\bf Estimation of Upper Bound  $\upperlimatrix[i][q]$.} A key challenge to ensure the approximation ratio of \bbsel is to get a tight upper bound $\upperlimatrix[i][q]$. Unfortunately, we observe that it is hard to obtain a tight upper bound efficiently due to the overlap influence among billboards. Fortunately, we can get an approximate bound, with which our algorithm can still guarantee the $(1-1/e)$ approximation ratio (see Section~\ref{sec:bound-analysis}). 
To achieve this goal, we first utilize the basic greedy algorithm \gre to select the billboards $S'=\{b_1,b_2,\ldots, b_k\}$. Let $b_{k+1}$ be the next billboard with the maximal marginal influence. If we include $b_{k+1}$ in the selected billboards, then the cost will exceed $L$. If we do not include it, we will lost the cost of $L-\cost(S')$ where $\cost(S')=\sum_{1\leq i\leq k}\cost(b_i)$. Then we can utilize the unit influence of $b_{k+1}$ to remedy the lost cost, and thus we can get an upper bound $\upperlimatrix[i][q]=\ifl(S')+\frac{\ifl(S'\cup\{b_{k+1}\})-\ifl(S')}{L-\cost(S')}$. 
We later show that $\upperlimatrix[i][q]\geq (1-1/e) \limatrix[i][q]$.  
Moreover, the solution quality of Algorithm~\ref{bound_based} remains the same as Algorithm~\ref{PBA}. The details of the theoretical analysis will be presented in Section~\ref{sec:bound-analysis}.


\begin{figure*}[!t]
\vspace{-1em}
	\centering
	\includegraphics[width=0.75\linewidth]{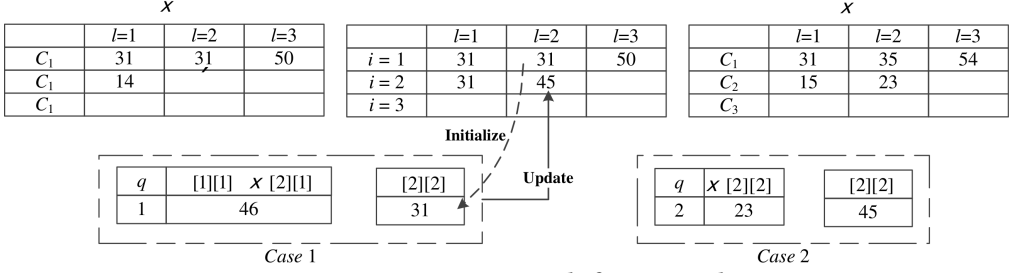}
		 \vspace{-10pt}
	\caption{A running example for \bbsel 
	}

	\label{fig:bbsel_example}
\end{figure*}

\begin{example}
	Figure~\ref{fig:bbsel_example} shows an example on how Algorithm~\ref{bound_based} works. There are three clusters $C_1$, $C_2$ and $C_3$ in a partition $P$, and the estimator matrix $\upperlimatrix$ is shown in upper right corner. When $C_i$ ($i=1,2,3$) is considered, it computes $\gimatrix[i][l]$, for each $l=1,..,\budget$, by the bound comparisons. Taking $\gimatrix[2][2]$ as an example, Algorithm \ref{bound_based} first initializes $\gimatrix^\downarrow[2][2]=\gimatrix[1][2]$ and then computes $\gimatrix[1][2-q]+\upperlimatrix[2][q]$ ($q=1,2$ ) for bound comparisons. For case 1 ($q=1$), as $\gimatrix^\downarrow[2][2] \leq \gimatrix^\uparrow[2][2]$, Algorithm~\ref{bound_based} needs to compute $\limatrix[2][1]$ by invoking \enumgreedy and update $\gimatrix^\downarrow[2][2]$ by $\gimatrix[1][1]+\limatrix[2][1]$. For case 2 ($q=2$), since $\gimatrix^\downarrow[2][2]\geq \gimatrix^\uparrow[2][2]$, $\gimatrix^\downarrow[2][2]$ does not need to be updated and finally $\gimatrix[2][2]=\gimatrix^\downarrow[2][2]=45$. 
\end{example}

\ping{The complexity of \bbsel is the same as \psel at the worst case, but the pruning strategy actually can work well and reduce the running time greatly (as evidenced in Section~\ref{sec:exp}).}

\subsection{Theoretical Analysis}\label{sec:bound-analysis}

In this section, we conduct theoretical analysis to establish the equivalence between \bbsel and \psel in terms of the approximation ratio.
We first show that if the bound $\upperlimatrix[i][l]$ in \bbsel is $(1-1/e)$ approximate to $\problem$ instance of billboards in cluster $i$ using budget $l$,
then the approximation ratio of $\bbsel$ and $\psel$ is the same (Theorem~\ref{theor_bound}). We then move on to show that $\upperlimatrix[i][l]$ is indeed $(1-1/e)$-approximate in Lemmas~\ref{gred_bound}-\ref{p_bound_greedy}.

\begin{thm}\label{theor_bound}
If $\upperlimatrix[i][l]$ obtained by Algorithm~\ref{mgreedy} achieves a $(1-1/e)$ approximation ratio to the \problem instance
for cluster $i$ with budget $l$, \bbsel ensures the same approximation ratio with \psel presented in Section~\ref{sec:partition_selection}.
\vspace{-.5em}
\begin{proof}
Let $C_i$ denote the the $i$th cluster considered by \bbsel and $U_i=\bigcup_{j=1}^i{C_j}$.
To prove the correctness of this theorem, we first prove $\gimatrix[i][l] \geq (1-1/e)\ifl(\opt_{U_i}^l)$ for all $i \leq m$ and $l \leq \budget$, where $\opt_{U_i}^l$ is the optimal solution of the \problem instance for billboard set $U_i$ with budget $l$. Clearly, if this assumption holds, we have $\sum\nolimits_{i = 1}^k {\ifl(S_i)} \geq (1-1/e)\ifl(\opt)$ ($\opt$ is the globally optimal solution). $S=\{S_1,...,S_k\}$ is the solution returned by \bbsel. $S_i=S \cap C_i$.

We prove it by mathematical induction. When $i=0$, this assumption holds immediately. When $i>0$, we assume that the assumption holds for the first $i$th recursion, and prove it still holds for the ($i+1$)th recursion. According to the definition, we have $\upperlimatrix[i+1][l] \geq (1-1/e)\ifl(\opt_{C_{i+1}}^l)$. Moreover, we have already assumed $\gimatrix[i][l] \geq (1-1/e)\ifl(\opt_{U_i}^l)$ ($l=1,...,\budget$), thus $\gimatrix[i+1][l]=\mathop { \max }\limits_{0 \le q \le l} \{\gimatrix[i][l - q] + \upperlimatrix[i+1][q]\} \geq (1-1/e)\mathop{ \max }\limits_{0 \le q \le l} \{\ifl(\opt_{U_i}^{l-q})+ \ifl(\opt_{C_{i+1}}^q)\}$. Moreover, as $\mathop{ \max }\limits_{0 \le q \le l} \{\ifl(\opt_{U_i}^{l-q} )+ \ifl(\opt_{C_{i+1}}^q)\} \geq \ifl(\opt_{U_{i+1}}^l)$, we have $\gimatrix[i+1][l] \geq (1-1/e)\ifl(\opt_{U_{i+1}}^l)$. The assumption gets proof.

Since $S$ comes from $k$ clusters and $k \leq m$, we can conclude that $\ifl(S) \leq \frac{1}{2}^{\lceil \log _{(1 + 1/\theta)}m\rceil}\sum\nolimits_{i = 1}^k {\ifl(S_i)}$. We omit to prove this conclusion here since the proof is similar to that of Theorem \ref{PBA_approximation}. Moreover, as $\sum\nolimits_{i = 1}^k {\ifl(S_i)} \geq (1-1/e)\ifl(\opt)$ (the assumption holds), thus $\ifl(S) \leq \frac{1}{2}^{\lceil \log _{(1 + 1/\theta)}m\rceil}(1-1/e)\ifl(\opt)$.
\end{proof}
\end{thm}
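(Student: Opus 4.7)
My plan is to mirror the two-layer structure of the proof of Theorem~\ref{PBA_approximation}: first establish a per-row invariant that the dynamic-programming table $\gimatrix$ is $(1-1/e)$-approximate to the \problem instance restricted to the union of the clusters processed so far, and then fold in the $\theta$-partition merging argument to conclude the same $\tfrac{1}{2}^{\lceil \log_{(1+1/\theta)} m\rceil}(1-1/e)$ ratio that \psel enjoys. The key conceptual point is that the bound $\upperlimatrix[i][q]$ never under-reports a cluster's achievable influence (up to the $(1-1/e)$ factor), so the pruning test $\gimatrix^\downarrow[i][l] \le \gimatrix[i-1][l-q] + \upperlimatrix[i][q]$ only discards values of $q$ that cannot improve the global DP entry.

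First I would fix notation: let $U_i = \bigcup_{j=1}^{i} C_j$ and let $\opt_{U_i}^l$ be the optimal \problem solution using only billboards in $U_i$ with budget $l$. The main inductive claim will be
\begin{equation*}
\gimatrix[i][l] \ \ge\ (1-1/e)\,\ifl\bigl(\opt_{U_i}^l\bigr)\qquad \text{for all } i \le m,\ l \le \budget.
\end{equation*}
The base case $i=0$ is immediate since both sides are $0$. For the inductive step, I would split $\opt_{U_{i+1}}^l$ into the part $S^\ast_{i+1}$ falling in $C_{i+1}$ with some cost $q^\ast$ and the remainder falling in $U_i$ with cost $l-q^\ast$; the remainder is feasible for the \problem instance on $U_i$ with budget $l-q^\ast$, so $\ifl(\opt_{U_i}^{l-q^\ast}) \ge \ifl(\opt_{U_{i+1}}^l) - \ifl(S^\ast_{i+1})$. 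Combining with the induction hypothesis and the hypothesis of the theorem ($\upperlimatrix[i+1][q^\ast] \ge (1-1/e)\ifl(\opt_{C_{i+1}}^{q^\ast}) \ge (1-1/e)\ifl(S^\ast_{i+1})$), the sum $\gimatrix[i][l-q^\ast] + \upperlimatrix[i+1][q^\ast]$ is at least $(1-1/e)\ifl(\opt_{U_{i+1}}^l)$. I then need to argue that \bbsel actually records a value at least this large in $\gimatrix[i+1][l]$: whenever the pruning test fails for the optimal $q^\ast$, the real value $\limatrix[i+1][q^\ast]$ is computed and used, while if the test succeeds the already-stored $\gimatrix^\downarrow[i+1][l]$ exceeds the upper-bound sum, so in either case $\gimatrix[i+1][l]$ dominates $\gimatrix[i][l-q^\ast] + \upperlimatrix[i+1][q^\ast]$.

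With the invariant in hand, I would plug $i=m$, $l=\budget$ to obtain $\sum_{i=1}^{k} \ifl(S_i) \ge \gimatrix[m][\budget] \ge (1-1/e)\ifl(\opt)$, where $S = \bigcup_{i=1}^{k} S_i$ with $S_i = S\cap C_i$ is the returned solution. Finally, since $P$ is a $\theta$-partition and $k \le m$, the same iterative merging argument used in the proof of Theorem~\ref{PBA_approximation} (repeatedly grouping up to $(1+1/\theta)$ clusters via Lemma~\ref{lem_for_guar}) yields $\ifl(S) \ge \tfrac{1}{2}^{\lceil \log_{(1+1/\theta)} m\rceil}\sum_{i=1}^{k} \ifl(S_i)$, giving the claimed ratio.

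The main obstacle I anticipate is the ``pruning preserves optimality'' step: one has to be careful that the DP ordering (lower bound $\gimatrix^\downarrow[i][l]$ initialized from $\gimatrix[i-1][l]$ and then monotonically increased across $q$) really does record a value dominating $\gimatrix[i-1][l-q^\ast] + \upperlimatrix[i+1][q^\ast]$ at the specific $q^\ast$ of interest, even though $\upperlimatrix$ only appears on the right-hand side of the pruning inequality. Writing this out cleanly requires a short case analysis on whether the test at $q=q^\ast$ succeeds or fails, which is conceptually straightforward but must be done precisely so that the induction carries through. Everything else reduces to textual reuse of the Theorem~\ref{PBA_approximation} argument applied to the inductive bound rather than the exact local influences.
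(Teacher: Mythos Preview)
Your proposal is correct and follows essentially the same route as the paper: establish the inductive invariant $\gimatrix[i][l] \ge (1-1/e)\,\ifl(\opt_{U_i}^l)$ and then invoke the iterative-merging argument of Theorem~\ref{PBA_approximation}. One small fix to your case analysis: when the pruning test fails at $q^\ast$ you only obtain $\gimatrix[i+1][l] \ge \gimatrix[i][l-q^\ast] + \limatrix[i+1][q^\ast]$ (not the $\upperlimatrix$ sum you wrote), but since $\limatrix[i+1][q^\ast] \ge (1-1/e)\,\ifl(\opt_{C_{i+1}}^{q^\ast})$ via \enumgreedy the inductive claim still goes through---indeed the paper's own proof simply writes $\gimatrix[i+1][l] = \max_q\{\gimatrix[i][l-q] + \upperlimatrix[i+1][q]\}$ and glosses over the very pruning distinction you took care to flag.
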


Theorem~\ref{theor_bound} requires that $\upperlimatrix[i][l]$ is $(1-1/e)$-approximate to the corresponding \problem instance in a small cluster. 
To show that $\upperlimatrix[i][l]$ returned by Algorithm~\ref{mgreedy} satisfies such requirement, we 
describe the following hypothetical scenario for running Algorithm~\ref{greedy} on the 
\problem instance for cluster $C_i$ and budget $l$. 
Let $\bb_{k^*+1}$ be a billboard in the optimal solution set, and it is the first billboard that violates the budget constraint in  Algorithm~\ref{greedy}. 
The following inequality holds \cite{khuller1999budgeted}.

\begin{lem}\label{gred_bound} \cite{khuller1999budgeted}
	\label{khuller}
After the $i$th iteration ($i=1,...,k^*+1$) of the hypothetical scenario running Algorithm~\ref{greedy}, the following holds:
	\begin{equation}
	I({S_i}) \ge [1 - \prod\nolimits_{j = 1}^i {(1 - \frac{\emph\cost(\emph\bb_j)}{\emph\budget})} ] \cdot 
	\ifl(\opt)
	\end{equation}
	Where $S_i$ be the billboard set that is selected by the first $i$ iterations of the hypothetical scenario.
\end{lem}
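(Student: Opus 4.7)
The plan is to prove the lemma by induction on $i$, following the classical argument for the cost-effective greedy on submodular budgeted maximization. The two ingredients I will need are (i) submodularity of $\ifl(\cdot)$, which is immediate from Equation~\ref{inftoset} since $\pr(S,\tr_j)$ is a probabilistic-OR style coverage function, and (ii) the greedy selection rule of Algorithm~\ref{greedy}, which picks $\bb_j$ maximizing $\Delta(\bb_j|S_{j-1})/\cost(\{\bb_j\})$ among candidates in $C_i \setminus S_{j-1}$.

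The key inequality I would establish first is a per-step progress lemma: at iteration $j \leq k^*+1$,
\begin{equation*}
\Delta(\bb_j \mid S_{j-1}) \;\geq\; \frac{\cost(\bb_j)}{\budget}\bigl[\ifl(\opt) - \ifl(S_{j-1})\bigr].
\end{equation*}
To derive it, I use submodularity to write $\ifl(\opt) - \ifl(S_{j-1}) \leq \ifl(\opt \cup S_{j-1}) - \ifl(S_{j-1}) \leq \sum_{\bb \in \opt}\Delta(\bb \mid S_{j-1})$, then apply the greedy rule $\Delta(\bb \mid S_{j-1})/\cost(\bb) \leq \Delta(\bb_j \mid S_{j-1})/\cost(\bb_j)$ for every $\bb \in \opt$ (still available as a feasible pick since it has not yet been rejected in the hypothetical scenario up to iteration $k^*+1$). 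Summing gives $\ifl(\opt) - \ifl(S_{j-1}) \leq \frac{\Delta(\bb_j\mid S_{j-1})}{\cost(\bb_j)} \sum_{\bb\in \opt}\cost(\bb) \leq \frac{\Delta(\bb_j\mid S_{j-1})}{\cost(\bb_j)}\cdot \budget$, which rearranges to the claimed inequality.

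With the per-step lemma in hand, the induction is routine. The per-step inequality rewrites as the recursion
\begin{equation*}
\ifl(\opt) - \ifl(S_j) \;\leq\; \Bigl(1 - \tfrac{\cost(\bb_j)}{\budget}\Bigr)\bigl[\ifl(\opt) - \ifl(S_{j-1})\bigr].
\end{equation*}
Iterating from $j=1$ to $i$, starting from $\ifl(\opt) - \ifl(S_0) = \ifl(\opt)$, yields $\ifl(\opt) - \ifl(S_i) \leq \prod_{j=1}^{i}(1-\cost(\bb_j)/\budget)\cdot \ifl(\opt)$, which is exactly the stated bound after rearrangement.

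The main obstacle is the per-step progress step, and specifically the subtle point that it must hold even for the iteration $i = k^*+1$ where $\bb_{k^*+1}$ is the first billboard violating the budget. Because the argument operates in the \emph{hypothetical} scenario in which we ignore the budget check at line 1.4 and simply keep adding the maximum-unit-gain billboard, both $\bb_{k^*+1}$ and every element of $\opt$ remain feasible candidates in the greedy comparison, so the greedy inequality $\Delta(\bb\mid S_{j-1})/\cost(\bb)\leq \Delta(\bb_{j}\mid S_{j-1})/\cost(\bb_{j})$ goes through uniformly for $j=1,\dots,k^*+1$. Once this is spelled out carefully, the rest of the proof is a clean induction and matches the statement of Lemma~\ref{gred_bound}.
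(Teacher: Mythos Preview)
The paper does not supply its own proof of this lemma; it is quoted verbatim as a known result from \cite{khuller1999budgeted}, so there is no in-paper argument to compare against. Your proposal is precisely the classical Khuller--Moss--Naor proof and is correct. One small terminological slip: the step $\ifl(\opt) - \ifl(S_{j-1}) \le \ifl(\opt \cup S_{j-1}) - \ifl(S_{j-1})$ is \emph{monotonicity}, not submodularity; submodularity enters only in the subsequent bound $\ifl(\opt \cup S_{j-1}) - \ifl(S_{j-1}) \le \sum_{\bb\in\opt}\Delta(\bb\mid S_{j-1})$. Also, the greedy comparison is over $\ur\setminus S_{j-1}$ (not $C_i\setminus S_{j-1}$) in the context where the lemma is first invoked, though this does not affect the argument.
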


With lemma~\ref{khuller}, we analyze the solution quality of running the hypothetical scenario by using the $k^*+1$ billboards to deduce $\upperlimatrix[i][l]$. 

\begin{lem}\label{hypo_bound}
Let $\mathcal{M}_{k^*+1}$ denote the unit marginal influence of adding $\emph\bb_{k^*+1}$ in the hypothetical scenario, i.e., 
$\mathcal{M}_{k^*+1} = [\ifl(S_{k^*} \cup \{\emph\bb_{k^*+1}\}) - \ifl(S_{k^*})]/ w(\emph\bb_{k^*+1})$.
Then $\ifl{(S_{k^*})+[\emph\budget-\emph\cost(S_{k^*})]\cdot \mathcal{M}_{k^*+1}} \geq (1-1/e)\ifl(\emph\opt)$. 
\vspace{-.5em}
\begin{proof}
	 First, we observe that for $a_1,...,a_n \in \mathbb{R}^+$  such that $\sum\nolimits_{i=1}^n {a_i}=\alpha A$, the function
	 \begin{equation*}
	 	1-\prod\nolimits_{i=1}^n(1-\frac{a_i}{A})
	 \end{equation*}
	 achieves its minimum of $1-(1-\alpha/n)^n$ when $a_1=a_2=...=a_n=\alpha A/n$.
	 
	 Suppose $\bb'$ is a virtual billboard with cost $\budget-\cost(S_{k^*})$ and the unit marginal influence of $\bb'$ to $S_i$ is $\mathcal{M}_{k^*+1}$, for $i=1,...,k^*$. We modify the instance by adding $\bb'$ into $\ur$ and let $\ur \cup \{\bb'\}$ be denoted by $\ur'$. Then after the first $k^*$th iterations of Algorithm \ref{greedy} on this new instance, $\bb'$ must be selected at the $(k^*+1)$th iteration. As $\budget(S_{k^*})+\cost({\bb'})=\budget$, by applying Lemma \ref{khuller} and the observation to $I(S')$ $(S'= {S_k} \cup \{\bb'\})$, we get:
\vspace{-.6em}	 
	\begin{align*}
	\small		
	I((S') & \ge \left[ 1 -  \prod\nolimits_{j = 1}^{k^*+1} \left( {1 - \frac{w(\bb_j)}{L}} \right) \right] \cdot \ifl(\opt') \\
	& \ge \left( 1 - {(1 - \frac{1}{k^* + 1})^{k^* + 1}}\right)  \cdot \ifl(\opt')\\
	&  \ge (1 - \frac{1}{e}) \cdot \ifl(\opt')
	\end{align*} 
	
    Note that $\ifl(\opt')$ is surely larger than $\ifl(\opt)$, thus $\ifl(S_{k^*})+[\budget-\cost(S_{k^*})]\mathcal{M}_{k^*+1}=\ifl(S')\geq (1 - \frac{1}{e}) \cdot \ifl(\opt') \geq (1 - \frac{1}{e}) \cdot \ifl(\opt)$.
\end{proof}
\end{lem}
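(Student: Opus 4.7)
The plan is to reduce this hypothetical-scenario bound to a direct application of Lemma~\ref{gred_bound} on a slightly augmented instance. The key idea is to fold the ``overflow'' billboard $\bb_{k^*+1}$ together with the leftover capacity into a single virtual billboard that exactly fills the remaining budget, so that the greedy run on the augmented instance terminates after $k^*+1$ steps with total cost precisely $\budget$ and still has $\mathcal{M}_{k^*+1}$ as its final unit marginal influence.

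Concretely, I would first construct a virtual billboard $\bb'$ with cost $\cost(\bb') = \budget - \cost(S_{k^*})$ whose marginal contribution over any prefix $S_j$ ($j \le k^*$) equals $\mathcal{M}_{k^*+1} \cdot \cost(\bb')$, i.e., its unit marginal influence matches $\mathcal{M}_{k^*+1}$. By this choice, running Algorithm~\ref{greedy} on $\ur \cup \{\bb'\}$ still selects the same first $k^*$ billboards (ties broken in favor of the original sequence, since $\bb_{k^*+1}$ was the argmax of unit marginal influence at step $k^*+1$), and then picks $\bb'$ at step $k^*+1$, exhausting the budget exactly. Let $\opt'$ denote the optimum on $\ur \cup \{\bb'\}$; trivially $\ifl(\opt') \ge \ifl(\opt)$ because the feasible region only grows.

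Next I would apply Lemma~\ref{gred_bound} to this augmented instance at iteration $k^*+1$, obtaining
\begin{equation*}
\ifl(S_{k^*} \cup \{\bb'\}) \ge \Bigl[1 - \prod\nolimits_{j=1}^{k^*+1}\bigl(1 - \tfrac{\cost(\bb_j)}{\budget}\bigr)\Bigr]\cdot \ifl(\opt'),
\end{equation*}
where the costs of the selected billboards (with $\bb_{k^*+1}$ replaced by $\bb'$) sum to exactly $\budget$. A standard AM--GM style observation (minimizing $1 - \prod_i (1 - a_i/A)$ subject to $\sum_i a_i = A$) then yields $\prod_{j=1}^{k^*+1}(1 - \cost(\bb_j)/\budget) \le (1 - 1/(k^*+1))^{k^*+1} \le 1/e$. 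Combining this with the identity $\ifl(S_{k^*} \cup \{\bb'\}) = \ifl(S_{k^*}) + [\budget - \cost(S_{k^*})]\cdot \mathcal{M}_{k^*+1}$ and $\ifl(\opt') \ge \ifl(\opt)$ gives exactly the claimed inequality.

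The delicate step, and the main obstacle, is making the virtual billboard $\bb'$ well-defined under the probabilistic influence model of Equation~\ref{inftoset}: we need its marginal contribution to be \emph{exactly} $\mathcal{M}_{k^*+1}\cdot \cost(\bb')$ on every prefix $S_j$ for $j \le k^*$, not merely in expectation. The cleanest way is to treat $\bb'$ purely as an abstract element with a prescribed set-function behavior (it only needs to participate in the inequality, not represent a physical billboard), which is consistent because submodularity of the influence function guarantees that no later greedy step would prefer a real billboard over this idealized $\bb'$. Once $\bb'$ is in place, the rest of the argument is the classical fixed-sum product minimization used in the analysis of budgeted maximum coverage.
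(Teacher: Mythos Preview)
Your proposal is correct and follows essentially the same approach as the paper: introduce a virtual billboard $\bb'$ with cost $\budget-\cost(S_{k^*})$ and unit marginal influence $\mathcal{M}_{k^*+1}$, argue that greedy on the augmented instance selects $S_{k^*}\cup\{\bb'\}$ with total cost exactly $\budget$, then apply Lemma~\ref{gred_bound} together with the AM--GM product bound and $\ifl(\opt')\ge\ifl(\opt)$. You are in fact more careful than the paper about tie-breaking and about the well-definedness of $\bb'$ in the probabilistic model; the paper simply posits $\bb'$ without addressing that point.
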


Finally, we show that the estimator obtained by Algorithm~\ref{mgreedy} is larger than the bound value obtained by 
the hypothetical scenario described in Lemma~\ref{hypo_bound}, which indicates that Algorithm~\ref{mgreedy} is $(1-1/e)$-approximate and it further implies Theorem~\ref{theor_bound} hold.

\begin{lem}	\label{p_bound_greedy}
Given an instance of \problem. Let $\limatrix[i][l]$ be an estimator returned by Algorithm~\ref{mgreedy}, we have $\limatrix[i][l] \geq (1-1/e)\ifl(\opt)$.
	\begin{proof}
	We observe that $\mathcal{M}_{k^*+1}$ cannot be larger than $\mathcal{M}_{k+1}$ and $\ifl(S_k)+(\budget-\cost(S_k))\mathcal{M}_{k+1} \geq \ifl(S_{k^*})+(\budget-\cost(S_{k^*}))\mathcal{M}_{k^*+1}$. Moreover, Lemma \ref{hypo_bound} shows $\ifl(S_{k^*})+(\budget-\cost(S_{k^*}))\mathcal{M}_{k^*+1} \geq (1-1/e)\ifl(\opt)$, so $\ifl(S_k)+(\budget-\cost(S_k))\mathcal{M}_{k+1} \geq (1-1/e)\ifl(\opt)$. As $\limatrix[i][l] = \ifl(S)+[\budget-\cost(S)]\mathcal{M}_{k+1}$ (Algorithm \ref{mgreedy} line 5.5), this lemma is proved.
	\end{proof}
\end{lem}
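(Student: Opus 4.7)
The plan is to reduce the desired bound to the guarantee already established in Lemma~\ref{hypo_bound}. That lemma applies to a hypothetical run of the basic greedy \gre on $(C_i,l)$ that halts at the first billboard $b_{k^*+1}$ whose inclusion would violate the cluster budget $l$, producing a partial solution $S_{k^*}$, and it yields
\[
\ifl(S_{k^*}) + (l-\cost(S_{k^*}))\cdot \mathcal{M}_{k^*+1} \;\geq\; (1-1/e)\,\ifl(\opt),
\]
where $\mathcal{M}_{k^*+1}$ is the unit marginal influence of $b_{k^*+1}$ at that moment. The essence of my approach is to show that the estimator $\upperlimatrix[i][l]$ actually returned by Algorithm~\ref{mgreedy} is at least as large as the left-hand side of this inequality.

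Algorithm~\ref{mgreedy} does not stop \gre at $b_{k^*+1}$: it lets \gre run to completion, producing a final selection $S'$, and only then constructs the estimator by attaching a unit-rate term for the next best remaining billboard $b_{k+1}$, i.e.\ $\upperlimatrix[i][l] = \ifl(S') + (l-\cost(S'))\cdot \mathcal{M}_{k+1}$. Since \gre can only add billboards that fit, $S_{k^*}\subseteq S'$, so monotonicity of $\ifl$ gives $\ifl(S')\geq \ifl(S_{k^*})$. Writing $T = S'\setminus S_{k^*}$ and decomposing $\ifl(S')-\ifl(S_{k^*})$ as the sum of the greedy marginal contributions of the billboards in $T$, I would compare $\upperlimatrix[i][l]$ term-by-term with the hypothetical expression and conclude that the former dominates the latter; chaining with Lemma~\ref{hypo_bound} then yields $\upperlimatrix[i][l]\geq (1-1/e)\,\ifl(\opt)$.

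The hard part will be the dominance step, because a one-line monotonicity argument is not enough: by submodularity, the unit marginal influence of the best remaining billboard is non-increasing along the greedy sweep, so $\mathcal{M}_{k+1}$ may be strictly smaller than $\mathcal{M}_{k^*+1}$. This means the fractional residue of $\upperlimatrix[i][l]$ is evaluated at a \emph{worse} per-cost rate over a \emph{smaller} residual budget than in the hypothetical expression, so monotonicity alone cannot close the gap. The fix, which I expect to be the crux of the argument, is to exploit the greedy selection rule for each $b\in T$: at the moment \gre picked $b$, its unit marginal influence was the maximum among all remaining candidates, and in particular no smaller than the contemporaneous rate of $b_{k+1}$, which by submodularity is no smaller than the final value $\mathcal{M}_{k+1}$. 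This gives $\ifl(S')-\ifl(S_{k^*}) \geq (\cost(S')-\cost(S_{k^*}))\cdot \mathcal{M}_{k+1}$, and combining it with $\mathcal{M}_{k+1}\leq \mathcal{M}_{k^*+1}$ is elementary algebra to establish $\upperlimatrix[i][l]\geq \ifl(S_{k^*}) + (l-\cost(S_{k^*}))\mathcal{M}_{k^*+1}$, closing the reduction.
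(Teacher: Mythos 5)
Your reduction stalls exactly at the step you yourself flag as the crux, and the proposed fix does not close it. From $\ifl(S')-\ifl(S_{k^*}) \geq (\cost(S')-\cost(S_{k^*}))\,\mathcal{M}_{k+1}$ you get $\upperlimatrix[i][l] \geq \ifl(S_{k^*}) + (l-\cost(S_{k^*}))\,\mathcal{M}_{k+1}$; to reach the target $\ifl(S_{k^*}) + (l-\cost(S_{k^*}))\,\mathcal{M}_{k^*+1}$ you would then need $\mathcal{M}_{k+1}\geq\mathcal{M}_{k^*+1}$, but the inequality you correctly derive from submodularity is $\mathcal{M}_{k+1}\leq\mathcal{M}_{k^*+1}$, which points the wrong way. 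Your ``elementary algebra'' therefore produces a lower bound on $\upperlimatrix[i][l]$ that sits \emph{below} the hypothetical expression, which says nothing about whether $\upperlimatrix[i][l]$ dominates it. Worse, running your own term-by-term argument in the other direction shows the dominance generically fails: every billboard added after step $k^*$ was a candidate when $\bb_{k^*+1}$ was the argmax, so its unit marginal influence at the time of addition is at most $\mathcal{M}_{k^*+1}$, giving $\ifl(S')-\ifl(S_{k^*}) \leq (\cost(S')-\cost(S_{k^*}))\,\mathcal{M}_{k^*+1}$ and hence $\upperlimatrix[i][l] \leq \ifl(S_{k^*}) + (l-\cost(S_{k^*}))\,\mathcal{M}_{k^*+1}$. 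The estimator is bounded \emph{above}, not below, by the quantity that Lemma~\ref{hypo_bound} certifies to be $(1-1/e)$-large, so the lemma cannot be obtained by showing the estimator dominates that quantity.

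For context, the paper's own proof asserts precisely the dominance inequality you are chasing, $\ifl(S_k)+(\budget-\cost(S_k))\mathcal{M}_{k+1} \geq \ifl(S_{k^*})+(\budget-\cost(S_{k^*}))\mathcal{M}_{k^*+1}$, and justifies it only with the claim that $\mathcal{M}_{k^*+1}$ ``cannot be larger than'' $\mathcal{M}_{k+1}$ --- the reverse of what submodularity gives and of what you observe. So your attempt follows the same route as the paper but makes the hole visible rather than repairing it. Any correct argument must quantify how much extra influence the full greedy run actually accumulates relative to the rate loss from $\mathcal{M}_{k^*+1}$ down to $\mathcal{M}_{k+1}$ (or re-derive a $(1-1/e)$ guarantee directly for the sequence $S'$ extended by the fractional billboard $b_{k+1}$, including the role of the best single billboard $H$ inside \gre); none of this follows from the monotonicity and greedy-selection facts invoked here.
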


\section{Experiments}\label{sec:exp}


\subsection{Experimental Setup} 

\noindent\textbf{Datasets.}~We collect billboards and trajectories data for the two largest cities in US, i.e., NYC and LA. 

1) \textbf{Billboard} data is crawled from LAMAR\footnote{http://www.lamar.com/InventoryBrowser}, one of the largest outdoor advertising companies worldwide. 
%

2) \textbf{Trajectory} data is obtained from two types of real datasets: the TLC trip record dataset\footnote{http://www.nyc.gov/html/tlc/html/about/trip\_record\_data.shtml} for NYC and the Foursquare check-in dataset\footnote{https://sites.google.com/site/yangdingqi/home} for LA. 
For NYC, we collect TLC trip record containing green taxi trips from Jan 2013 to Sep 2016. Each individual trip record includes the pick-up and drop-off locations, time and trip distances. We use Google maps API\footnote{https://developers.google.com/maps/} to generate the trajectories, and if the distance of the recommended route by Google is close to the trip distance and travel time in the original record (within 5\% error rate), we use this route as an approximation of this trip's real trajectory. As a result, we obtain 4 million trajectories for trip records as our trajectory database. 
For LA, as there is no public taxi record, we collect the Foursquare checkin data in LA, and generate the trajectories using Google Map API by randomly selecting the pick-up and drop-out locations from the checkins. 

The statistics of those datasets are shown in Table~\ref{tab:datasets}, the distribution of trajectories' distance is shown in Figure~\ref{fig:exp-disdistr}, and a snapshot of the billboards' locations in NYC is shown in Figure~\ref{fig:lamar}. We can find that over 80\% trips finish in 5 kilometers.


\noindent\textbf{Algorithms.}~As mentioned in Section~\ref{sec:intro} and Section~\ref{sec:relatedwork}, this is the first work that studies the \problem problem, there exists no previous work for direct comparison.
In particular, we compare five methods:  \topk which picks billboards by a descending order of the volume of trajectories meeting those billboards within the budget $\budget$; a basic greedy \gre (Algorithm~\ref{greedy}); a Greedy Enumeration method \enumgreedy (Algorithm~\ref{greedy_enumerate}); our partition based method \psel (Algorithm~\ref{PBA}); our lazy probe method \bbsel (Algorithm~\ref{bound_based}).  Note that \enumgreedy is too slow to converge in 170 minutes even for a small dataset (because the complexity of \enumgreedy is proportional to $|\ur|^5$). Thus in our default setting we do not include \enumgreedy. Instead we add one experiment on a smaller dataset of NYC to evaluate it in Section \ref{exp:comley}. 

\noindent\textbf{Performance Measurement.}~
We evaluate the performance of all methods by the runtime and the  influence value of the selected billboards. Each experiment is repeated 10 times, and the average performance is reported.

\noindent\textbf{Billboard costs.}~
Unfortunately, all advertising companies do not provide the exact leasing cost; instead, they provide a range of costs for a suburb. E.g., the costs of billboards in New Jersey-Long Island by LAMAR range from 2,500 to 14,000 for 4 weeks~\cite{adunitcost}. So we generate the cost of a billboard $b$ by designing a function proportional to the number of trajectories influenced by $b$: $\cost(\bb)= \lfloor\beta\cdot\ifl(\bb)/100\rfloor \times 1000$, where $\beta$ is a factor chosen from 0.8 to 1.2 randomly to simulate various cost/benefit ratios. Here we compute the cost w.r.t. $|\td|=200k$ trajectories.

\begin{table}[!t]
	\caption{Statistics of Datasets.}
	\centering
	\vspace{-1em}
	\begin{small}
		\label{tab:datasets}
		\begin{tabular}[c]{|c|c|c|c|c|c|}
			\hline
			&$|\td|$			   & $|U|$		             & AvgDistance         & AvgTravelTime   & AvgPoint\# \\ \hline
			NYC    		 &4m     				&1500       			 &2.9km   					 &569s   &159  \\ \hline
			LA    		 &200k      			&2500      				 &2.7km    					 &511s    &138  \\ \hline
		\end{tabular}
	\end{small}
\end{table}
\begin{table}[!t]\vspace{.5em}
	\caption{Parameter setting.}	\vspace{-1em}
	\centering
	\begin{small}
		\label{exp-param}
		\begin{tabular}[r]{|p{2cm}<{\centering}|p{5.5cm}<{\centering}|}
			\hline
			\multicolumn{1}{|c|}{\textbf{Parameter}}                                                      & \multicolumn{1}{c|}{\textbf{Values}} \\ \hline
			$\budget$                                                               &100k, \textbf{150k}, ... 300k\\ \hline
			$|\td|$ (NYC)                                                          &  40k, ...,\textbf{120k}..., 4m    \\ \hline
			$|\td|$ (LA)                                                         &  40k, 80k,\textbf{120k}, 160k, 200k    \\ \hline
			\begin{tabular}[c]{@{}l@{}}$|U|$ (NYC) \end{tabular}					 & 0.5k, 1k, 1.46k, (\textbf{2k}...10k by replication) \\ \hline
			\begin{tabular}[c]{@{}l@{}}$|U|$ (LA)\end{tabular}					 &1k, \textbf{2k}, 3k, (4k... 10k by replication) \\ \hline
			$\theta$                                                       & 0, 0.1, \textbf{0.2}, 0.3, 0.4 \\ \hline
			$\lambda$                                                       & 25m, \textbf{50m}, 75m, 100m \\ \hline
		\end{tabular}
	\end{small}
\end{table}

\noindent\textbf{Choice of influence probability} $\pr(\bb_i, \tr_j)$. In Section~\ref{sec:pf}, we define two choices to compute $\pr(\bb_i, \tr_j)$. By default, we use the first one. 
The experimental result of the second can be found in Section~\ref{sec:2nd_probability}.

\noindent
\textbf{Parameters.}~
Table~\ref{exp-param} shows the settings of all parameters, such as the distance threshold $\lambda$ to determine the influence relationship between a trajectory and a billboard, the threshold for $\theta$-partition, the budget $\budget$ and the number of trajectories $|\td|$. The default one is highlighted in bold; we vary one parameter while the rest parameters are kept default in all experiments unless specified otherwise. Since the total number of real-world billboards in LAMAR is limited (see Table~\ref{tab:datasets}), the $|\ur|$ larger than the limit are replicated by random selecting locations in the two cities.

\noindent
\textbf{Setup.}~All codes are implemented in Java, and experiments are conducted on a server with 2.3 GHz Intel Xeon 24 Core CPU and 256GB memory running Debian/4.0 OS.

\begin{figure}[!t]\vspace{-1em}
	\centering
	\subfloat[Trajectory Distance]{\includegraphics[clip,width=0.24\textwidth]{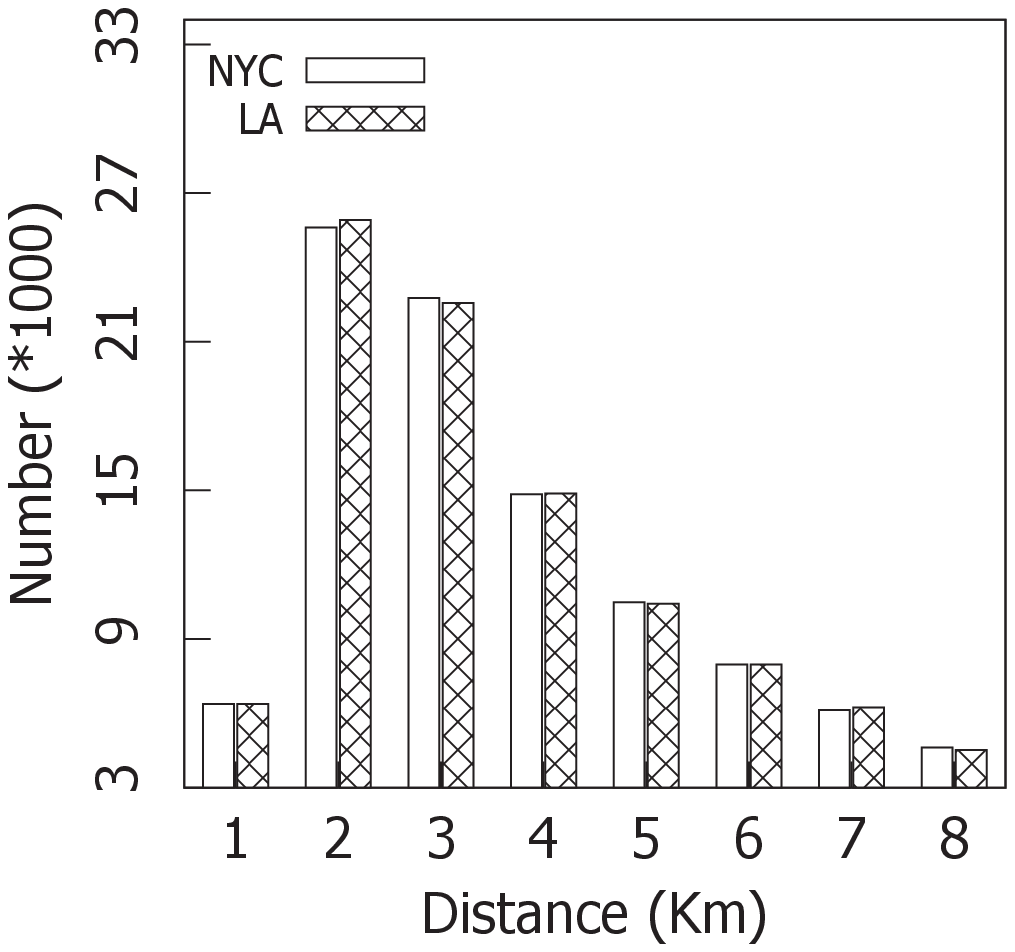}\label{fig:exp-disdistr}}
	\subfloat[Billboard Location ]{\includegraphics[clip,width=0.24\textwidth]{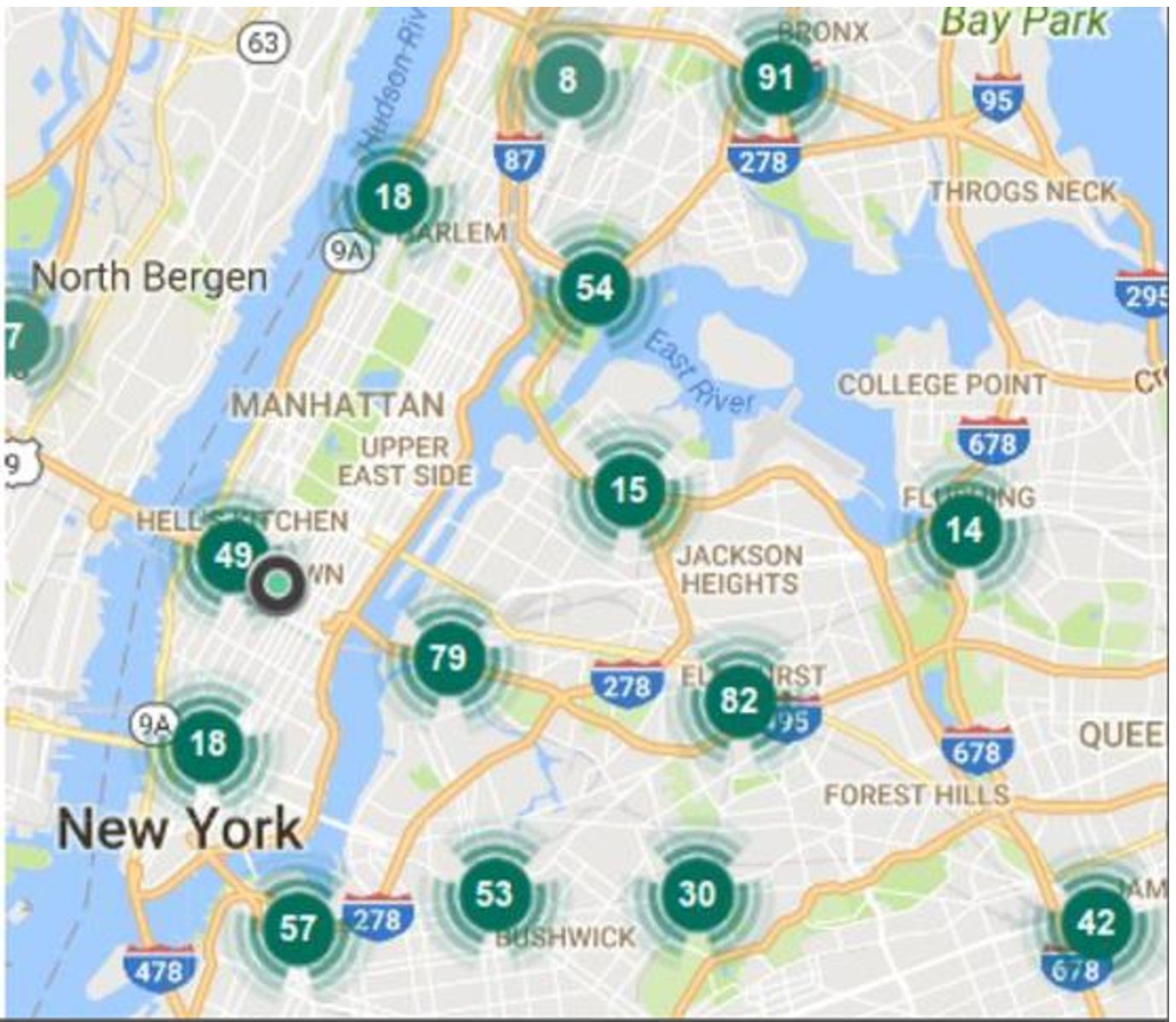}\label{fig:lamar}}
	\vspace{-12pt}
	\caption{Distribution of Datasets in NYC}
\end{figure}

\begin{figure*}[!t]
	\centering
	\vspace{-.75em}
	\hspace{-1.2em}
	\subfloat[Influence (NYC)]{\includegraphics[clip,width=0.195\textwidth]{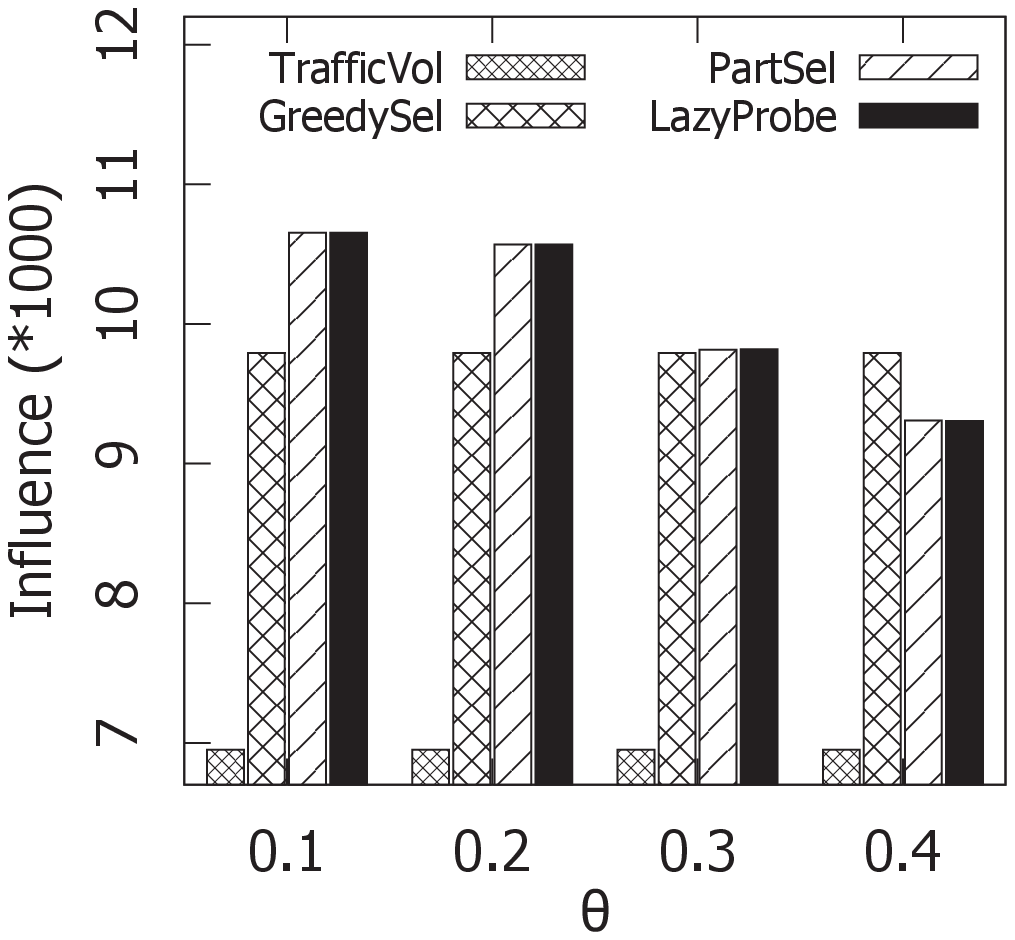}\label{fig:exp-theta-ifl-NYC}}
	\subfloat[Efficiency (NYC)]{\includegraphics[clip,width=0.205\textwidth]{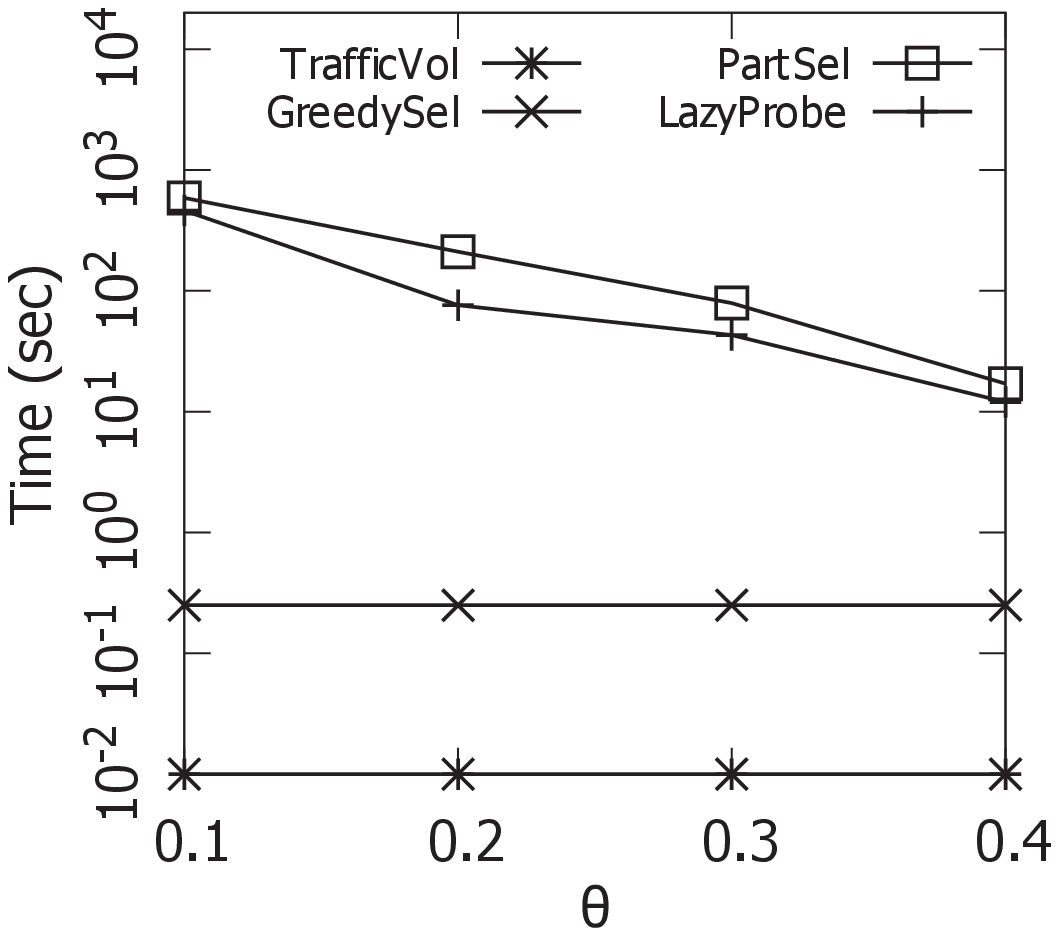}\label{fig:exp-theta-eff-NYC}}
	\hspace{-.6em}
	\subfloat[Influence (LA)]{\includegraphics[clip,width=0.195\textwidth]{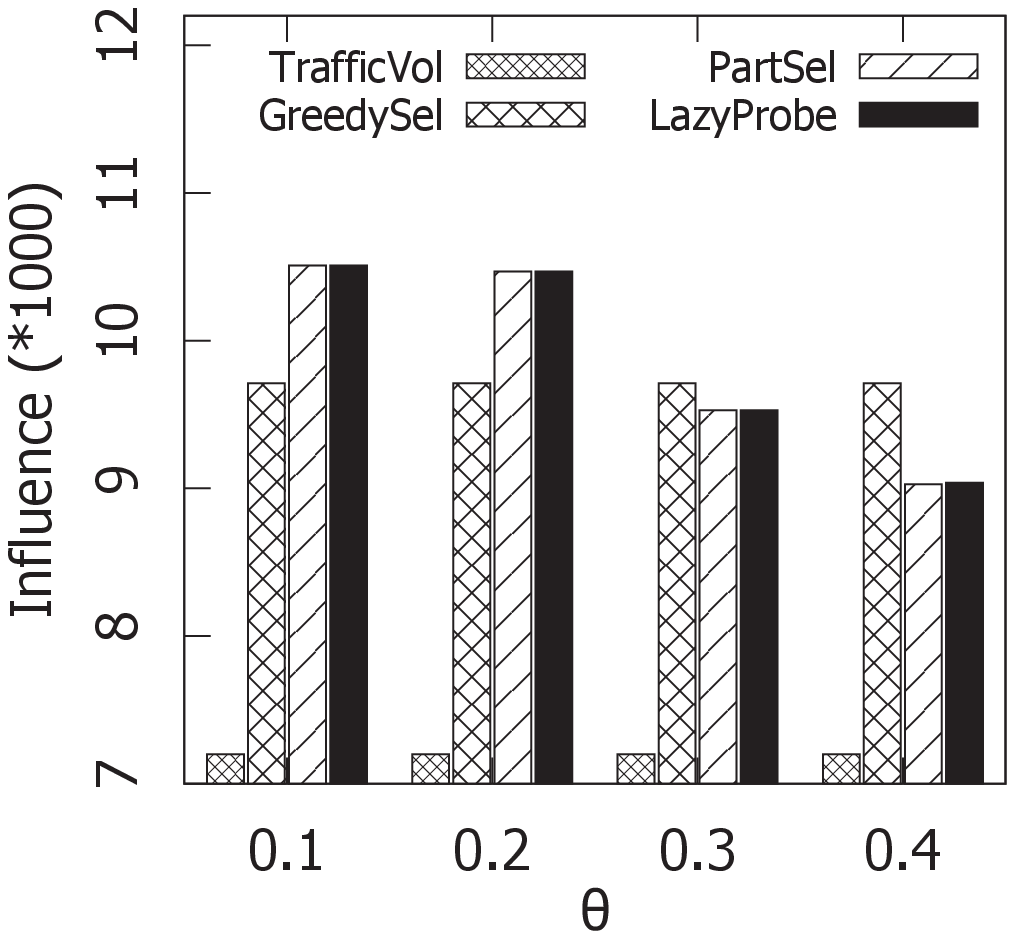}\label{fig:exp-theta-ifl-LA}}
	\subfloat[Efficiency (LA)]{\includegraphics[clip,width=0.205\textwidth]{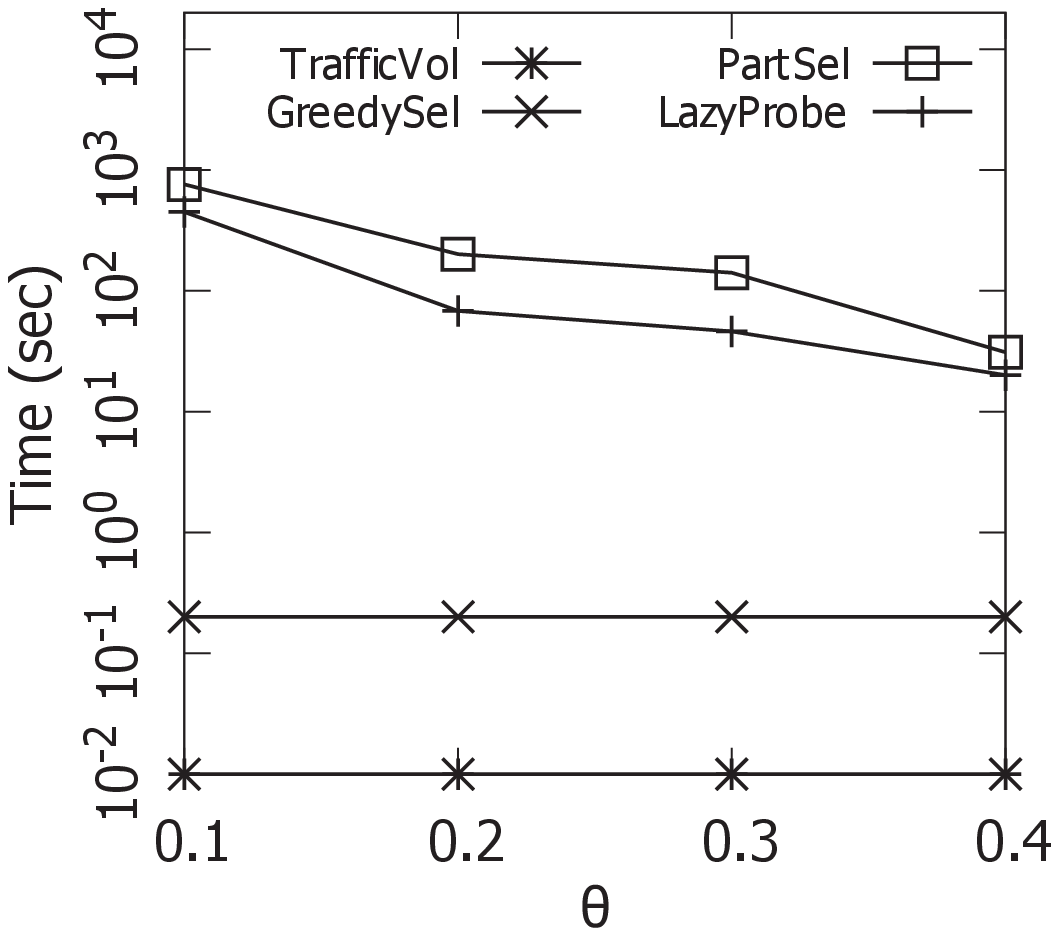}\label{fig:exp-theta-eff-LA}}
	\hspace{-.6em}
	\subfloat[Number of clusters]{\includegraphics[clip,width=0.195\textwidth]{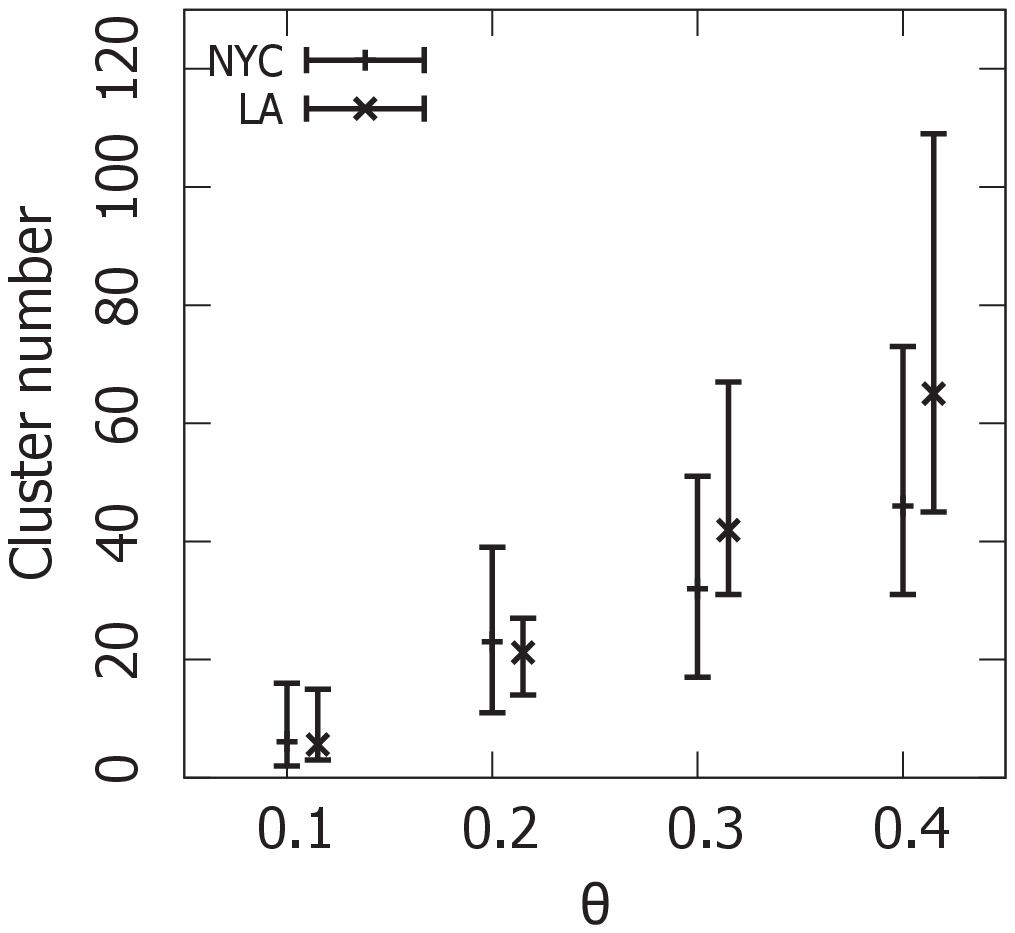}\label{fig:theta_cluster}}
	\vspace{-12pt}
	\caption{Effect of varying $\theta$ on NYC and LA}
	\label{fig:varying_theta}
\end{figure*}

\vspace{-.2em}
\subsection{Experiments}\label{exp-billsel}
\subsubsection{Choice of $\theta$-partition}

\begin{table}[!t] \vspace{1.5em}
	\centering
	\caption{The $|C_m|/|\ur|$ ratio w.r.t. varying $\theta$}
	\label{max_cluster}
	\vspace{-.3em}
	\begin{tabular}{|c|c|c|c|c|}
		\hline
		& 0.1                       & 0.2                       & 0.3                       & 0.4                       \\ \hline
		\multicolumn{1}{|l|}{NYC}  & \multicolumn{1}{l|}{12.6\%} & \multicolumn{1}{l|}{7.1\%} & \multicolumn{1}{l|}{6.4\%} & \multicolumn{1}{l|}{5.8\%} \\ \hline
		
		\multicolumn{1}{|l|}{LA} & \multicolumn{1}{l|}{13.5\%} & \multicolumn{1}{l|}{7.8\%} & \multicolumn{1}{l|}{5.9\%} & \multicolumn{1}{l|}{5.1\%} \\ \hline
	\end{tabular}
\end{table}

\begin{figure*}[!t]
	\vspace{-1em}
	\centering
	\subfloat[Influence (NYC)]{\includegraphics[clip,width=0.245\textwidth]{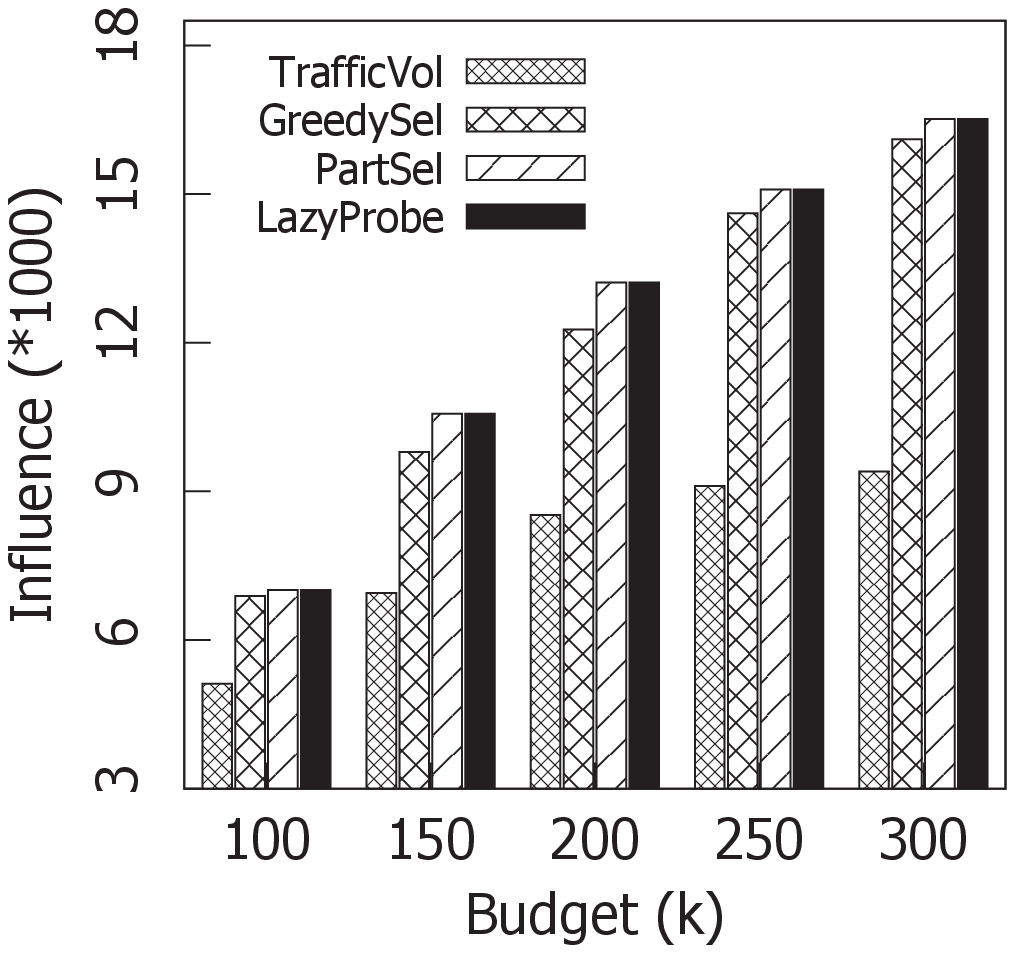}\label{fig:exp-L-ifl-NYC}}
	\subfloat[Efficiency (NYC)]{\includegraphics[clip,width=0.25\textwidth]{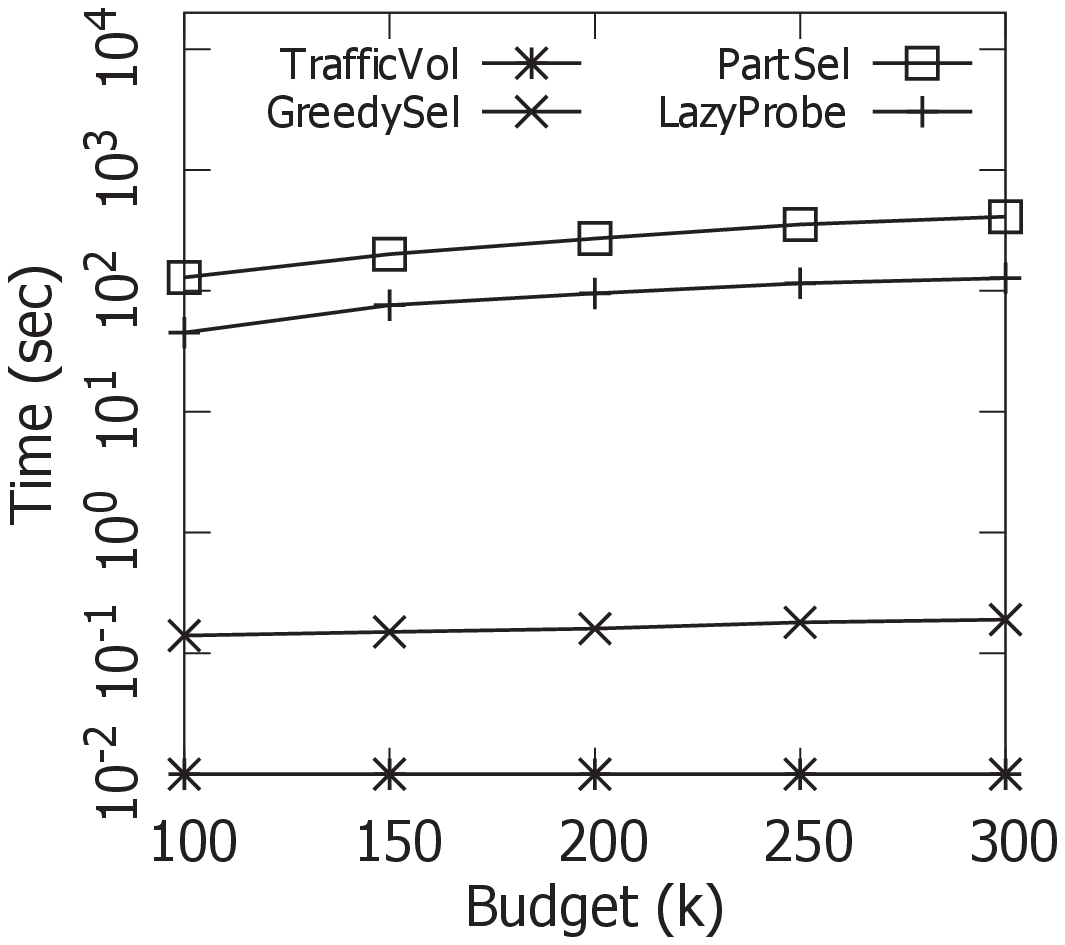}\label{fig:exp-L-eff-NYC}}
	\subfloat[Influence (LA)]{\includegraphics[clip,width=0.245\textwidth]{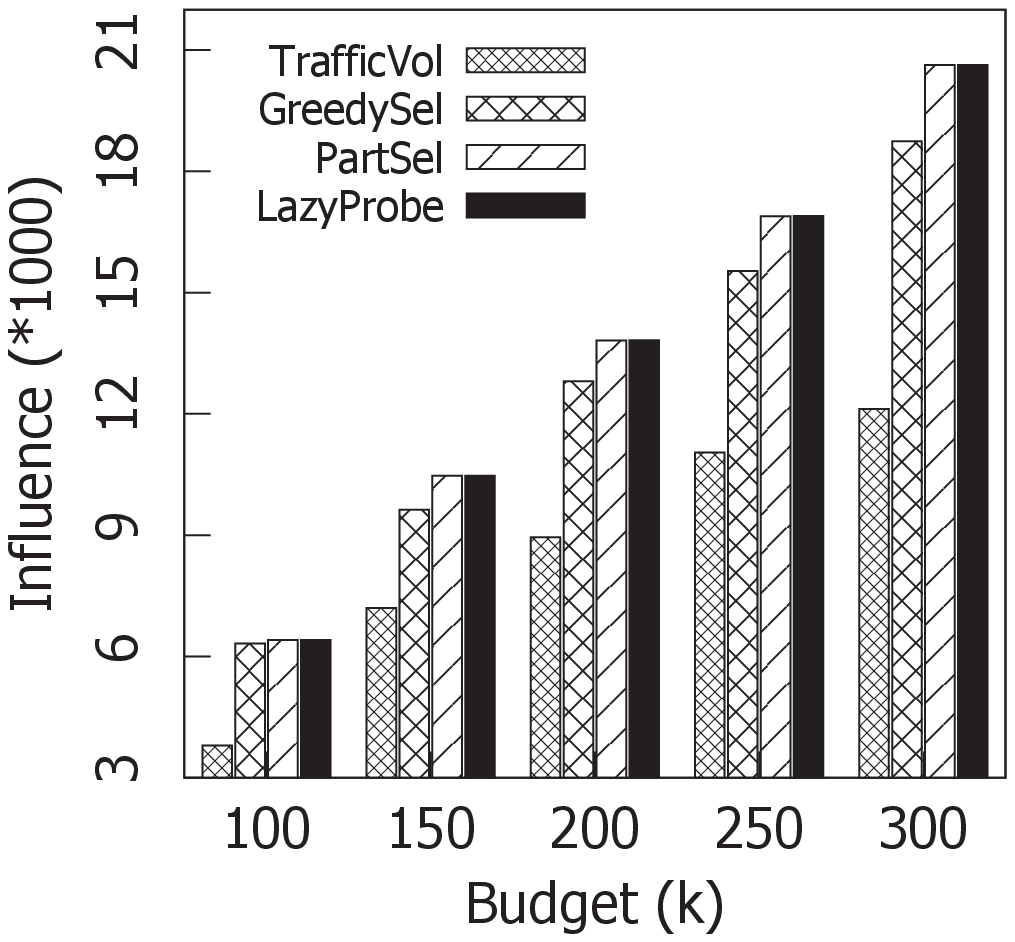}\label{fig:exp-L-ifl-LA}}
	\subfloat[Efficiency (LA)]{\includegraphics[clip,width=0.25\textwidth]{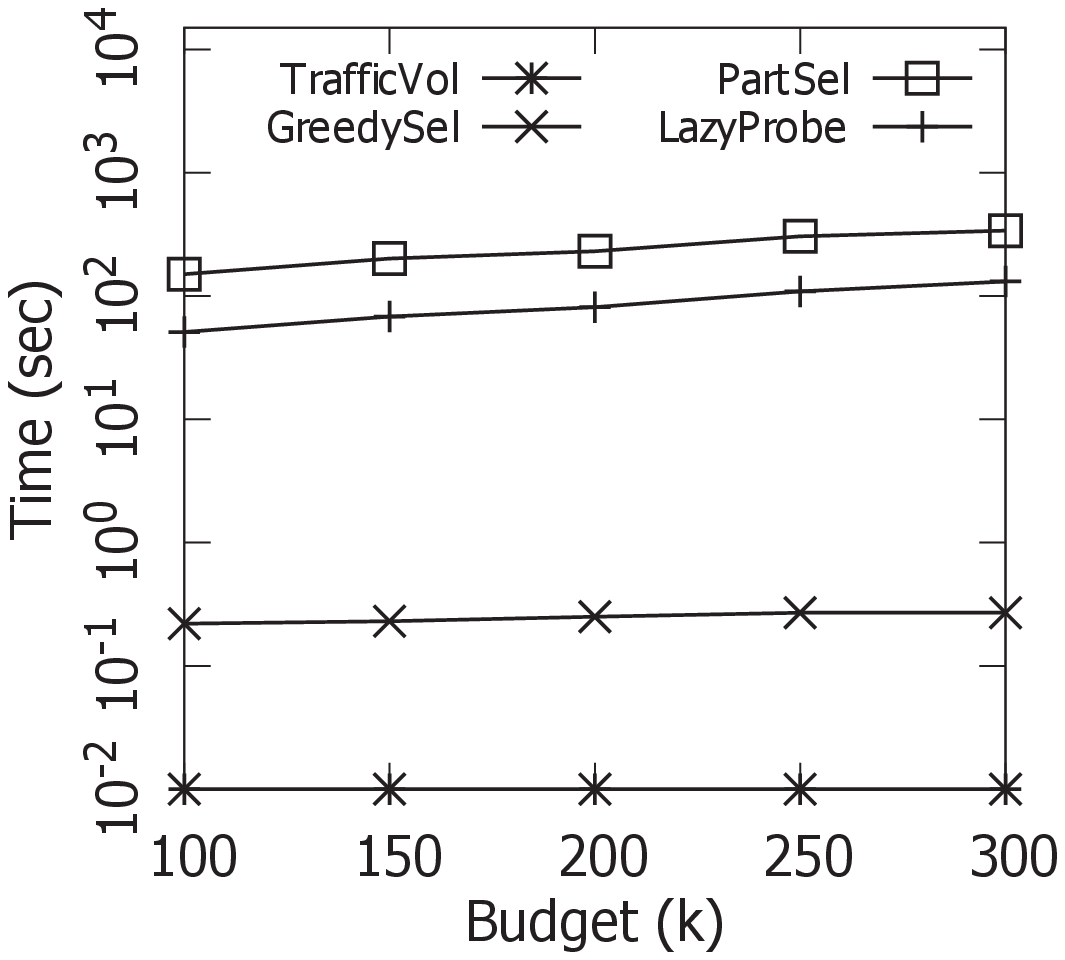}\label{fig:exp-L-eff-LA}}
	\vspace{-12pt}
	\caption{Effect of varying budget $\budget$}
	\label{fig:varying_l_nyc}
	
\end{figure*}
\begin{figure*}[!t]	\vspace{-1em}
	\centering
	\subfloat[Influence (NYC)]{\includegraphics[clip,width=0.245\textwidth]{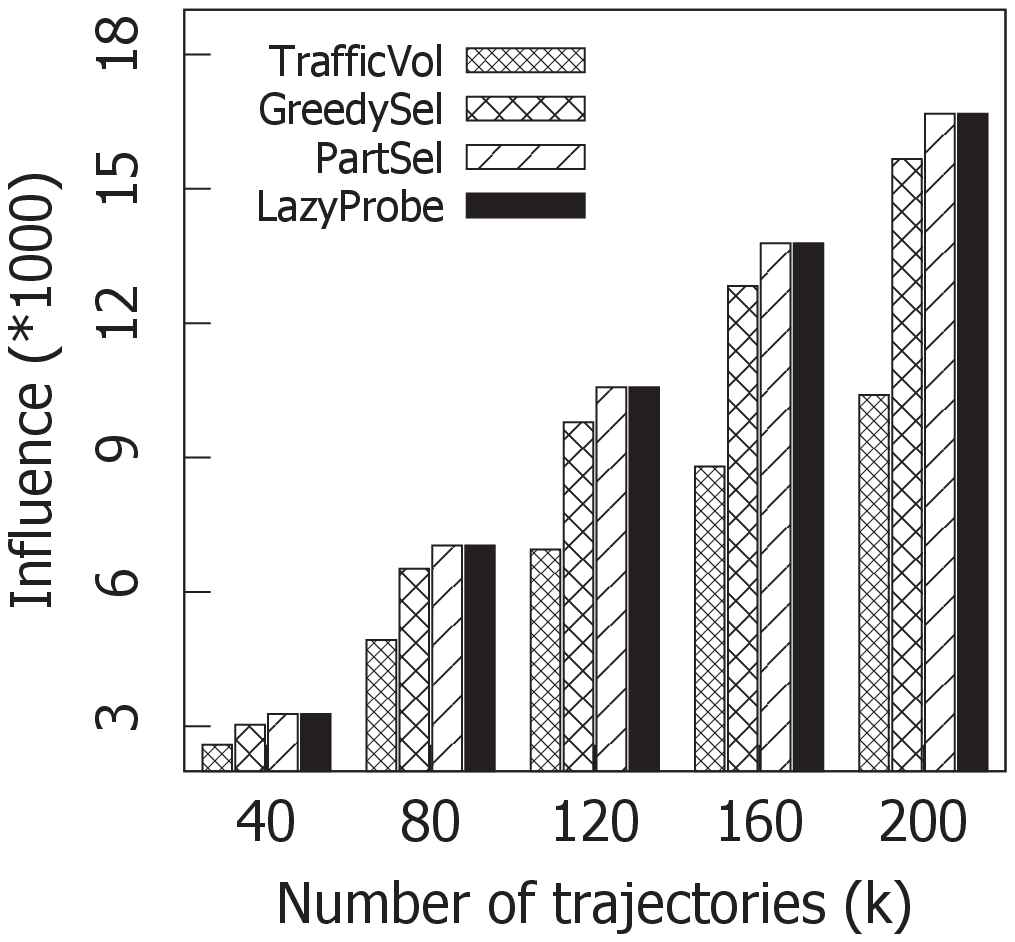}\label{fig:exp-Tr-ifl-NYC}}
	\subfloat[Efficiency (NYC)]{\includegraphics[clip,width=0.25\textwidth]{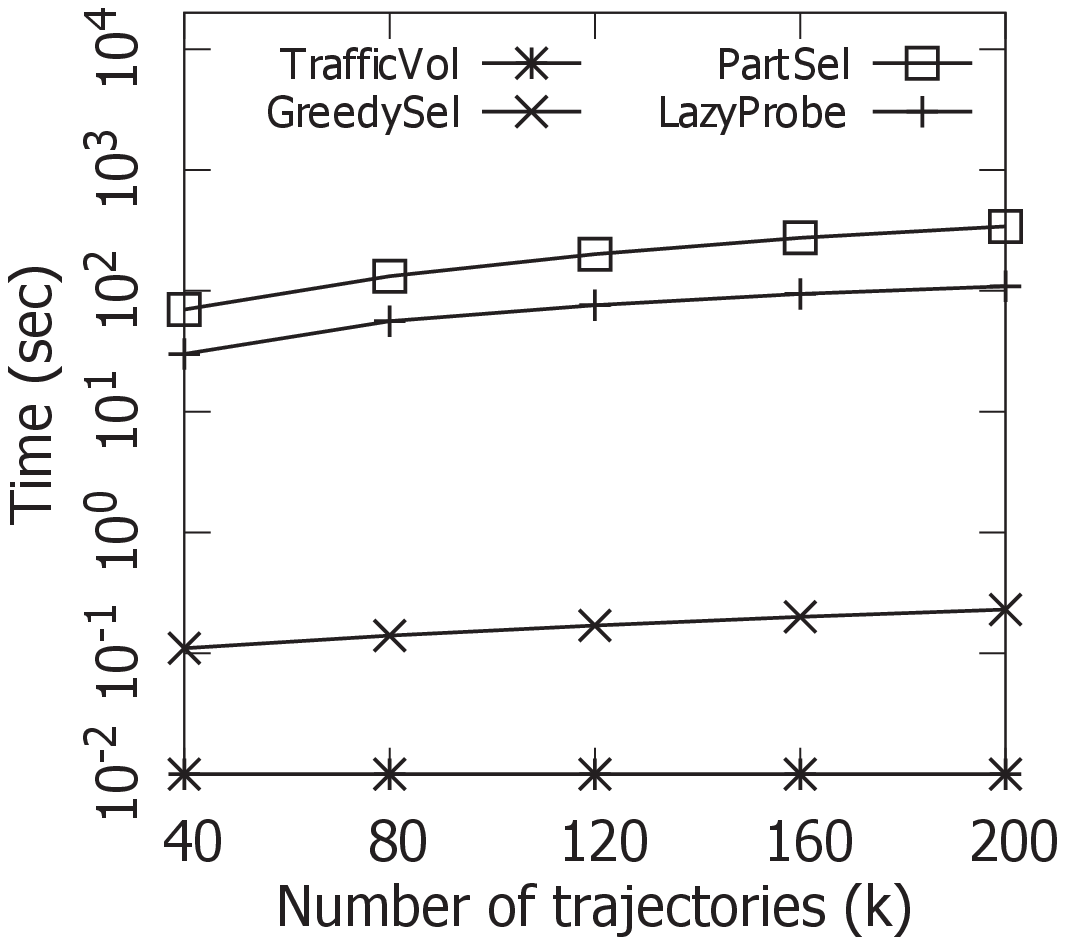}\label{fig:exp-Tr-eff-NYC}}
	\subfloat[Influence (LA)]{\includegraphics[clip,width=0.245\textwidth]{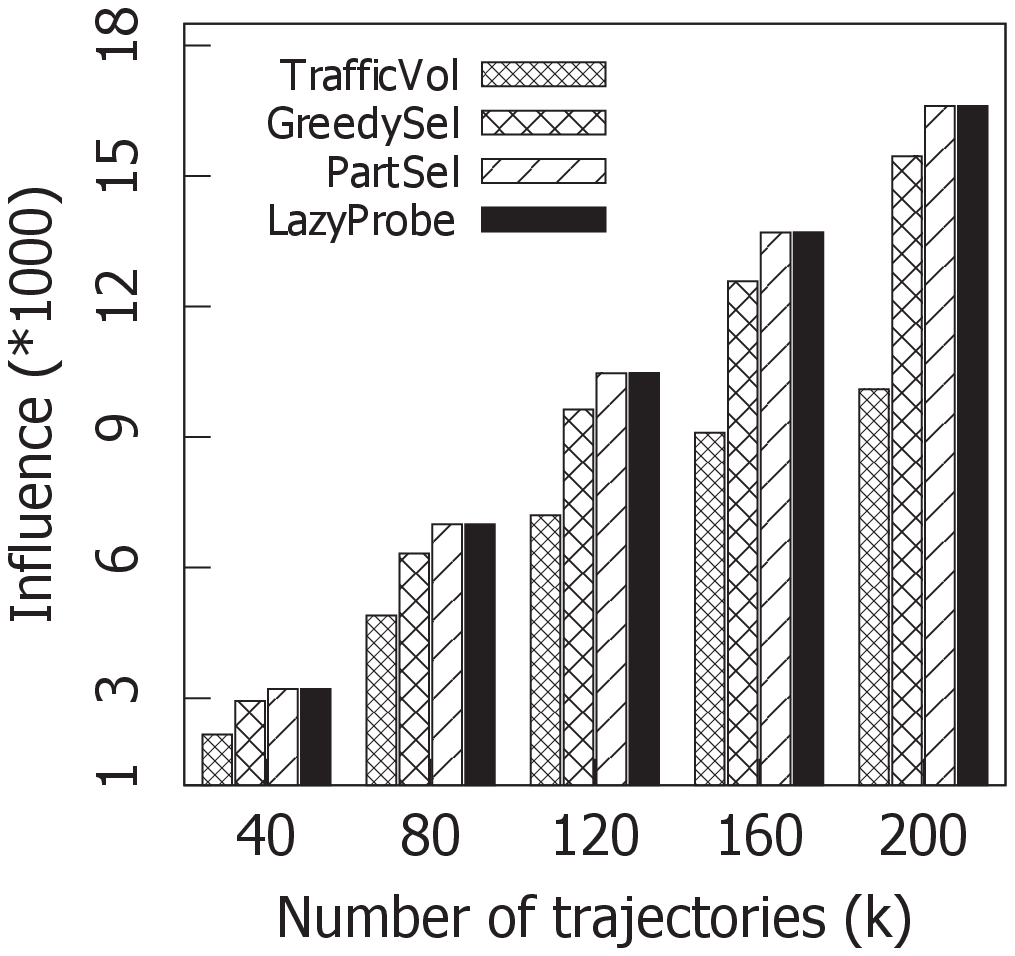}\label{fig:exp-Tr-ifl-LA}}
	\subfloat[Efficiency (LA)]{\includegraphics[clip,width=0.25\textwidth]{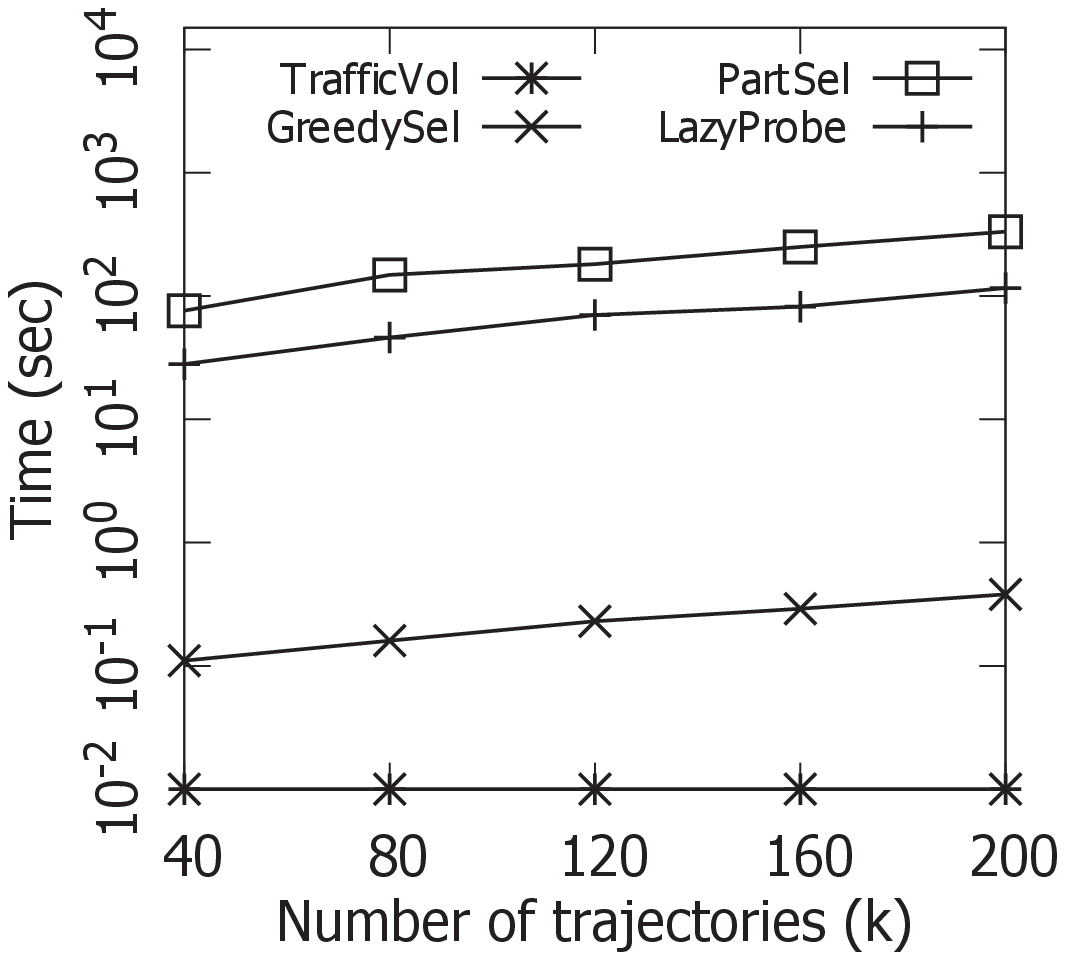}\label{fig:exp-Tr-eff-LA}}
	\vspace{-12pt}
	\caption{Effect of varying trajectory number $|\td|$}
	\label{fig:varying_t_nyc}
		
\end{figure*}

Since $\theta$-partition is an input of \psel method and $\theta$ indicates the degree of overlap among clusters generated in the partition phase of \psel (and \bbsel), we would like to find a generally good choice of $\theta$ that strikes a balance between the efficiency and effectiveness of \psel and \bbsel.

We vary $\theta$ from 0 to 0.4, and record the number of clusters as input of \psel and \bbsel methods, the percentage of the largest cluster size over $\ur$ (i.e., $\frac{|C_m|}{|\ur|}$), the runtime and the influence value of \psel and \bbsel. The results on both datasets are shown in Figure \ref{fig:varying_theta} and Table \ref{max_cluster}. Note that \enumgreedy is too slow, we do not include it here.
By linking those results, we have four main observations.
(1) With the increase of $\theta$, the influence quality decreases and the efficiency is improved, because for a larger $\theta$, the tolerated influence overlap is larger and  there are many more clusters with larger overlaps. 
(2) When $\theta$ is 0.1 and 0.2, \psel and \bbsel achieve the best  influence (Figures \ref{fig:exp-theta-ifl-NYC} and \ref{fig:exp-theta-ifl-LA}), while the efficiency of 0.2 is not much worse than that of $\theta$=0.3 (Figures \ref{fig:exp-theta-eff-NYC} and \ref{fig:exp-theta-eff-LA}). The reason is that, it results in an appropriate number of clusters (e.g., 23 clusters for NYC dataset at $\theta$=0.2 in Figure \ref{fig:theta_cluster}), and the largest cluster covers 7.1\% of all billboards, as evidenced by the value of $\frac{|C_m|}{|\ur|}$ in Table \ref{max_cluster}. 
(3) In one extreme case that $\theta$=0.4, although the generated clusters are dispersed and small, it results in high overlaps among clusters, so the influence value drops and becomes worse than \gre, and meanwhile the efficiency of \psel (\bbsel) only improves by around 12 (6) times as compared to that of $\theta$=0.2 on the NYC (LA) dataset. The reason is that \psel and \bbsel find influential billboards within a cluster and do not consider the influence overlap to billboards in other cluster; thus when $\theta$ is large, high overlaps between the clusters lead to a low precision of \psel and \bbsel. 
(4) All other methods beat the \topk baseline by 45\% in term of the influence value of selected billboards. 
%

The result on LA is very similar to that of NYC, so we omit the description here. Therefore, we choose 0.2 as the default value of $\theta$ in the rest of the experiments.

\vspace{-.5em}
\subsection{Effectiveness Study}\label{exp-billsel}
\vspace{-.25em}

We study how the influence is affected by varying the budget $\budget$, trajectory number $|\td|$, distance threshold $\lambda$ and overlap ratio respectively. Last we study the approximation ratio of all algorithms.

\subsubsection{Varying the budget \emph\budget}
The influence of all algorithms on NYC and LA by varying the $\budget$ is shown in Figure \ref{fig:varying_l_nyc}, and we have the following observations on both datasets. (1) \topk has the worst performance. \psel and \bbsel achieve the same influence. The improvement of \psel and \bbsel over \topk exceeds 99\%.  (2) With the growth of $L$, the advantage of \psel and \bbsel over \gre are increasing, from 1.8\% to 6.5\% when $\budget$ varies from 100k to 300k on LA dataset. This is because when the influence overlaps between clusters cannot be avoided, then how to maximize the benefit/cost ratio in clusters is critical to enhance the performance, which is exactly achieved by \psel and \bbsel, since they exploit the locality feature within clusters.  
   
\subsubsection{Varying the trajectory number $|\td|$}
Figure \ref{fig:varying_t_nyc} shows the result by varying $|\td|$. We find: (1) the influence of all methods increase because more trajectories can be influenced; (2) the influence by \psel and \bbsel is consistently better than that of \gre and \topk, because the trajectory locality is an important factor that should be considered to increase the influence. 


\begin{figure}[!t]
	\centering
	\subfloat[NYC]{\includegraphics[clip,width=0.24\textwidth]{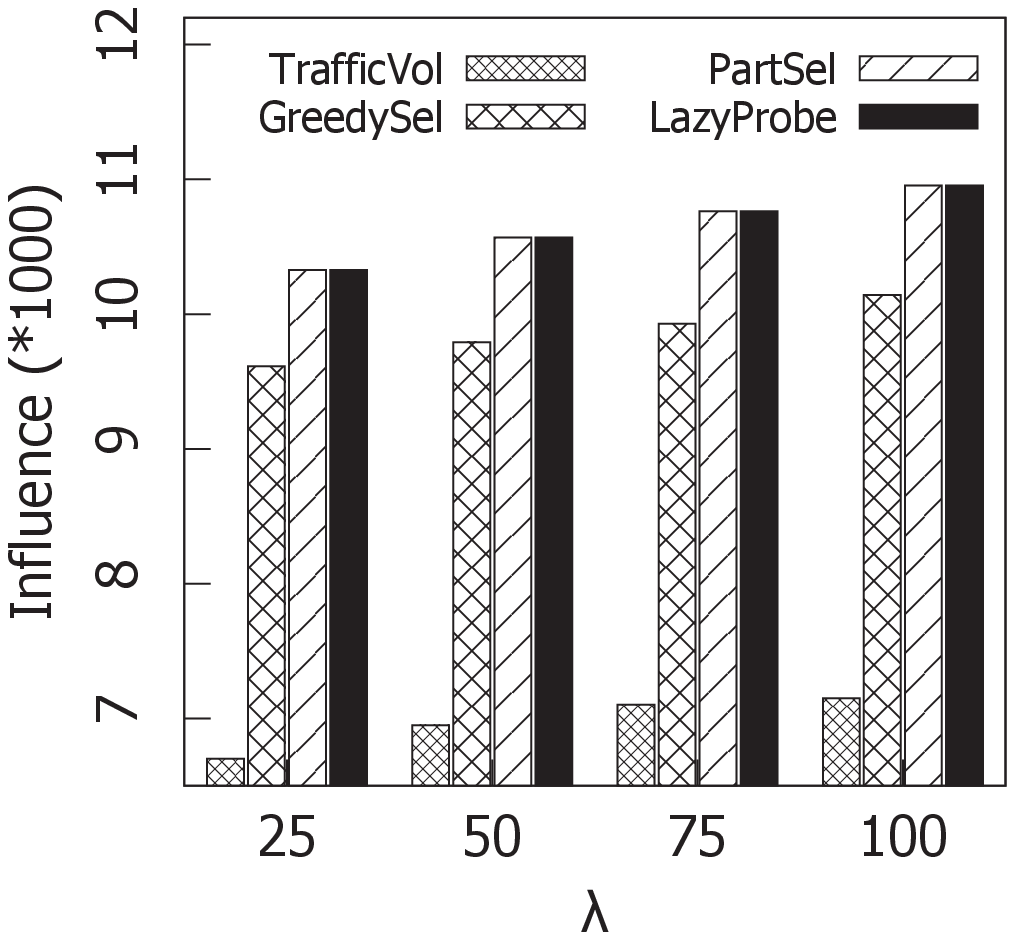}\label{fig:exp-lambda-ifl-NYC}}
	\subfloat[LA]{\includegraphics[clip,width=0.24\textwidth]{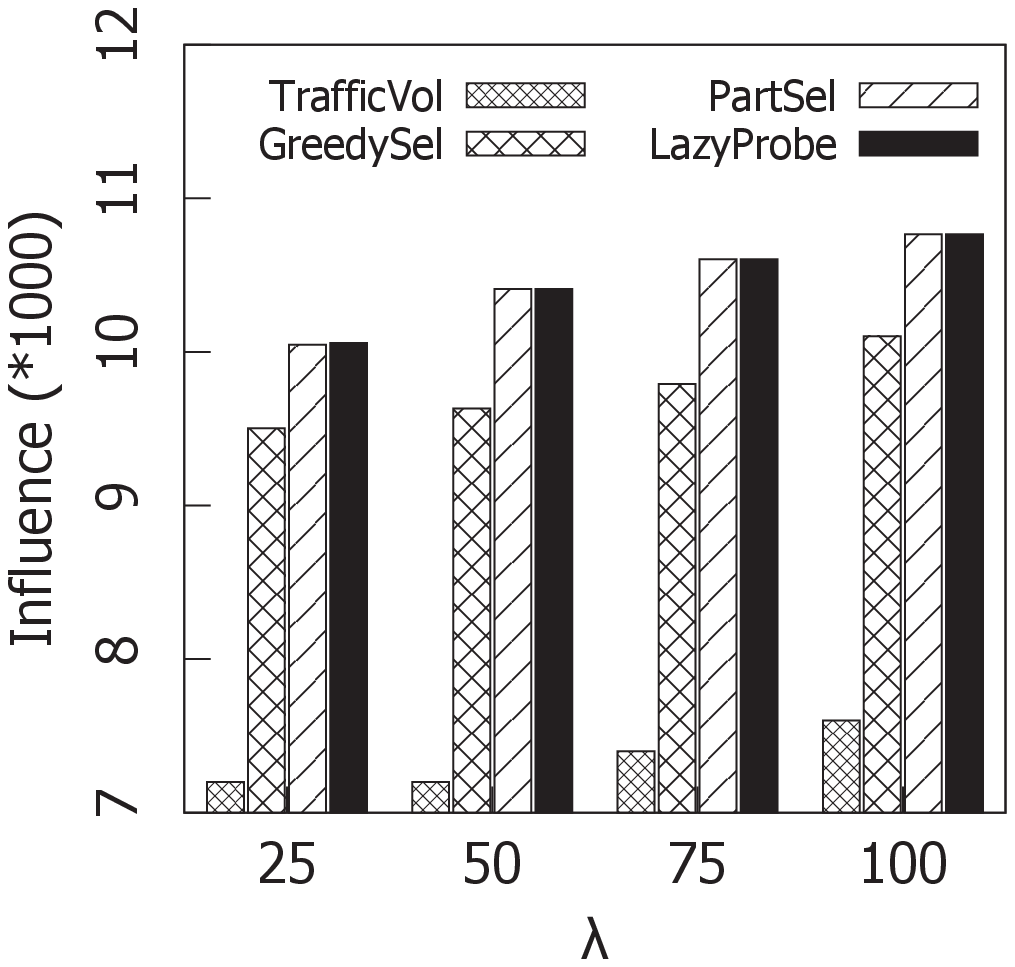}\label{fig:exp-lambda-ifl-LA}}
	\vspace{-12pt}
	\caption{Effect of varying $\lambda$\label{fig:vary:lambda}}
\end{figure}

\subsubsection{Varying $\lambda$}
Figure~\ref{fig:vary:lambda} shows the influence result by varying the threshold $\lambda$, which determines the influence relationship between billboards and trajectories (in Definition~\ref{def:meet}). We make two observations. (1) With the increase of $\lambda$, the performance of all algorithms becomes better, because a single billboard can influence more trajectories. (2) \psel and \bbsel have the best performance and outperform the \gre baseline by at least 8\%. This is because the enumerations can easily find influential billboards when the influence overlap becomes larger.

\begin{table}[!t]	\vspace{1.5em}
	\centering
	\caption{Additional test on NYC}
	\vspace{-1em}
	\label{fig:exp-approxi-NYC}
	\begin{small}
		\begin{tabular}{|c|c|c|c|c|c|c|}
			\hline
			& \multicolumn{2}{c|}{\budget=100k} & \multicolumn{2}{c|}{\budget=200k} & \multicolumn{2}{c|}{\budget=300k} \\ \hline
			\ann        & 6805 & 0.00\% & 11777 & 0.00\% & 15773 & 0.00\% \\ \hline
			\topk      & 5111 & -24.89\% & 8520& -27.66\% & 9400& -40.40\% \\ \hline	
			\gre        & 6890 & 1.25\% & 12267 & 5.56\% & 16108 & 2.12\% \\ \hline
			\enumgreedy & 7080 & 4.04\% & 13161 & 11.75\% & 16570 & 5.05\% \\ \hline
			\psel       & 7013 & 3.06\% & 13215 & 12.21\% & 16512 & 4.69\% \\ \hline
			\bbsel     & 7013 & 3.06\% & 13215 & 12.21\% & 16512 & 4.69\% \\ \hline
		\end{tabular}
	\end{small}
\end{table}

\subsubsection{Additional Discussion}

We also compared our solution with a meta heuristic algorithm, Simulated Annealing (Annealing), to verify the practical effectiveness. Although \ann is costly and provides no theoretical bound for our problem, it has been proved to be a very powerful way for most optimization problems and always can find a near optimal solution~\cite{talbi2009metaheuristics}. 
Since \ann is a random search algorithm and its performance is not stable, we run it 50 times for each instance and select the best solution as our baseline. Table~\ref{fig:exp-approxi-NYC} reports both the influence value and its relative improvement percentage w.r.t. \ann for three different choices of budget \budget.
We observe: (1) \psel and \bbsel have a very close performance to \enumgreedy in average. This is because when the overlap between clusters are small, each billboard selected by of \psel and \bbsel is less likely to overlap with the billboards in other clusters, and thus the performance of \psel and \bbsel would note lose a large accuracy. As discussed later in Section~\ref{sec:efficiency}, \enumgreedy is very slow to work in practice. (2) \psel and \bbsel improve the influence by 6.6\% in average as compared to \ann. (3) \topk which simply uses the traffic volume to select billboards has the worst approximation.

\subsection{Efficiency Study}\label{sec:efficiency}
\subsubsection{Varying the budget \emph\budget}
Figure \ref{fig:exp-L-eff-NYC} and Figure \ref{fig:exp-L-eff-LA} present the efficiency result when budget $L$ varies from 100k to 300k on NYC and LA. As \enumgreedy is too slow to converge in $10^4$ seconds (because the complexity of \enumgreedy is proportional to $|\ur|^5$), we omit it in the figures.
We have three main observations. (1) \bbsel consistently beats \psel by almost 3 times. (2) The gap between \gre and \psel becomes more significant w.r.t. the increase of $\budget$. It is because \psel has to invoke \enumgreedy $\budget$ times for each cluster to construct the local solution for each cluster, so the runtime of \psel grows quickly with $\budget$ being increased. (3) \topk is the fastest one with no surprise, because it simply adopts a benefit-based selection.

\subsubsection{Varying the trajectory number $\td$}
Figure \ref{fig:exp-Tr-eff-NYC} and Figure \ref{fig:exp-Tr-eff-LA} show the runtime of all algorithms on NYC and LA datasets.
We observe that \psel and \bbsel scale linearly w.r.t. $\td$ which is consistent with our time complexity analysis; moreover, \bbsel is around 4 times faster than  \psel.

\begin{figure}[!t]
	\centering
	\subfloat[Varying $|\td|$]{\includegraphics[clip,width=0.25\textwidth]{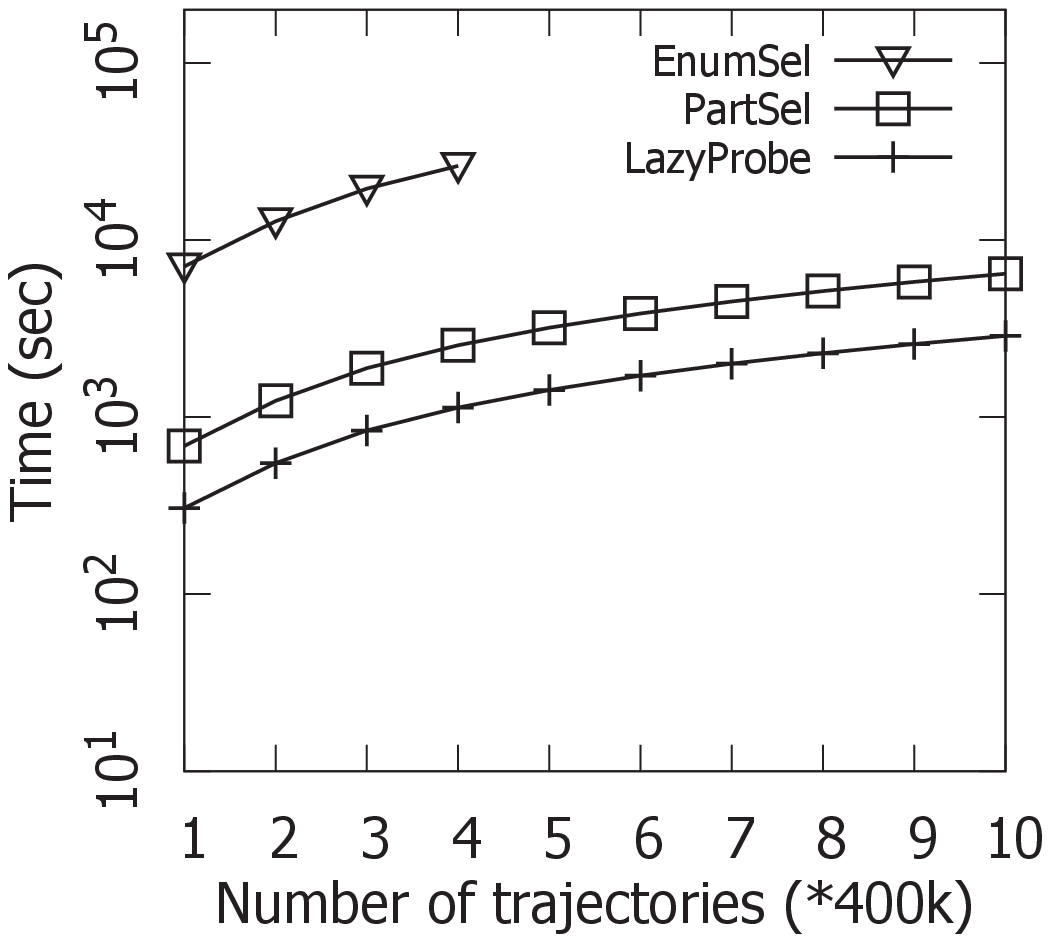}\label{fig:scale_tnum}}
	\subfloat[Varying $|\ur|$]{\includegraphics[clip,width=0.25\textwidth]{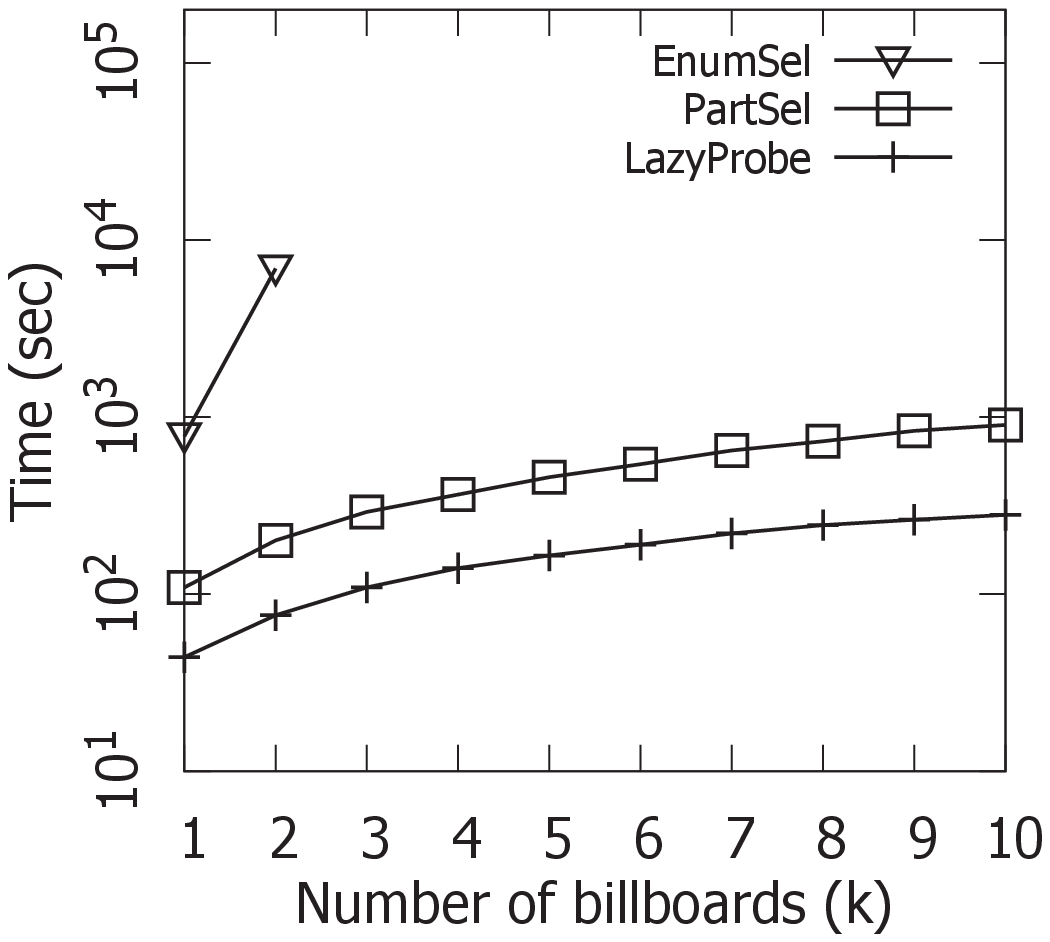}\label{fig:exp-bil-eff-NYC}}
	\vspace{-9pt}
	\caption{Scalability test of our methods on NYC dataset}
	\label{fig:trajectory_scale}
\end{figure}

\begin{figure}[!t]
	\centering
	\subfloat[Influence (Varying $\budget$)]{\includegraphics[clip,width=0.25\textwidth]{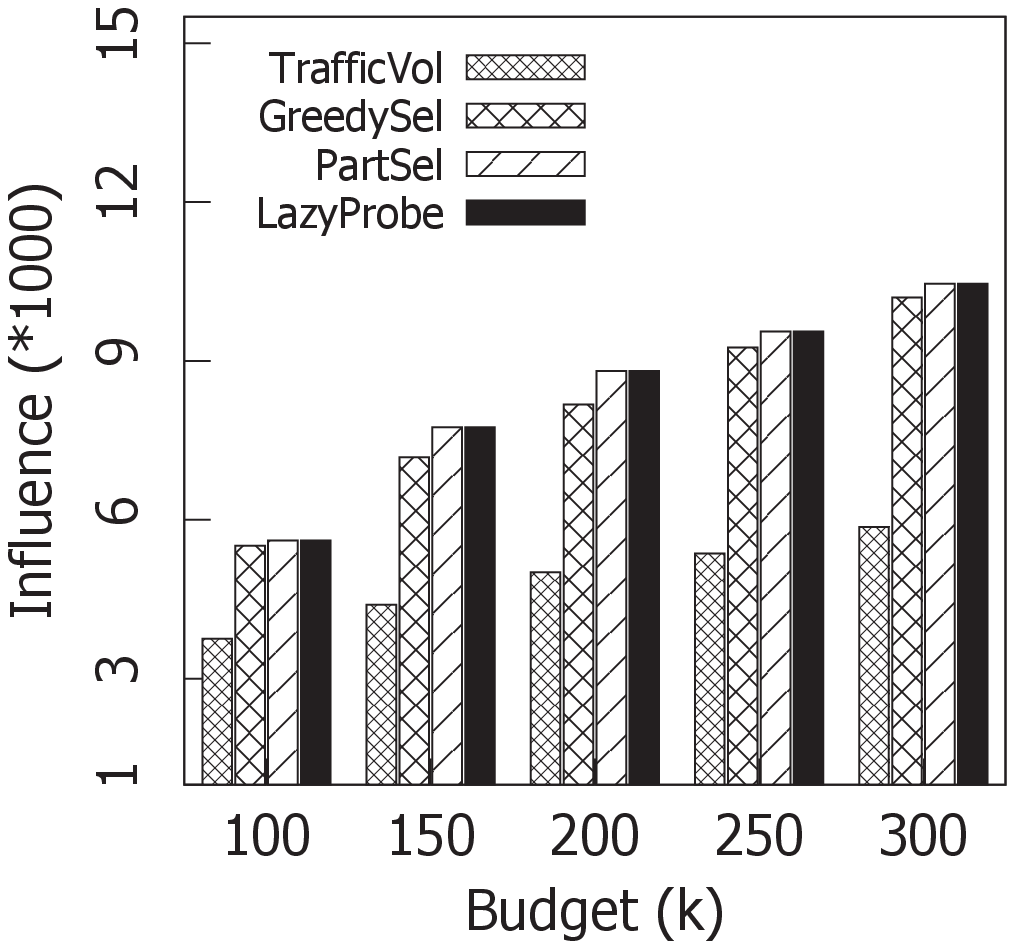}\label{fig:NYC_p_budget_inf}}
	\subfloat[Influence (Varying $|\td|$)]{\includegraphics[clip,width=0.25\textwidth]{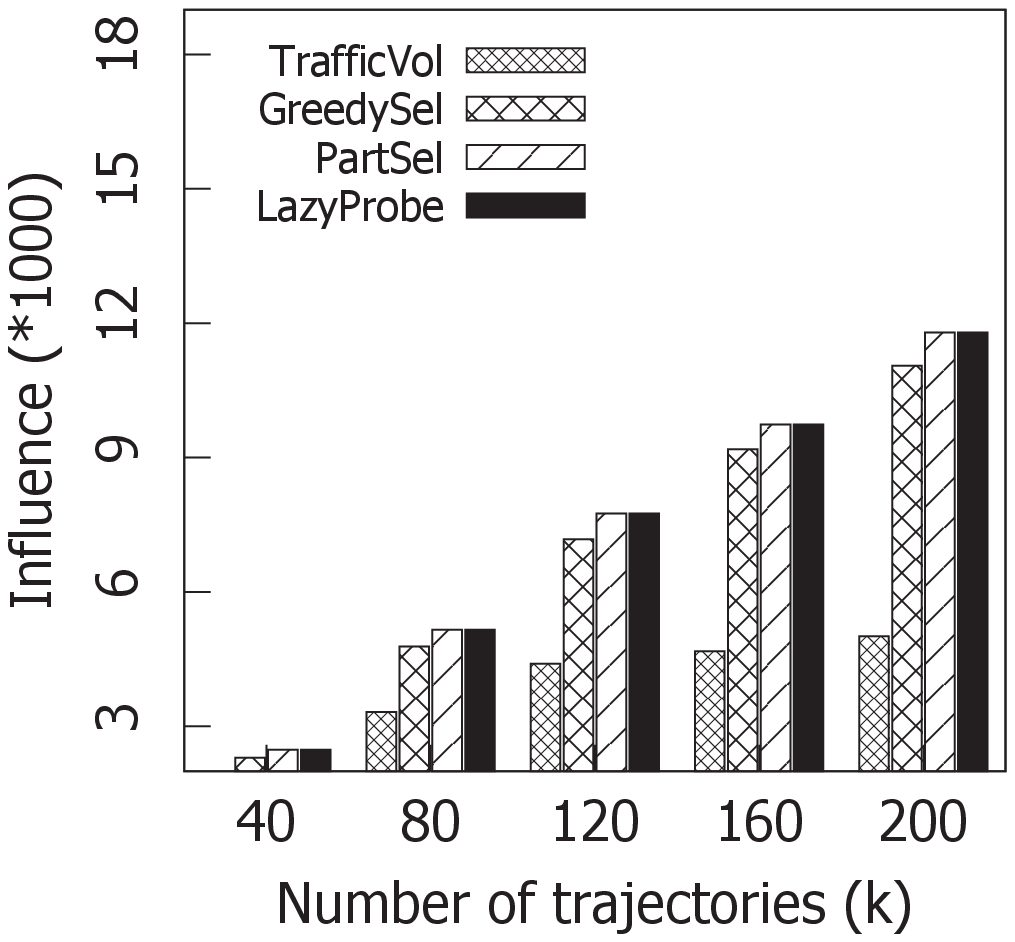}\label{fig:NYC_p_trajectory_inf}}
	\vspace{-9pt}
	\caption{Test on other choice of influence probability}
	\label{fig:vary_p}
\end{figure}

\subsection{Scalability Study}

In this experiment we evaluate the scalability of our methods, \enumgreedy, \psel and \bbsel, by varying $|\td|$ (from 400k to 4m) and $|\ur|$ (from 1k to 10k). Since the effectiveness of \gre is not satisfying (as evidenced in our effectiveness study), we do not compare the efficiency of \gre. The results are shown in Figure~\ref{fig:scale_tnum} and Figure~\ref{fig:exp-bil-eff-NYC}. We can see that \bbsel scales very well and outperforms \psel by 4-6 times. This is because even if the number of billboards is large, \bbsel does not need to compute all local solutions for each cluster with different budgets, while it still can prune a large number of insignificant computations. Since \enumgreedy takes more than 10,000 seconds when the billboard number $|U|$ is larger than 2k, its result is omitted in the Figure for readability reason. It also shows that \enumgreedy has serious issues in efficiency making it impractical in real-world scenarios, while \psel and \bbsel scale well and can meet the efficiency requirement.

\vspace{.2em}
\noindent {\bf Summary.} (1) Our methods \enumgreedy, \psel and \bbsel achieve much higher influence value than existing techniques (\gre, \topk, and \ann).  (2)  \psel and \bbsel achieve similar influence with \enumgreedy, but \enumgreedy is too slow and not acceptable in practice while \bbsel and \psel are much faster than \enumgreedy and can meet the efficiency requirement on large datasets.

\subsection{Complementary study} \label{exp:comley}
As reported in Section~\ref{sec:exp}, $\enumgreedy$ could not terminate in a reasonable time for most experiments' default settings due to its dramatically high computation cost $O(|\td| \cdot {|\ur|}^5)$, we generate a small subset of the NYC dataset to ensure that it can complete in reasonable time, and compare its performance with other approaches proposed in this paper. In particular, we have the default setting of $|\ur|$=1000 and $|\td|$=120k.

Figure~\ref{fig:exp-s-L-ifl-NYC} and Figure~\ref{fig:exp-s-L-ifl-LA} show the effectiveness of all algorithms when varying the budget $\budget$ and the number of trajectories respectively. From  Figure~\ref{fig:exp-s-L-ifl-NYC} we make two observations:
(1) When $\budget$ is small, the influence of \enumgreedy is better than that of \psel and \bbsel. It is because when only a small number of billboards can be afforded, the enumerations can easily find the optimal set since the possible world of feasible sets is small, whereas \psel and \bbsel are mainly obstructed by reduplicating the influence overlaps between clusters. (2) With the growth of $L$, the advantage of \enumgreedy gradually drops, while \psel and \bbsel achieve better influence; and when the budget reaches 200k, they have almost the same influence as \enumgreedy. Similar observations are made in Figure~\ref{fig:exp-s-L-ifl-LA}. 

The efficiency results are presented in Figure~\ref{fig:exp-s-L-eff-NYC} and Figure~\ref{fig:exp-s-L-eff-LA} w.r.t. a varying budget and trajectory number. The efficiency of the \topk baseline is not recorded because it is too trivial to get any approximately optimal solution. We find: (1) \bbsel and \psel consistently beat \enumgreedy by almost two and one order of magnitude respectively. (2) \enumgreedy has the worst performance among all algorithms.

\begin{figure*}[!tb]
	\centering
	\subfloat[Influence (Varying $\budget$)]{\includegraphics[clip,width=0.245\textwidth]{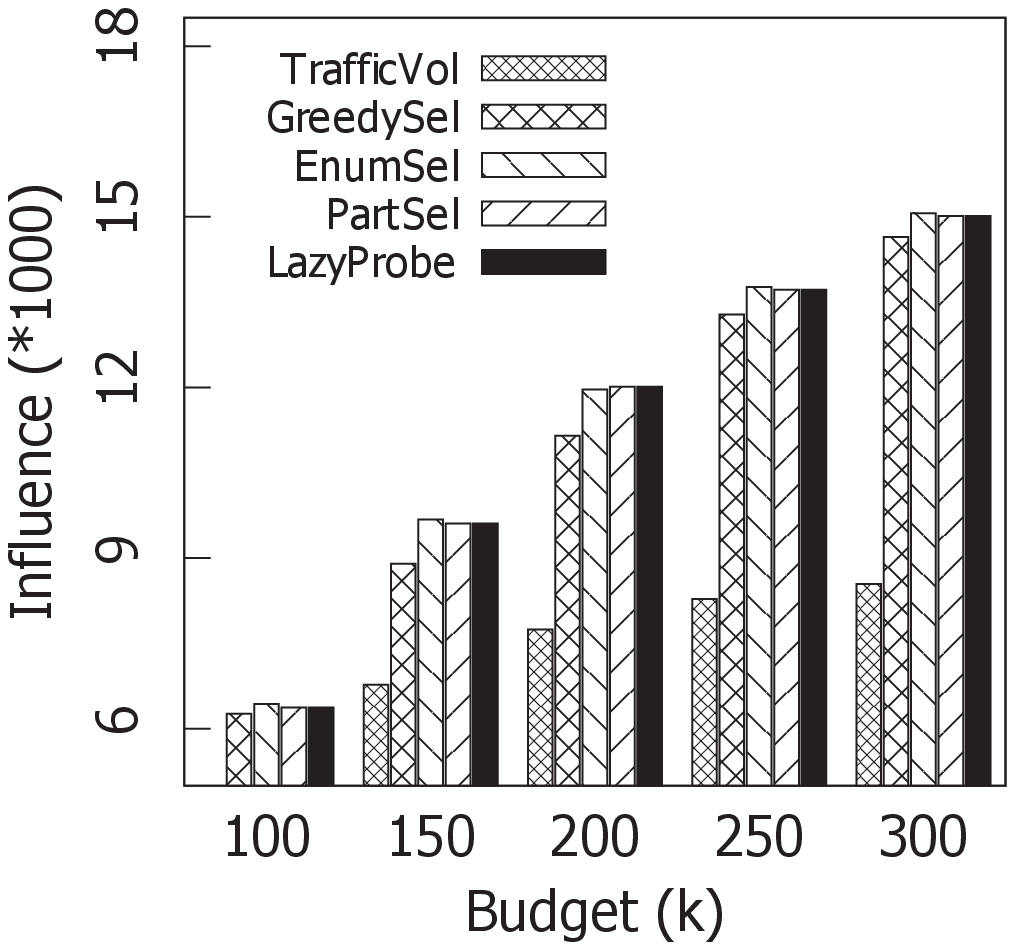}\label{fig:exp-s-L-ifl-NYC}}
	\subfloat[Efficiency (Varying $\budget$)]{\includegraphics[clip,width=0.254\textwidth]{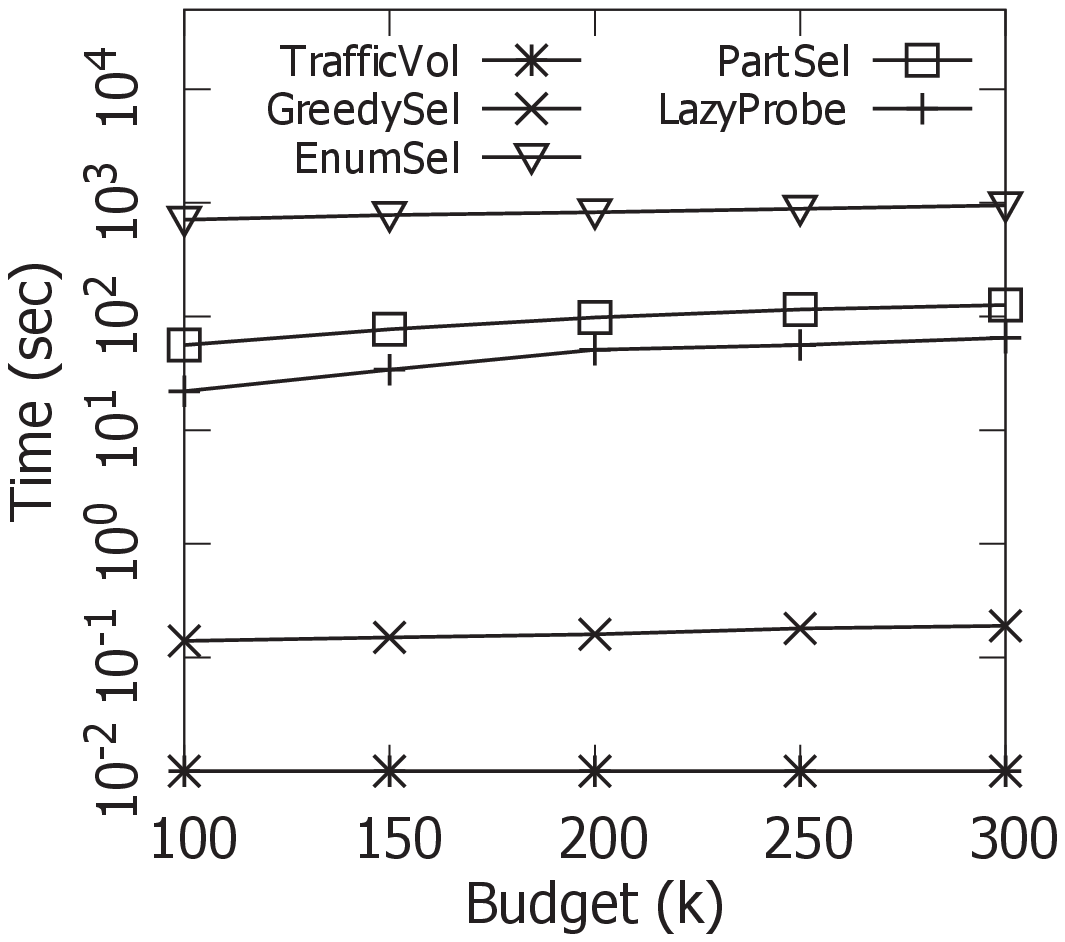}\label{fig:exp-s-L-eff-NYC}}
	\subfloat[Influence (Varying $|\td|$)]{\includegraphics[clip,width=0.245\textwidth]{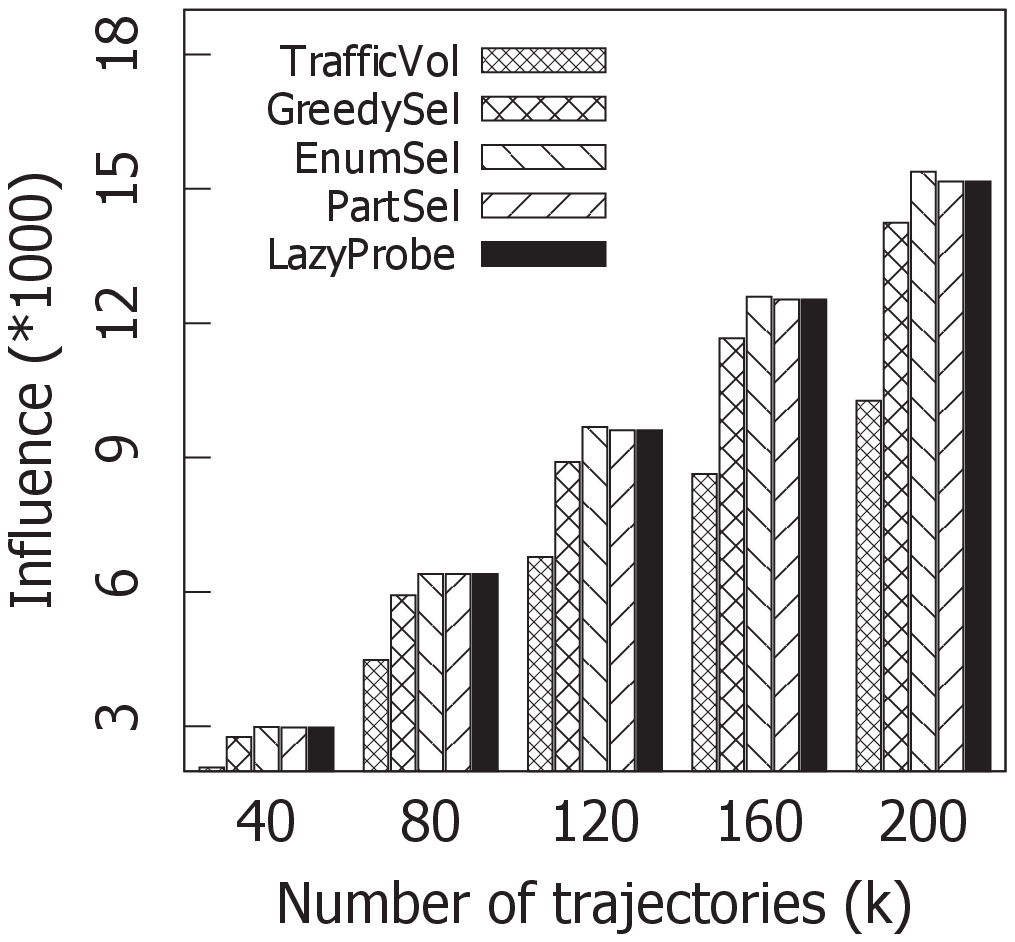}\label{fig:exp-s-L-ifl-LA}}
	\subfloat[Efficiency (Varying $|\td|$)]{\includegraphics[clip,width=0.254\textwidth]{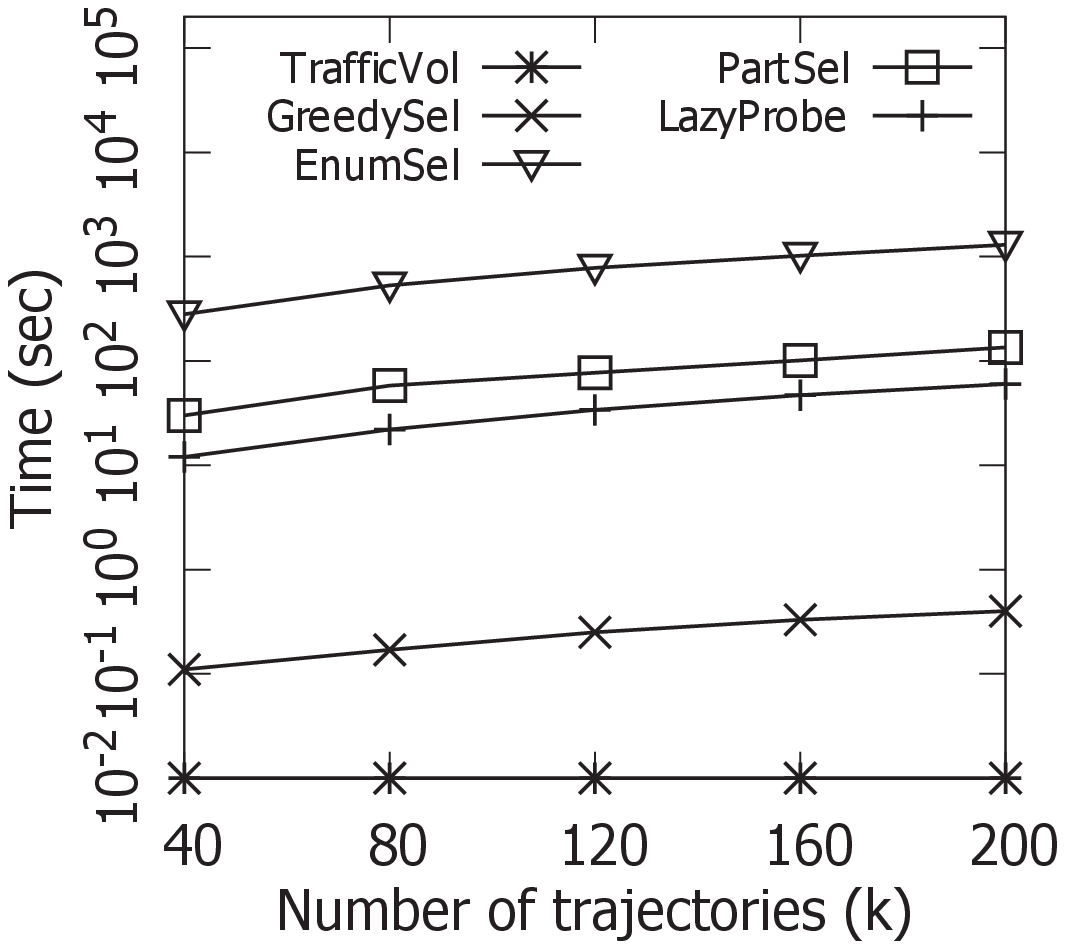}\label{fig:exp-s-L-eff-LA}}
	\vspace{-9pt}
	\caption{Testing \enumgreedy on a small NYC dataset ($|\ur|$=1000, $|\td|$=120k)}
	\label{fig:nyc_small_dataset}
\end{figure*}

\begin{figure*}[!tbh]
	\centering
	\subfloat[Influence (Alternative 1)]{\includegraphics[clip,width=0.245\textwidth]{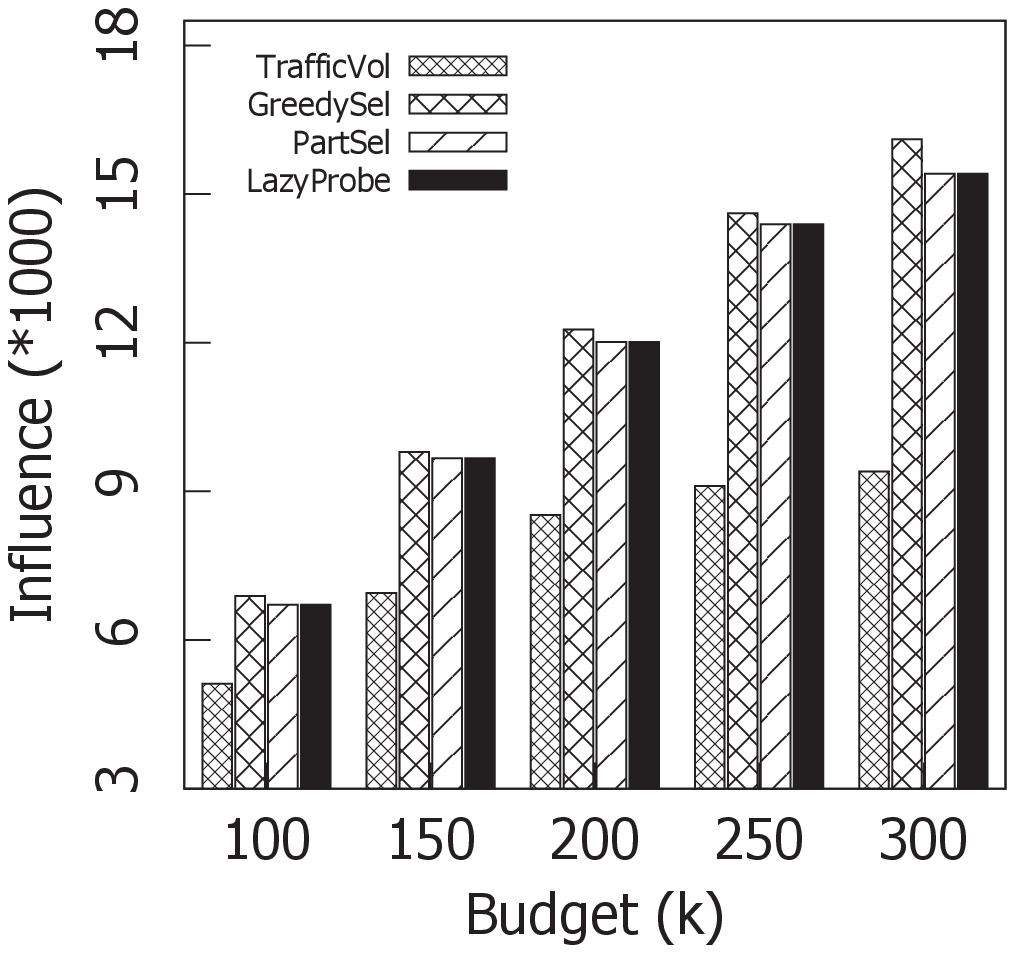}\label{fig:exp-theta2-ifl-LA}}
	\subfloat[Efficiency (Alternative 1)]{\includegraphics[clip,width=0.254\textwidth]{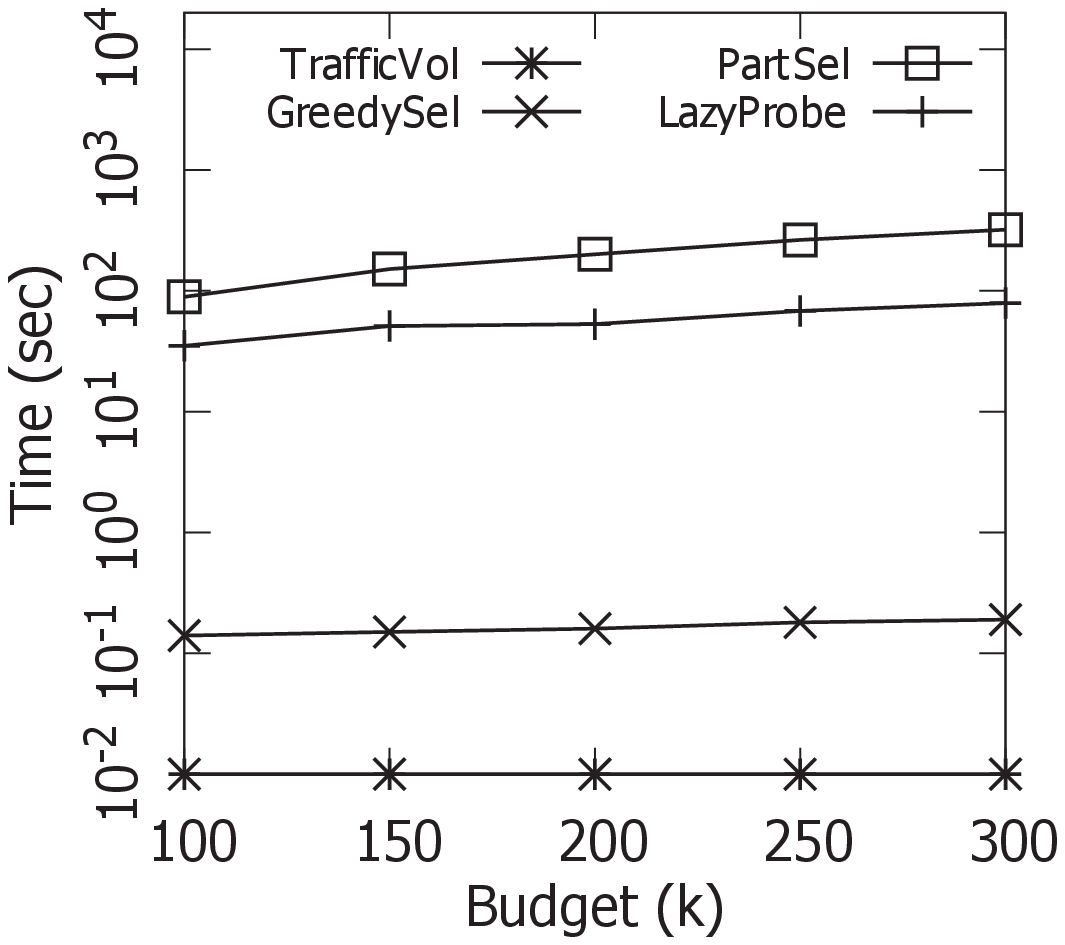}\label{fig:exp-theta2-eff-LA}}
	\subfloat[Influence (Alternative 2)]{\includegraphics[clip,width=0.245\textwidth]{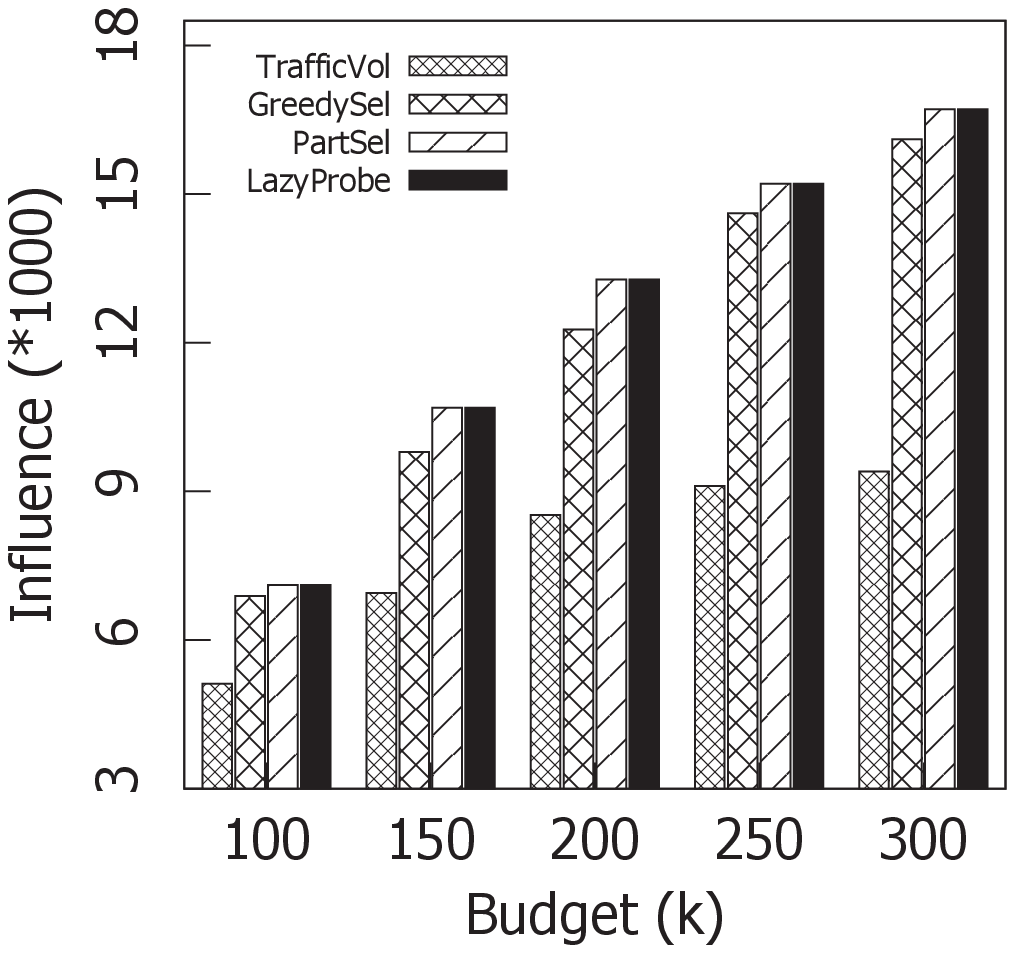}\label{fig:exp-theta1-ifl-NYC}}
	\subfloat[Efficiency (Alternative 2)]{\includegraphics[clip,width=0.254\textwidth]{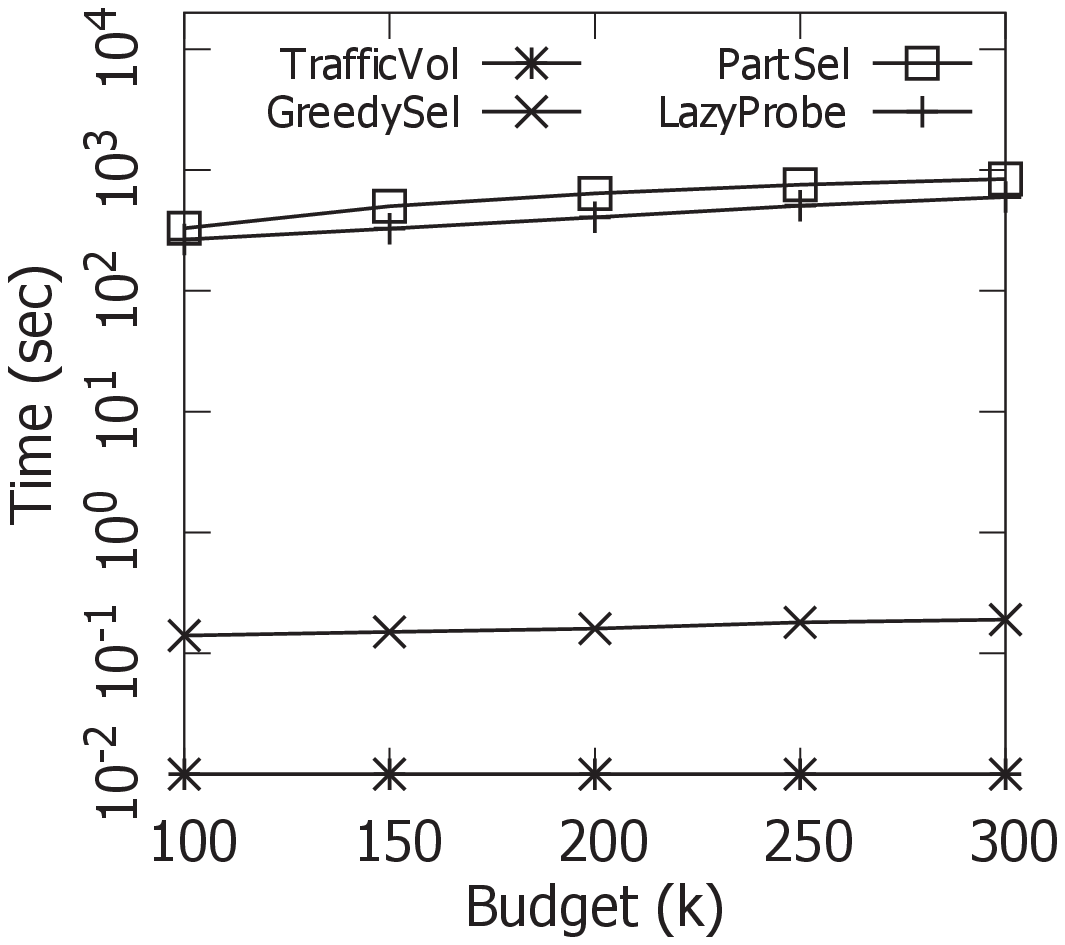}\label{fig:exp-theta1-eff-NYC}}
	\vspace{-9pt}
	\caption{The impact of two alternative of the overlap ratio $\ratio_{\svar{ij}}$ ($|\ur|$=1000, $|\td|$=120k)}
	\label{fig:theta_choice}
\end{figure*}

\subsubsection{Test on alternative choices of overlap ratio $\ratio_{\svar{ij}}$}\label{sec:2nd_probability}
Here we study how other two alternatives of the overlap ratio described in Section~\ref{sec:partition} affects the effectiveness and efficiency of all algorithms, and the result ($\theta=0.1$) w.r.t. the varying budget is presented in Figure~\ref{fig:theta_choice}. 

By comparing Alternative 1 with our choice, we find: (1) from Figure~\ref{fig:exp-theta2-ifl-LA} vs. Figure~\ref{fig:exp-L-ifl-NYC}, the \gre baseline consistently beats \psel and \bbsel under alternative 1, while it is the other way around under our choice. The reason is that this choice only restricts the overlap between clusters rather than billboards. Consequently, some billboards in different cluster should still have a relative high overlap. (2) from Figure~\ref{fig:exp-theta2-eff-LA} vs. Figure~\ref{fig:exp-L-eff-NYC}, the efficiency of all algorithms are almost the same for both choices. 

By comparing Alternative 2 with our choice, we make two observations. (1) From Figure~\ref{fig:exp-theta1-ifl-NYC} vs. Figure~\ref{fig:exp-L-ifl-NYC}, \psel and \bbsel consistently beat \gre under both cases. (2) From Figure~\ref{fig:exp-theta1-eff-NYC} vs. Figure~\ref{fig:exp-L-eff-NYC}, the efficiency of all algorithms under our choice is faster than those under alternative 2 by almost one order of magnitude. We interpret the results as, the partition condition of alternative 2 is too strict that $\ur$ cannot be divided into a set of small yet balanced clusters, thus a larger $|C_m|$ is incurred to increase the runtime.

\subsubsection{Experiment on an alternative choice of influence probability}

Recall Section~\ref{sec:pf} that the influence of a billboard $\bb_i$ to a trajectory $\tr_j$, $\pr(\bb_i, \tr_j)$, is defined. Here we conduct more experiments to test the impact of an alternative choice for the influence probability measurement as described in Section~\ref{sec:exp}. The alternative choice is: $\pr(\bb_i, \tr_j)=\bb_i.panelsize/(2*maxPanelSize)$ where $maxPanelSize$ is the size of the largest billboard in $\ur$, and we further normalize by 2 to avoid a too large probability, say 1.

The influence result of all algorithms on the NYC dataset is shown in Figure~\ref{fig:vary_p}. Recall our corresponding experiment of adopting choice 1 in Figure~\ref{fig:exp-L-ifl-NYC} and Figure~\ref{fig:exp-Tr-ifl-NYC}, we have the same observations: \psel and \bbsel outperform all the rest algorithms in influence. 
To summarize, our solutions are orthogonal to the choice of these metrics.


\section{Conclusion}

We studied the problem of trajectory-driven influential billboard placement: given a set of billboards $\ur$, a database of trajectories $\td$ and a budget $\budget$, the goal is to find a set of billboards within $\budget$ so that the placed ads can influence the largest number of trajectories. We showed that the problem is NP-hard, and first proposed a greedy method with enumeration technique. Then we exploited the locality property of the billboard influence and proposed a partition-based framework \psel to reduce the computation cost. 
Furthermore, we proposed a lazy probe method \bbsel to further prune billboards with low benefit/cost ratio, which significantly reduces the practical cost of \psel while achieving the same approximation ratio as \psel. Lastly we conducted experiments on real datasets to verify the efficiency, effectiveness and scalability of our method.

\begin{acks}
	Zhiyong Peng was supported by the Ministry of Science and Technology of China (2016YFB1000700), and National Key Research \& Development Program of China (No. 2018YFB1003400). Zhifeng Bao was supported by ARC (DP170102726, DP180102050), NSFC (61728204, 91646204), and was a recipient of Google Faculty Award. Guoliang Li was supported by the 973 Program of China (2015CB358700), NSFC (61632016, 61472198, 61521002, 61661166012) and TAL education. 
\end{acks}

\bibliographystyle{abbrvnat}
\bibliography{draft}



\end{document}